\newcommand{\ud}{\mathrm{d}}
\newcommand{\ue}{\mathrm{e}}
\newcommand{\cD}{{\mathcal D}}
\newcommand{\cE}{{\mathcal E}}
\newcommand{\cH}{{\mathcal H}}
\newcommand{\cP}{{\mathcal P}} 
\newcommand{\cV}{{\mathcal V}}
\newcommand{\M}{\mathrm{M}}
\newcommand{\GL}{\mathrm{GL}}
\newcommand{\rz}{{\mathbb R}}
\newcommand{\nz}{{\mathbb N}}
\newcommand{\kz}{{\mathbb C}}
\DeclareMathOperator{\ran}{ran}
\DeclareMathOperator{\rank}{rk}
\newcommand{\eins}{\mathds{1}}
\numberwithin{equation}{section}
\newtheorem{theorem}{Theorem}[section]
\newtheorem{lemma}[theorem]{Lemma}
\newtheorem{prop}[theorem]{Proposition}
\theoremstyle{definition}
\newtheorem{defn}[theorem]{Definition}
\newtheorem{rem}[theorem]{Remark}
\numberwithin{equation}{section}
\begin{document}

\thispagestyle{empty}

\vspace*{0.5cm}

\begin{center}

{\LARGE\bf Quantum graphs with singular\\[5mm]
two-particle interactions} \\

\vspace*{3cm}

{\large Jens Bolte}%
\footnote{E-mail address: {\tt jens.bolte@rhul.ac.uk}}
{\large and Joachim Kerner}%
\footnote{E-mail address: {\tt joachim.kerner.2010@live.rhul.ac.uk}}
\vspace*{2cm}

Department of Mathematics\\
Royal Holloway, University of London\\
Egham, TW20 0EX\\
United Kingdom\\

\end{center}

\vfill

\begin{abstract}
We construct quantum models of two particles on a compact metric graph with 
singular two-particle interactions. The Hamiltonians are self-adjoint 
realisations of Laplacians acting on functions defined on pairs of edges in 
such a way that the interaction is provided by boundary conditions. In order
to find such Hamiltonians closed and semi-bounded quadratic forms are
constructed, from which the associated self-adjoint operators are extracted.
We provide a general characterisation of such operators and, furthermore,
produce certain classes of examples. We then consider identical particles
and project to the bosonic and fermionic subspaces. Finally, we show that the 
operators possess purely discrete spectra and that the eigenvalues are 
distributed following an appropriate Weyl asymptotic law.   
\end{abstract}

\newpage

\section{Introduction}
Initially, quantum graphs were introduced by Ruedenberg and Scherr 
\cite{RueSch53} as simplified models to describe the electronic structure of 
aromatic hydrocarbons, where the graph connectivity follows from the 
bond structure of the atoms in the molecule. Since then numerous variants of 
quantum graphs have been developed and have found applications in many areas 
of physics and mathematics (see \cite{Exnetal08} for a review). 

Quantum systems on graphs have proven to be attractive and versatile models 
as they combine the simplicity of one (or zero) spatial dimension with the 
complexity induced by the connectivity of the underlying graph. In the 
following we shall focus on models in which one-dimensional quantum systems 
associated with the edges of a (metric) graph are coupled in the vertices 
of the graph. Alternative models utilise finite-dimensional quantum systems 
associated with the (zero-dimensional) vertices of a graph. 

In mathematical terms, the stationary Schr\"odinger equation of a quantum graph
is a system of coupled, ordinary differential equations, describing the quantum
motion on the edges of the graph that are connected in the vertices. Quite 
surprisingly, such coupled systems of equations can be used to interpolate 
between single, ordinary differential equations describing pure, 
one-dimensional quantum systems on the one hand, and partial differential 
equations as they arise in multi-dimensional quantum systems on the other hand.
Indeed, Kottos and Smilansky observed that the eigenvalue correlations in 
quantum graphs generically are the same as in quantum systems with chaotic 
classical limit \cite{KotSmi99}. The latter require at least two spatial 
dimensions and thus are modelled by partial differential equations. Following 
the random matrix conjecture of quantum chaos \cite{BohGiaSch84}, these 
correlations can be described by eigenvalue correlations of suitable random 
matrices. Many further studies along these lines have followed since (see 
\cite{GnuSmi07} for a review).

So far quantum graphs mostly have been used to model one-particle systems.
Harmer, however, introduced $\delta$-type two-particle interactions on the
edges \cite{Har07a,Har08}, and Harrison et al.\ studied the particle exchange
symmetry in many-particle versions of finite-dimensional quantum graph
models \cite{HarKeaRob11}. In this paper we shall study two-particle 
quantum systems on graphs. Our focus will be the introduction of genuine 
two-particle interactions along the lines of the one-particle interactions 
(with the outside world) present in the existing quantum graph models. The 
latter represent singular interactions (`potentials') that are strictly 
localised in the vertices of the graph so that the motion along edges is free. 
The associated quantum Hamiltonians are constructed as self-adjoint extensions 
of a symmetric realisation of the Laplacian acting on functions on the edges. 
These extensions introduce coupling conditions in the vertices and hence 
provide localised interactions. Following the tensor-product construction of 
many-particle quantum models, two-particle states on graphs are composed of 
functions on pairs of edges, one for each particle, and the free Hamiltonian 
is a Laplacian acting on functions defined on rectangles. The natural, 
symmetric realisation of such a Laplacian, however, has infinite deficiency 
indices so that a construction of self-adjoint extensions is less straight 
forward than in the one-particle case. Therefore, we first define suitable, 
closed and semi-bounded quadratic forms and then extract from such a form the 
unique, semi-bounded and self-adjoint operator that is associated with it. 
This operator is, of course, again a Laplacian acting on functions on 
rectangles, but with boundary conditions imposed along the edges of the 
rectangles. In a general setting these boundary conditions emerge in a weak 
form, but under specific circumstances related to elliptic regularity we are 
able to provide explicit versions of these conditions that are closely related 
to the respective conditions in a one-particle quantum graph. The resulting
two-particle interactions are singular and, in contrast to Harmer's
construction \cite{Har07a,Har08}, localised in vertices rather than on edges.

Our main results are, first, the construction of self-adjoint operators that 
describe singular two-particle interactions on graphs and, second, proving 
that these operators possess purely discrete spectra and that their eigenvalue 
count follows a suitable Weyl asymptotic law. 

This paper is organised as follows: In Section~\ref{sec1} we briefly recall 
the construction of one-particle quantum graphs as well as some of their 
properties that are relevant for our purposes. Then we describe the 
tensor-product construction of systems of two distinguishable as well as of 
two identical particles as applied to quantum graphs. In Section~\ref{sec2} 
we first perform the construction of self-adjoint operators for two 
distinguishable particles on an interval via a suitable quadratic form. We 
then extend these constructions to arbitrary compact metric graphs.
We also characterise those among the self-adjoint operators that represent 
non-trivial two-particle interactions. The necessary adaptions to systems
of two identical particles (bosons or fermions) on general compact graphs
are performed in Section~\ref{sec3}. In Section~\ref{sec4} we prove that
the two-particle Hamiltonians constructed before have compact resolvent and,
therefore, possess purely discrete spectra; their eigenvalue count follows a 
Weyl-type asymptotic law. We defer the proof our main regularity result in 
Section~\ref{sec2} to an appendix.
%
%
%
%
\section{Preliminaries}
\label{sec1}
Quantum systems with many particles are obtained from tensor
product constructions based on the underlying one-particle systems.
For that reason we briefly summarise one-particle quantum graphs,
and then explain the basic steps to construct quantum systems of two
(identical) particles on a graph. For more details on one-particle
quantum graphs see
\cite{KotSmi99,KosSch99,Kuc04,GnuSmi07,Exnetal08,BolEnd09}.
\subsection{One-particle quantum graphs}
The classical configuration space of a quantum graph is a compact
metric graph, i.e., a finite graph $\Gamma = (\cV,\cE)$ with vertices
$\cV = \{v_1,\dots,v_V\}$ and edges $\{e_1,\dots,e_E\}$. The latter
are identified with intervals $[0,l_e]$, $e=1,\dots,E$, thus introducing
a metric on the graph. At this point we do not exclude multiple edges
and loops.

Functions on the graph are collections of functions on the edges,
i.e.,
\begin{equation}
 F=(f_1,\dots,f_E) \ ,\quad\text{with}\quad f_e : [0,l_e]\to\kz\ ,
\end{equation}
so that spaces of functions on $\Gamma$ are (finite) direct sums of
the respective spaces of functions on the edges. The most relevant
space is the one-particle Hilbert space
\begin{equation}
 \cH_1 = L^2 (\Gamma) := \bigoplus_{e=1}^E L^2 (0,l_e) \ ,
\end{equation}
and all other spaces are constructed in a similar way.

One-particle observables are self-adjoint operators on $\cH_1$,
among which the Hamiltonian plays a prominent role. In the absence
of external forces or gauge fields the Hamiltonian should be a suitable
version of a Laplacian. As a differential operator the (positive) Laplacian
acts according to
\begin{equation}
 -\Delta_1 F= (-f_1'',\dots,-f_E'')
\end{equation}
on $F\in C^\infty (\Gamma)$. We here use the index to indicate that this
is a one-particle Laplacian.

Viewed as an operator in $L^2 (\Gamma)$ with domain
$C^\infty_0 (\Gamma)$, this Laplacian is symmetric,
but not self-adjoint. One can construct and classify all self-adjoint
extensions of this operator using von Neumann's theory. In the context
of quantum graphs, however, an alternative parametrisation of self-adjoint
extensions  in terms of linear relations among the boundary values
\begin{equation}
\label{Fbv}
 F_{bv} := \bigl( f_1(0),\dots,f_E (0),f_1(l_1),\dots,f_E (l_E) \bigr)^T
              \in\kz^{2E}\ ,
\end{equation}
of functions and (inward) derivatives,
\begin{equation}\label{Fbv2}
 F_{bv}' := \bigl( f_1'(0),\dots,f_E' (0),-f_1'(l_1),\dots,-f_E' (l_E) \bigr)^T
               \in\kz^{2E}\ ,
\end{equation}
has proven useful. Kostrykin and Schrader \cite{KosSch99} proved
the following.
\begin{theorem}[Kostrykin, Schrader]
\label{KosSchThm}
Any self-adjoint realisation of the Laplacian on a compact, metric graph
has a domain of the form
\begin{equation}
\label{1partBC}
 \cD_1 (A,B) = \{ F\in H^2 (\Gamma);\  AF_{bv} +BF_{bv}' =0 \}  \ ,
\end{equation}
where $A,B\in\M(2E,\kz)$ are such that $\rank (A,B)=2E$ and
$A B^\ast$ is self-adjoint.

Moreover, two such realisations, with domains $\cD(A,B)$ and
$\cD(A',B')$, are equivalent, iff there exists $C\in\GL(2E,\kz)$ such
that $A'=CA$ and $B'=CB$.
\end{theorem}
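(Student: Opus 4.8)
The plan is to realise the self-adjoint extensions through the boundary form attached to Green's identity and then to identify them with Lagrangian subspaces of the boundary-value space $\kz^{4E}$. First I would fix the minimal operator, namely $-\Delta_1$ restricted to $C^\infty_0(\Gamma)$; its closure has domain $H^2_0(\Gamma)$ and its adjoint is the maximal Laplacian $-\Delta_{\max}$ with domain $H^2(\Gamma)$. Since on each interval $[0,l_e]$ the free Laplacian has deficiency indices $(2,2)$, the minimal operator on $\Gamma$ has deficiency indices $(2E,2E)$, so by von Neumann's theory every self-adjoint extension is a restriction of $-\Delta_{\max}$ to a domain squeezed between $H^2_0(\Gamma)$ and $H^2(\Gamma)$. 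Integrating by parts edge by edge then yields, for all $F,G\in H^2(\Gamma)$,
\begin{equation}
 \langle -\Delta_{\max} F, G\rangle - \langle F, -\Delta_{\max} G\rangle
   = \langle F_{bv}', G_{bv}\rangle_{\kz^{2E}} - \langle F_{bv}, G_{bv}'\rangle_{\kz^{2E}} \ ,
\end{equation}
which exhibits $(\kz^{2E},\,F\mapsto F_{bv},\,F\mapsto F_{bv}')$ as a boundary triple. The essential analytic input here is that the combined boundary map $F\mapsto(F_{bv},F_{bv}')$ is surjective onto $\kz^{4E}$, which follows by choosing on each edge an $H^2$ function with prescribed values and derivatives at both endpoints.

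The core of the argument is then symplectic linear algebra. I would equip $\kz^{4E}=\kz^{2E}\oplus\kz^{2E}$ with the Hermitian symplectic form $\omega\bigl((x_1,y_1),(x_2,y_2)\bigr)=\langle x_1,y_2\rangle-\langle y_1,x_2\rangle$ read off from the right-hand side above. A restriction of $-\Delta_{\max}$ to $\{F : (F_{bv},F_{bv}')\in\cM\}$ is symmetric exactly when $\cM$ is $\omega$-isotropic, and self-adjoint exactly when $\cM$ is Lagrangian, i.e.\ maximal isotropic of dimension $2E$. Writing a candidate as $\cM=\{(x,y):Ax+By=0\}$, the hypothesis $\rank(A,B)=2E$ is precisely what forces $\dim\cM=2E$, and a short computation with a $4E\times 2E$ matrix spanning $\cM$ shows that isotropy of $\cM$ is equivalent to $AB^\ast=BA^\ast$, that is, to $AB^\ast$ being self-adjoint. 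Surjectivity of the boundary map guarantees that every Lagrangian $\cM$ is realised by some admissible pair $(A,B)$, so these domains exhaust all self-adjoint realisations.

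For the equivalence statement, two pairs define the same operator iff they define the same domain, hence the same subspace $\cM=\{(x,y):Ax+By=0\}=\{(x,y):A'x+B'y=0\}$. Two matrices $(A,B)$ and $(A',B')$ of full row rank $2E$ have equal kernels iff they are related by left multiplication by an invertible matrix, giving $A'=CA$ and $B'=CB$ with $C\in\GL(2E,\kz)$, which is exactly the claimed criterion.

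The main obstacle I anticipate is the precise dictionary in the second step: verifying that maximal isotropy of $\cM$ is equivalent to the pair of algebraic conditions on $(A,B)$ — in particular translating the coordinate-free isotropy relation into $AB^\ast=BA^\ast$ and checking that no Lagrangian subspace is missed — requires carefully combining the rank hypothesis with the surjectivity of the boundary map. Everything else reduces to the standard integration by parts leading to Green's identity and to von Neumann's extension theory.
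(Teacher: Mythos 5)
Your proposal is correct, but note that the paper itself offers no proof of this statement: it is quoted as a known result and attributed to Kostrykin and Schrader \cite{KosSch99}, so there is no internal argument to compare against. Your route --- Green's identity on each edge, surjectivity of the combined boundary map $F\mapsto(F_{bv},F_{bv}')$ onto $\kz^{4E}$, the dictionary between self-adjoint restrictions of the maximal operator and Lagrangian subspaces of the Hermitian symplectic form, and the linear algebra showing that $\ker(A,B)$ with $\rank(A,B)=2E$ is Lagrangian iff $AB^\ast=BA^\ast$ --- is essentially the argument of the cited source, and the kernel-equality argument for the equivalence clause is also sound. The only points worth tightening in a written version are (i) the verification that every self-adjoint extension's domain is exactly the preimage of a subspace $\cM\subset\kz^{4E}$ under the boundary map (this uses that the domain contains $H^2_0(\Gamma)=\ker$ of the boundary map), and (ii) the observation that a $2E$-dimensional isotropic subspace is automatically maximal, which is what upgrades isotropy plus the rank condition to self-adjointness rather than mere symmetry.
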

An alternative characterisation of the domain \eqref{1partBC} employs the
orthogonal projectors $P$ onto $\ker B\subset\kz^{2E}$ and $Q=\eins_{2E} -P$,
as well as the self-adjoint endomorphism $L=(B|_{\ran B^\ast})^{-1}AQ$ of 
$\ran Q\subset\kz^{2E}$. Kuchment showed \cite{Kuc04} that the domain 
$\cD_1 (A,B)$ is the same as
\begin{equation}
\label{1partBCalt}
 \cD_1 (P,L) = \{ F\in H^2 (\Gamma);\  PF_{bv}=0\ \text{and}\
                  QF_{bv}'+LQF_{bv}=0 \}  \ .
\end{equation}
This way self-adjoint realisations of the Laplacian are uniquely characterised
in terms of the projector $P$ and the self-adjoint map $L$. From now on
we shall adhere to this parametrisation of domains.

Yet another way of specifying self-adjoint Laplacians is in terms of their
associated quadratic forms \cite{Kuc04}.
\begin{theorem}[Kuchment]
\label{KuchmentThm}
The quadratic form associated with a Laplacian $-\Delta_1$ defined on
the domain $\mathcal{D}_1(P,L)$ is
\begin{equation}
\label{Qform1}
\begin{split}
 Q^{(1)}_{P,L}[F]
  &= \int_\Gamma |\nabla f| \ \ud x -
        \langle F_{bv},LF_{bv}\rangle_{\kz^{2E}}   \\
  &= \sum_{e=1}^E \int_0^{l_e} |f'_e (x)|^2 \ dx -
          \langle F_{bv},LF_{bv}\rangle_{\kz^{2E}}\ ,
\end{split}
\end{equation}
with form domain
\begin{equation}
\label{Qformdomain}
 \cD_{Q^{(1)}} = \{F\in H^1(\Gamma);\ PF_{bv}=0\}\ .
\end{equation}
\end{theorem}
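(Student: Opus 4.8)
The plan is to use the defining property of the form associated with a semi-bounded self-adjoint operator: on the operator domain $\cD_1(P,L)$ one has $Q^{(1)}_{P,L}[F]=\langle F,-\Delta_1 F\rangle_{\cH_1}$, and the claim is that this extends to the closed form \eqref{Qform1} on the domain \eqref{Qformdomain}. I would first verify the explicit expression on $\cD_1(P,L)\subset H^2(\Gamma)$. Integrating by parts on each edge gives
\[
 \langle F,-\Delta_1 F\rangle_{\cH_1}
   =\sum_{e=1}^E\int_0^{l_e}|f_e'(x)|^2\,\ud x
     +\langle F_{bv},F_{bv}'\rangle_{\kz^{2E}},
\]
the boundary terms combining into $\langle F_{bv},F_{bv}'\rangle$ precisely because of the inward-derivative convention in \eqref{Fbv2}. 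Inserting the boundary conditions of \eqref{1partBCalt} and using that $P,Q$ are complementary orthogonal projectors ($PQ=QP=0$, $P+Q=\eins_{2E}$), the condition $PF_{bv}=0$ gives $F_{bv}=QF_{bv}$ and annihilates all cross terms, so that $\langle F_{bv},F_{bv}'\rangle=\langle QF_{bv},QF_{bv}'\rangle$. Substituting $QF_{bv}'=-LQF_{bv}$ then produces the boundary term $-\langle F_{bv},LF_{bv}\rangle_{\kz^{2E}}$, which is exactly \eqref{Qform1}.

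Next I would show that the right-hand side of \eqref{Qform1}, taken on the domain \eqref{Qformdomain}, is a closed and semi-bounded form; the associated self-adjoint operator is then unique by the first representation theorem. The key analytic input is a one-dimensional trace inequality: for each edge and every $\epsilon>0$ there is $C_\epsilon>0$ with $|f_e(0)|^2+|f_e(l_e)|^2\le\epsilon\int_0^{l_e}|f_e'|^2\,\ud x+C_\epsilon\int_0^{l_e}|f_e|^2\,\ud x$. Summing over edges bounds $\|F_{bv}\|_{\kz^{2E}}^2\le\epsilon\sum_e\int_0^{l_e}|f_e'|^2\,\ud x+C_\epsilon\|F\|_{\cH_1}^2$, whence $|\langle F_{bv},LF_{bv}\rangle|\le\|L\|\,\|F_{bv}\|_{\kz^{2E}}^2$ is controlled by the gradient and $L^2$ terms. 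Choosing $\epsilon$ with $\|L\|\epsilon<1$ lets me absorb the boundary term into the gradient term, which at once yields a lower bound and shows that the form norm $Q^{(1)}_{P,L}[F]+c\|F\|_{\cH_1}^2$ is equivalent to the $H^1(\Gamma)$ norm on \eqref{Qformdomain}. Since the boundary map $F\mapsto F_{bv}$ is continuous on $H^1(\Gamma)$, the constraint $PF_{bv}=0$ defines a closed subspace, so \eqref{Qformdomain} is complete in the form norm and the form is closed.

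Finally I would identify the self-adjoint operator $T$ furnished by the representation theorem and check that it equals $-\Delta_1$ on $\cD_1(P,L)$. For $G\in\cD(T)$ and all $F\in\cD_{Q^{(1)}}$ one has $Q^{(1)}_{P,L}[F,G]=\langle F,TG\rangle_{\cH_1}$; testing with $F\in C_0^\infty(\Gamma)$ forces $TG=-G''$ edgewise and, by one-dimensional elliptic regularity, $G\in H^2(\Gamma)$. Integrating by parts back for general $F\in\cD_{Q^{(1)}}$ and comparing boundary terms recovers the natural condition $QG_{bv}'+LQG_{bv}=0$, while $PG_{bv}=0$ is inherited from the form domain; together these are the conditions in \eqref{1partBCalt}, so $\cD(T)=\cD_1(P,L)$. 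I expect this last identification to be the main obstacle, since it requires disentangling the $P$- and $Q$-components of the surviving boundary terms and invoking the self-adjointness of $L$ to ensure the natural condition comes out as in \eqref{1partBCalt} rather than as its adjoint; by contrast the trace estimate and the one-dimensional regularity are comparatively routine.
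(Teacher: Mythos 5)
Your proof is correct, but note that the paper itself contains no proof of this statement: it is quoted verbatim as a result of Kuchment \cite{Kuc04}, so there is no internal argument to compare against. What your write-up actually reproduces is, in the one-particle setting, exactly the strategy the paper later uses for its two-particle generalisations: your trace inequality and absorption argument is the one-particle version of the proof of Theorem~\ref{2quadform} (there the key input is the estimate \eqref{normeqI}, i.e.\ Lemma 8 of \cite{Kuc04}), and your identification of the associated operator --- testing against $C_0^\infty$ functions to get $TG=-G''$ edgewise, then integrating by parts against general form-domain elements and matching boundary terms --- is the one-particle version of the proofs of Theorem~\ref{2Laplace} and Lemma~\ref{dense1}. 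Two points you leave implicit are worth making explicit, though neither is a substantive gap: (i) to pass from the vanishing of $\langle F_{bv},G_{bv}'+LG_{bv}\rangle_{\kz^{2E}}$ for all $F\in\cD_{Q^{(1)}}$ to the condition $Q(G_{bv}'+LG_{bv})=0$, you need that every vector in $\ker P=\ran Q$ occurs as $F_{bv}$ for some $F\in\cD_{Q^{(1)}}$; in this one-particle case this is trivial (take $F$ edgewise affine with the prescribed boundary values), and it is precisely the finite-dimensional analogue of the density statement in Lemma~\ref{dense1}; (ii) converting $QG_{bv}'+QLG_{bv}=0$ into the form $QG_{bv}'+LQG_{bv}=0$ of \eqref{1partBCalt} uses $QL=LQ=L$, which holds because $L$ is an endomorphism of $\ran Q$ extended by zero on $\ran P$. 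Finally, your remark that the operator-identification step is the ``main obstacle'' is reasonable but somewhat overstated for this setting: the genuinely hard analogue of that step in two dimensions is elliptic regularity near the corners, which costs the paper its entire appendix, whereas in one dimension the regularity $G\in H^2(\Gamma)$ is, as you say, immediate.
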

In all of the above the boundary conditions imposed on functions in the
domains are such that, in principle, they relate all boundary values among
each other. Whenever the boundary conditions only relate boundary values at
edge ends that are connected in a given vertex, the boundary conditions are
said to be {\it local}. In that case the linear maps $A,B,P,Q,L$ are
block-diagonal with respect to the decomposition
\begin{equation}
\label{1localbc}
 \kz^{2E} = \bigoplus_{v\in\cV}\kz^{d_v}
\end{equation}
of the space of boundary values, where $d_v$ is the degree of the vertex $v$.

When boundary conditions are local one can view them as representing
local, singular interactions of the particle on the graph with an `external
potential' in the vertices. Non-local boundary conditions would model non-local,
singular forces acting on the particle and are, therefore, often discarded.
\subsection{Two-particle quantum graphs}
Placing two particles on a graph first requires to introduce a
two-particle Hilbert space. For two distinguishable particles this is the
tensor product of two one-particle Hilbert spaces,
\begin{equation}
 \cH_2 := \cH_1 \otimes \cH_1 \ .
\end{equation}
For a quantum graph this means that
\begin{equation}
 \cH_2 := \Bigl(\bigoplus_{e=1}^E L^2 (0,l_e)\Bigr) \otimes
 \Bigl(\bigoplus_{e=1}^E L^2 (0,l_e)\Bigr) \ ,
\end{equation}
such that vectors $\Psi\in\cH_2$ are collections $\Psi = (\psi_{e_1 e_2})$
of $E^2$ functions
\begin{equation}
 \psi_{e_1 e_2} \in L^2 (0,l_{e_1}) \otimes L^2 (0,l_{e_2}) \ .
\end{equation}
Two-particle observables are self-adjoint operators in $\cH_2$.
By the above, these are given in components as,
\begin{equation}
 (O\Psi)_{e_1 e_2} = \sum_{f_1 ,f_2 =1}^E
 O_{e_1 e_2,f_1 f_2}\psi_{f_1 f_2} \ .
\end{equation}
A particular set of two-particle observables are those that are
given as lifts of one-particle observables. If, for simplicity we restrict
our attention to bounded operators, a one-particle observable $O_1$
can be lifted to a two-particle observable as
\begin{equation}
\label{factorisedobs}
 O_2 := O_1 \otimes\eins_{\cH_1} + \eins_{\cH_1} \otimes O_1\ .
\end{equation}
Unbounded operators allow for an equivalent construction (see \cite{ReeSim72}).
Any observable of this kind does not represent interactions, or correlations,
between the particles.

On a formal level, the one-particle Laplacian has an obvious
lift to a two-particle operator $-\Delta_2$; its operator-matrix entries
read
\begin{equation}
\label{NLaplace}
 -\Delta_{2,e_1 e_2} = -\frac{\partial^2}{\partial x_{e_1}^2}
 -\frac{\partial^2}{\partial x_{e_2}^2}\ ,
\end{equation}
and hence have the same form as a Laplacian in $\rz^2$. Defined on
the domain $C^\infty_0 (\Gamma)\otimes C^\infty_0 (\Gamma)$,
this operator is symmetric, but not self-adjoint. Again, as in the one-particle
case, the self-adjoint extensions of this operator are observables, and
hence candidates for a two-particle Hamiltonian.

Below we shall see that among the self-adjoint realisations of the
two-particle Laplacian we can identify classes of operators that, indeed,
are lifts of one-particle Laplacians, and others that are not. The latter
represent genuine two-particle interactions and their identification and
characterisation is the principal goal of this paper.

We also want to consider identical particles. This means that a particle 
exchange is a symmetry of the quantum system and hence the symmetric group 
$S_2$ has to be represented unitarily on the two-particle Hilbert space. 
Following the symmetrisation postulate for a system of $N$ identical particles,
the two physically relevant irreducible representations of $S_N$ are the 
totally symmetric and the totally anti-symmetric representation; according to 
the spin-statistic theorem, these representations correspond to bosons and 
fermions, respectively. When $N=2$ these are the only unitary irreducible 
representations anyway. The bosonic, i.e., the totally symmetric representation
is defined on the bosonic two-particle Hilbert space $\cH_{2,B}$, which is the 
symmetric tensor product of two one-particle spaces. Hence, 
$\Psi =(\psi_{e_1 e_2})\in\cH_{2,B}$, iff
\begin{equation}
 \psi_{e_1 e_2}(x_{e_1},x_{e_2}) =
 \psi_{e_2 e_1}(x_{e_2},x_{e_{1}})\ .
\end{equation}
The projection $\Pi_s :\cH_2\to\cH_{2,B}$ then reads
\begin{equation*}
 (\Pi_s\Psi)_{e_1 e_2} = \frac{1}{2}\bigl( \psi_{e_1 e_2} + \psi_{e_2 e_1}
 \bigr)\ .
\end{equation*}
Similarly, the fermionic, totally antisymmetric representation is defined
on the fermionic two-particle Hilbert space $\cH_{2,F}$, which is the 
anti-symmetric tensor product of $\cH_1$ with itself. Accordingly,
$\Psi =(\psi_{e_1 e_2})\in\cH_{2,F}$, iff
\begin{equation}
 \psi_{e_1 e_2}(x_{e_1},x_{e_2}) =
 - \psi_{e_2 e_1}(x_{e_2},x_{e_{1}})\ .
\end{equation}
The projection $\Pi_a :\cH_2\to\cH_{2,F}$ then reads
\begin{equation*}
 (\Pi_a\Psi)_{e_1 e_2} = \frac{1}{2}\bigl( \psi_{e_1 e_2} - \psi_{e_2 e_1}
 \bigr)\ .
\end{equation*}
%
%
%
%
\section{Two-particle interactions}
\label{sec2}
We now introduce singular two-particle interactions. In a first step
we perform a detailed analysis of two distinguishable, interacting particles
on an interval. This is then generalised to two distinguishable particles on 
a general compact, metric graph.
%
\subsection{Two distinguishable particles on an interval}
\label{2partint}
As a first step towards our principal goal we now address the most
simple graph consisting of two vertices and one edge, i.e., an
interval $[0,l]$. In that case all two-particle functions are
defined on the square $D:= (0,l)\times (0,l)$. Hence, in particular,
the Hilbert space for two distinguishable particles is
$\cH_2 = L^2(0,l)\otimes L^2(0,l) = L^2 (D)$.

The goal now is to find self-adjoint realisations of the Laplacian
with a domain that is a subspace of $L^2(D)$. In the one-particle case
self-adjoint realisations of the Laplacian were obtained as maximally
symmetric extensions of a suitable symmetric realisation of the Laplacian.
In the two-particle case the corresponding Laplacian $-\Delta_{2,0}$ has a
domain $C_0^\infty (D)$. Hence, the domain of its adjoint $-\Delta_{2,0}^\ast$ is
\begin{equation}
\label{adjdom}
 \cD(-\Delta_{2,0}^\ast) = \{ \psi\in L^2 (D);\ \exists\chi\in L^2 (D)\
 \text{s.t.}\ \langle\psi,-\Delta_{2,0}\phi\rangle = \langle\chi,\phi\rangle
 \ \forall\phi\in C_0^\infty (D)\} \ .
\end{equation}
Notice here that in general $\cD(-\Delta_{2,0}^\ast)\neq H^2(D)$, but
$H^2(D) \subset \cD(-\Delta_{2,0}^\ast)$. Whether this implies that a domain of a
self-adjoint realisation of the Laplacian is in $H^2(D)$ is a subtle issue,
related to the problem of (elliptic) regularity, and will be addressed in
detail below.

Although the situation is similar to the one-particle case, one cannot
proceed to classify self-adjoint extensions of $-\Delta_{2,0}$ in the
same way as the deficiency indices of $-\Delta_{2,0}$ are infinite. It is,
therefore, not automatically guaranteed that the maximally symmetric
extensions of $-\Delta_{2,0}$ are self-adjoint, see \cite{ReeSim79}.

Nevertheless, for the following it will be useful to generate a certain class
of extensions of $-\Delta_{2,0}$. Their domains are subsets of
$\cD(-\Delta_{2,0}^\ast)$ and, in close analogy to the one-particle case, shall
be characterised in terms of boundary conditions imposed on the functions.
These will involve the boundary values (traces) of functions $\psi\in H^1(D)$
as well of their derivatives (in which case $\psi$ has to be in $H^2(D)$)
that, for convenience, are arranged as follows,
\begin{equation}
\label{bvinterval}
 \psi_{bv}(y) = \begin{pmatrix} \psi(0,y) \\ \psi(l,y) \\ \psi(y,0) \\
                 \psi(y,l) \end{pmatrix}   \qquad\text{and}\qquad
 \psi'_{bv}(y) = \begin{pmatrix} \psi_x (0,y) \\ -\psi_x (l,y) \\ \psi_y (y,0)
                 \\ -\psi_y (y,l) \end{pmatrix}\ ,
\end{equation}
i.e., as functions in $L^2(0,l)\otimes\kz^4$. We also require maps
$P,L: [0,l] \to \M(4,\kz)$ fulfilling
\begin{enumerate}
\item $P(y)$ is an orthogonal projector,
\item $L(y)$ is self-adjoint endomorphism of $\ker P(y)$,
\end{enumerate}
for a.e. $y \in [0,l]$. These maps shall be (at least) measurable and bounded. 

With these maps, as well as with $Q(y):=\eins_4-P(y)$, we define the domains
\begin{equation}
\label{bcinterval}
\begin{split}
 \cD_2 (P,L) := \{
       &\psi\in H^2(D);\ P(y)\psi_{bv}(y)=0\ \text{and}\\
       &\quad Q(y)\psi'_{bv}(y)+L(y)Q(y)\psi_{bv}(y)=0\ \text{for a.e.}\
          y\in [0,l] \} \ ,
\end{split}
\end{equation}
that will be useful later on.

In the same way as for one-particle Laplacians, an equivalent characterisation
in terms of maps $A,B: [0,l]\to \M(4,\kz)$ is available, see \eqref{1partBC}
vs.\ \eqref{1partBCalt}. These maps are required to fulfil for a.e.\
$y\in[0,l]$ that $\rank (A(y),B(y))=4$ and that $A(y) B(y)^\ast$ is
self-adjoint. In that case $P(y)$ is a projector onto $\ker B(y)\subseteq\kz^4$
and the self-adjoint map is given by $L(y)=(B(y)|_{\ran B(y)^\ast})^{-1}A(y)Q(y)$ 
on $\kz^4$; indeed, it is an endomorphisms of $\ran B(y)^\ast=\ran Q(y)$.

In a next step we generate a closed and semi-bounded quadratic form that 
allows us to define self-adjoint realisations of the two-particle Laplacian. 
Before, however, we introduce some useful notation: The maps 
$P,L:[0,l]\to\M(4,\kz)$ define (multiplication) operators $\Pi$ and $\Lambda$, 
respectively, on $L^2 (0,l)\otimes\kz^4$ through $(\Pi\chi)(y):=P(y)\chi(y)$ 
and $(\Lambda\chi)(y):=L(y)\chi(y)$, $\chi\in L^2 (0,l)\otimes\kz^4$. As the 
functions $P$ and $L$ are bounded and measurable on $[0,l]$, the operators
$\Pi$ and $\Lambda$ are bounded; $\Pi$ is a projector and $\Lambda$ is 
self-adjoint.

The quadratic form then will derive from the sesqui-linear form
\begin{equation}
\label{Qform2}
\begin{split}
 Q^{(2)}_{P,L}[\psi,\phi]
   &:= \langle  \nabla\psi,\nabla\phi \rangle_{L^2 (D)} - \langle \psi_{bv},
         \Lambda\phi_{bv} \rangle_{L^2(0,l)\otimes\kz^4} \\
   &=  \int_0^l\int_0^l \Bigl(\overline{\psi_x(x,y)}\,\phi_x(x,y) +
          \overline{\psi_y(x,y)}\,\phi_y(x,y)\Bigr)\ \ud x\,\ud y \\
   &\quad-\int_0^l \langle \psi_{bv}(y),L(y)\phi_{bv}(y) \rangle_{\kz^4} \ 
        \ud y \ ,
\end{split}
\end{equation}
as $Q^{(2)}_{P,L}[\psi]:=Q^{(2)}_{P,L}[\psi,\psi]$. For simplicity we use the
same symbol for both forms as it will be clear from the context which form
is meant.
\begin{theorem}
\label{2quadform}
Given maps $P,L: [0,l] \to \M(4,\kz)$ as above that are bounded and measurable,
the quadratic form $Q^{(2)}_{P,L}[\cdot]$ with domain
\begin{equation}
\label{Defquad}
 \cD_{Q^{(2)}} = \{ \psi \in H^1(D);\ P(y)\psi_{bv}(y)=0\ \text{for a.e.}\
 y \in [0,l]\}
\end{equation}
is closed and semi-bounded.
\end{theorem}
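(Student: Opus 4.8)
The plan is to establish the two properties separately, with a single trace inequality doing the main work in both. Throughout I would use that, since $P$ and $L$ are bounded and measurable, the multiplication operators $\Pi$ and $\Lambda$ are bounded on $L^2(0,l)\otimes\kz^4$, with $\|\Lambda\|=\|L\|_\infty := \esssup_y\|L(y)\|$, and that the trace map $\psi\mapsto\psi_{bv}$ is bounded from $H^1(D)$ into $L^2(0,l)\otimes\kz^4$.

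\emph{Semi-boundedness.} The first term in $Q^{(2)}_{P,L}[\psi]$ equals the $H^1$-seminorm $\|\nabla\psi\|^2_{L^2(D)}\ge 0$, so everything hinges on controlling the indefinite boundary term, for which $|\langle \psi_{bv},\Lambda\psi_{bv}\rangle|\le \|L\|_\infty\,\|\psi_{bv}\|^2$. First I would prove a one-dimensional trace estimate: for $f\in H^1(0,l)$ and any $\varepsilon>0$,
\[
 |f(0)|^2+|f(l)|^2 \le \varepsilon\int_0^l|f'|^2\,\ud x + C_\varepsilon\int_0^l|f|^2\,\ud x \ ,
\]
which follows from writing $|f(0)|^2=|f(x)|^2-2\re\int_0^x\overline{f}f'\,\ud t$, integrating in $x$, and applying Young's inequality. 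Applying this in the $x$-variable (for fixed $y$) to the traces $\psi(0,y),\psi(l,y)$ and in the $y$-variable to $\psi(y,0),\psi(y,l)$, then integrating over the remaining variable, yields the two-dimensional estimate
\[
 \|\psi_{bv}\|^2_{L^2(0,l)\otimes\kz^4}\le \varepsilon\,\|\nabla\psi\|^2_{L^2(D)}+C_\varepsilon\,\|\psi\|^2_{L^2(D)}
\]
for every $\psi\in H^1(D)$. Choosing $\varepsilon$ so small that $\varepsilon\|L\|_\infty\le\tfrac12$ then gives $Q^{(2)}_{P,L}[\psi]\ge \tfrac12\|\nabla\psi\|^2_{L^2(D)}-C\|\psi\|^2_{L^2(D)}\ge -C\|\psi\|^2_{L^2(D)}$, which is semi-boundedness with lower bound $-C$.

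\emph{Closedness.} Here I would show that on $\cD_{Q^{(2)}}$ the form norm $\|\psi\|^2_Q:=Q^{(2)}_{P,L}[\psi]+(C+1)\|\psi\|^2_{L^2(D)}$ is equivalent to the ambient $H^1(D)$-norm. The lower bound is the estimate just obtained, which yields $\|\psi\|^2_Q\ge \tfrac12\|\nabla\psi\|^2_{L^2(D)}+\|\psi\|^2_{L^2(D)}$; the upper bound follows from the trace estimate with $\varepsilon=1$, giving $\|\psi\|^2_Q\le C'\|\psi\|^2_{H^1(D)}$. With the norms equivalent, closedness of the form is equivalent to $\cD_{Q^{(2)}}$ being complete in the $H^1(D)$-norm, i.e.\ a closed subspace of $H^1(D)$. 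This I would obtain by noting that $\psi\mapsto\Pi\psi_{bv}$ is a composition of the bounded trace map with the bounded operator $\Pi$, hence continuous from $H^1(D)$ into $L^2(0,l)\otimes\kz^4$, so its kernel $\cD_{Q^{(2)}}$ is closed.

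\emph{Main obstacle.} The one delicate point is the $\varepsilon$-trace inequality, since it simultaneously delivers semi-boundedness (by absorbing the indefinite boundary term into the positive gradient term) and the norm equivalence underlying closedness. Its two-dimensional form on the square has to be assembled edge by edge from the one-dimensional estimate, and the boundedness of $P$ and $L$ is precisely what ensures that $\Pi$ and $\Lambda$ are bounded so that the single constant $\|L\|_\infty$ controls the boundary term; without this hypothesis neither the absorption nor the continuity of $\psi\mapsto\Pi\psi_{bv}$ would go through.
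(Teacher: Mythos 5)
Your proposal is correct and follows essentially the same route as the paper: an $\varepsilon$-trace inequality used to absorb the boundary term into the gradient term for semi-boundedness, then equivalence of the form norm with the $H^1(D)$-norm together with continuity of $\psi\mapsto\Pi\psi_{bv}$ to get closedness. The only differences are cosmetic: you prove the trace estimate by hand where the paper cites Lemma~8 of \cite{Kuc04}, and you phrase the final step as ``the kernel of a continuous map is closed'' where the paper unwinds this into an explicit Cauchy-sequence argument.
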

\begin{proof}
As $L(y)$ is self-adjoint, the expression \eqref{Qform2} obviously defines
a quadratic form on the domain $\eqref{Defquad}$ in $L^2(D)$. We then observe 
that
\begin{equation}
 \left| \int_0^l \langle \psi_{bv}(y),L(y)\psi_{bv}(y) \rangle_{\kz^{4}}\
 \ud y \right| \leq  L_{max}\, \|\psi_{bv}\|^{2}_{L^{2}(0,l)\otimes\kz^{4}} \ ,
\end{equation}
where
\begin{equation}
 L_{max}:=\sup_{y\in [0,l]}\|L(y)\|_{op} \ .
\end{equation}
Moreover, as a consequence of Lemma 8 in \cite{Kuc04},
\begin{equation}
\label{normeqI}
 \|\psi_{bv}\|^{2}_{L^{2}(0,l)\otimes\kz^{4}} \leq 4\,\left( \frac{2}{\delta}\,
 \|\psi\|^{2}_{L^{2}(D)} + \delta\, \|\nabla\psi\|^{2}_{L^{2}(D)}
 \right) \ ,
\end{equation}
holds for any $\delta\leq l$. Therefore,
\begin{equation}
 Q^{(2)}_{P,L}[\psi] \geq \bigl( 1 -4\delta L_{max}\bigr)\,
 \|\nabla\psi\|^{2}_{L^{2}(D)} - \frac{8L_{max}}{\delta}
 \|\Psi\|^{2}_{L^{2}(D)} \ .
\end{equation}
Now choose $\delta\leq\frac{1}{4 L_{max}}$, then there obviously exits $C>0$
such that 
\begin{equation}
 Q^{(2)}_{P,L}[\psi]\geq -C\|\psi\|^{2}_{L^{2}(D)} 
\end{equation}
and hence the quadratic form is bounded from below. We denote the optimal
such constant by $C_\infty$.

In order to show that the quadratic form \eqref{Qform2} is closed we observe
that the (squared) form norm
\begin{equation}
\label{qformnorm}
\begin{split}
 \|\cdot\|^{2}_{Q^{(2)}_{P,L}}= Q^{(2)}_{P,L}[\cdot]+ (C_{\infty}+1)\,\|\cdot\|^2_{L^2(D)}
\end{split}
\end{equation}
is equivalent to the Sobolev norm in $H^1(D)$. This follows from 
\eqref{normeqI}. Therefore, due to the completeness of $H^1(D)$ any Cauchy 
sequence $\{\psi_{n}\}_{n \in \mathbf{N}}$ in $\cD_{Q^{(2)}}\subset H^1(D)$ with 
respect to the form-norm has a limit $\psi\in H^1(D)$. In order to see that 
this limit is also in $\cD_{Q^{(2)}}$ we recall that according to the trace 
theorem (see \cite{Nec67,Dob05} where Lipschitz domains are covered) there 
exists a constant $c >0$ (depending on the domain $D$), such that
\begin{equation}
\label{trThm}
 \|\gamma\phi\|_{L^{2}(\partial D)} < c \, \|\phi\|_{H^{1}(D)} 
\end{equation}
for all $\phi\in H^1(D)$, where $\gamma: H^1(D)\to L^2(\partial D)$ is the 
trace map, assigning boundary values on $\partial D$ to functions on $D$. In 
our notation, this trace map is effectively given by the expression on the left 
in \eqref{bvinterval}, and we will therefore use $\phi_{bv}$ and $\gamma\phi$ 
interchangeably. (We remark that in our context the estimate \eqref{trThm} 
immediately follows from \eqref{normeqI}.) Thus, $\{\psi_{n,bv}\}$ converges to 
$\psi_{bv}$ in $L^{2}(0,l)\otimes\kz^4$. As the operator 
$\Pi$ on $L^{2}(0,l)\otimes\kz^4$ is 
supposed to be bounded, one concludes that $P(\cdot)\psi_{n;bv}=0$ converges 
to $P(\cdot)\psi_{bv}$ and hence $P(y)\psi_{bv}(y)=0$ for a.e.\ $y\in [0,l]$. 
\end{proof}
We shall now identify the self-adjoint operator $H$ with domain
$\cD(H)\subset\cD_{Q^{(2)}}$ that derives from this quadratic form according to
the representation theorem for quadratic forms (see, e.g., \cite{Kat66}).
In order to specify $H$ and its domain we use that for all $\phi\in\cD(H)$
there exists a unique $\chi\in L^2(D)$, depending on $\phi$, such that for all
$\psi\in\cD_{Q^{(2)}}$ the sesqui-linear form is
\begin{equation}
\label{abstractBVP}
 Q^{(2)}_{P,L}[\phi,\psi] = \langle \chi,\psi\rangle_{L^2(D)} \ .
\end{equation}
We then need to find $H$ and $\cD(H)$ such that $\chi=H\phi$ for all
$\phi\in\cD(H)$.

Our first approach is based on the identification of \eqref{abstractBVP} as
an abstract boundary value problem, following \cite{Sho77}. To this end we
split the sesquilinear form as
\begin{equation}
 Q^{(2)}_{P,L}[\phi,\psi] = q_1 [\phi,\psi] + q_2 [\phi,\psi] \ ,
\end{equation}
where $q_1$ and $q_2$ are, in an obvious way, given by the two terms in
\eqref{Qform2}. The second part is a boundary contribution and, strictly
speaking, involves the linear, continuous trace map
$\gamma: H^1 (D)\to L^{2}(\partial D)$. 

The abstract boundary value problem requires an abstract Green's operator
$\partial_n$ that is constructed as follows (see, e.g., \cite{Sho77}): The
trace map, restricted to the Hilbert space $\cD_{Q^{(2)}}$ (equipped with the
form-norm), has kernel $\ker\gamma = \cD_{Q^{(2)}}\cap H^1_0 (D)$. Let
\begin{equation}
 \cD_0 := \{ \psi\in\cD_{Q^{(2)}};\ \Delta_2\psi\in L^2 (D) \} \ ,
\end{equation}
then $\partial_n:\cD_0\to (\ran\gamma)'$ is a linear map defined by the 
relation
\begin{equation}
\label{abstract1Green}
 q_1[\psi,\phi] - \langle -\Delta_2\psi,\phi\rangle_{L^2(D)} =
 \partial_n\psi [\gamma\phi] \ ,\qquad \phi\in\cD_{Q^{(2)}} \ .
\end{equation}
Notice that whenever $\psi\in H^{2}(D)$ we have
\begin{equation}
\label{normalderiv}
 \partial_n\psi[\gamma\phi] = \int_{\partial D}\frac{\partial \bar{\psi}}
 {\partial n}\,\phi\ \ud s\ ,
\end{equation}
so that the operator $\partial_n$ is the standard normal derivative and 
\eqref{abstract1Green} is the classical first Green's theorem. In the general 
case $\partial_n$ can be seen as a weak form of a normal derivative, thus 
justifying our notation.

We are now in a position to apply Theorem 3.A from \cite{Sho77} to
\eqref{abstractBVP} which yields the following result.
\begin{prop}
\label{abstrdomain}
Let $H$ be the unique self-adjoint, semi-bounded operator corresponding to the
quadratic form $Q^{(2)}_{P,L}$. Then its domain is given by
\begin{equation}
\label{AbstractDef}
 \cD(H) = \{ \psi\in \cD_0 ; \ \partial_n\psi[\gamma\phi] +
    q_2 [ \psi,\phi ] = 0, \ \forall \phi \in \cD_{Q^{(2)}} \}  \ .
\end{equation}
\end{prop}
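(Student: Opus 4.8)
The plan is to produce $H$ from the representation theorem for closed, semi-bounded forms (Theorem~\ref{2quadform} guarantees the form is of this type) and then to read off its domain by recognising \eqref{abstractBVP} as an abstract boundary value problem in the sense of \cite{Sho77}, so that Theorem~3.A there delivers the claimed description once its hypotheses are checked. Those hypotheses are all at hand. By Theorem~\ref{2quadform} the form norm \eqref{qformnorm} is equivalent to the $H^1(D)$-norm, so $\cD_{Q^{(2)}}$ is a Hilbert space continuously embedded in $L^2(D)$; it contains $C^\infty_0(D)$ and is therefore dense, and the shifted form $Q^{(2)}_{P,L}[\cdot]+(C_\infty+1)\|\cdot\|^2_{L^2(D)}$ is bounded and coercive. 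The trace estimate \eqref{trThm} (equivalently \eqref{normeqI}) makes $\gamma:\cD_{Q^{(2)}}\to L^2(\partial D)$ continuous, and since functions in $H^1_0(D)$ have vanishing boundary values they automatically satisfy $P(y)\psi_{bv}(y)=0$, so the kernel of $\gamma$ on the form domain is exactly $\ker\gamma=\cD_{Q^{(2)}}\cap H^1_0(D)=H^1_0(D)$.

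I would then unwind the application of Theorem~3.A into two inclusions. For the forward direction, let $\psi\in\cD(H)$, so $\chi:=H\psi\in L^2(D)$ satisfies $Q^{(2)}_{P,L}[\psi,\phi]=\langle\chi,\phi\rangle_{L^2(D)}$ for all $\phi\in\cD_{Q^{(2)}}$. Testing first against $\phi\in C^\infty_0(D)$, the boundary term $q_2[\psi,\phi]$ vanishes since $\phi_{bv}=0$, while $q_1[\psi,\phi]=\langle\psi,-\Delta_2\phi\rangle$ by the definition of the weak derivative; this identifies $-\Delta_2\psi=\chi$ distributionally, so $\Delta_2\psi\in L^2(D)$ and $\psi\in\cD_0$ with $H\psi=-\Delta_2\psi$. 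Feeding this back, for arbitrary $\phi\in\cD_{Q^{(2)}}$ relation \eqref{abstract1Green} rewrites $q_1[\psi,\phi]=\langle-\Delta_2\psi,\phi\rangle+\partial_n\psi[\gamma\phi]$; inserting this into $Q^{(2)}_{P,L}[\psi,\phi]=q_1[\psi,\phi]+q_2[\psi,\phi]$ and cancelling the common term $\langle\chi,\phi\rangle=\langle-\Delta_2\psi,\phi\rangle$ leaves precisely $\partial_n\psi[\gamma\phi]+q_2[\psi,\phi]=0$, i.e.\ $\psi$ lies in the set \eqref{AbstractDef}.

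For the converse I would run the same computation backwards. If $\psi\in\cD_0$ satisfies $\partial_n\psi[\gamma\phi]+q_2[\psi,\phi]=0$ for all $\phi\in\cD_{Q^{(2)}}$, then setting $\chi:=-\Delta_2\psi\in L^2(D)$ and using \eqref{abstract1Green} once more gives $Q^{(2)}_{P,L}[\psi,\phi]=\langle\chi,\phi\rangle+\partial_n\psi[\gamma\phi]+q_2[\psi,\phi]=\langle\chi,\phi\rangle$ for every $\phi\in\cD_{Q^{(2)}}$. Since $\cD_0\subset\cD_{Q^{(2)}}$, the representation theorem then forces $\psi\in\cD(H)$ with $H\psi=\chi$, closing the circle.

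The step I expect to need the most care is the one that is only asserted in the run-up to the proposition, namely that $\partial_n$ is well defined as a map $\cD_0\to(\ran\gamma)'$, i.e.\ that the right-hand side of \eqref{abstract1Green} depends on $\phi$ only through $\gamma\phi$. This is exactly where the kernel computation is needed: for $\phi\in\ker\gamma=H^1_0(D)$ one has $q_1[\psi,\phi]-\langle-\Delta_2\psi,\phi\rangle=0$, by density of $C^\infty_0(D)$ in $H^1_0(D)$ in the $H^1$-norm (hence in the equivalent form norm) together with the distributional definition of $\Delta_2\psi$. Thus the functional $\phi\mapsto q_1[\psi,\phi]-\langle-\Delta_2\psi,\phi\rangle$ annihilates $\ker\gamma$ and descends to a continuous functional on $\ran\gamma$. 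Granting this, the remaining verification of Showalter's hypotheses is essentially bookkeeping, and the proposition follows directly from Theorem~3.A of \cite{Sho77}.
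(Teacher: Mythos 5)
Your proposal is correct and follows essentially the same route as the paper: the paper's proof consists of setting up the splitting $Q^{(2)}_{P,L}=q_1+q_2$, the trace map with $\ker\gamma=\cD_{Q^{(2)}}\cap H^1_0(D)$, and the abstract Green's operator $\partial_n$ via \eqref{abstract1Green}, and then simply invoking Theorem~3.A of \cite{Sho77}. The only difference is that you unwind that citation into an explicit two-inclusion argument and verify the well-definedness of $\partial_n$ on $(\ran\gamma)'$, details the paper leaves to Showalter; this is a sound and somewhat more self-contained rendering of the same argument.
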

Due to the presence of the abstract Green's operator this characterisation
of the domain is not very explicit. Our aim is to show that, in certain cases,
the domain \eqref{AbstractDef} coincides with \eqref{bcinterval}. The domains
$\cD_{2}(P,L)$, however, are subspaces of $H^2 (D)$, a property that does not
immediately follow from \eqref{AbstractDef}. In the theory of partial
differential equations this question is well known as the problem of elliptic
regularity (see, e.g., \cite{GilTru83}).
\begin{defn}
\label{DEFRegular}
The quadratic form $Q^{(2)}_{P,L}$ is called {\it regular}, iff its associated
self-adjoint operator $H$ has a domain $\cD(H)\subset H^{2}(D)$.
\end{defn}
As the form \eqref{Qform2} involves a boundary integral, in addition to
regularity we have to impose a (somewhat mild) condition on the projectors
$P$ ensuring that the kernel of the operator $\Pi$ is under sufficient
control.
\begin{lemma}
\label{dense1}
Let $P:(0,l)\to \M(4,\kz)$ be such that its matrix entries are in
$C^1(0,l)$, then $\ran(\gamma|_{\cD_{Q^{(2)}}})$ is dense in $\ker\Pi$ with respect 
to the norm of $L^2 (0,l)\otimes\kz^4$.
\end{lemma}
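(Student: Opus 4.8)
The plan is to use that the inclusion $\ran(\gamma|_{\cD_{Q^{(2)}}})\subseteq\ker\Pi$ is immediate (the trace of any $\psi\in\cD_{Q^{(2)}}$ satisfies $P\psi_{bv}=0$ a.e.), and to produce, for an arbitrary $\chi\in\ker\Pi$, a sequence of boundary traces of functions in $\cD_{Q^{(2)}}$ converging to $\chi$ in $L^2(0,l)\otimes\kz^4$. I would organise this as three successive reductions: first reduce to smooth data respecting the constraint, then remove the corner mismatch of the trace space, and finally realise the modified data as an honest trace.

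For the first step, write $Q(y)=\eins_4-P(y)$. The pointwise identities $Q^2=Q$ and $PQ=0$ show that the multiplication operator $\mathcal{Q}\colon\chi\mapsto Q(\cdot)\chi(\cdot)$ is a bounded projection of $L^2(0,l)\otimes\kz^4$ with range exactly $\ker\Pi$ and with $\mathcal{Q}|_{\ker\Pi}=\id$. Since $C^\infty([0,l],\kz^4)$ is dense in $L^2(0,l)\otimes\kz^4$ and $\mathcal{Q}$ is continuous, the set $\{Q(\cdot)g(\cdot);\ g\in C^\infty\}$ is dense in $\ker\Pi$. Here the hypothesis $P\in C^1$ enters decisively: each such $\tilde g:=Qg$ is a $C^1$, $\kz^4$-valued function on $[0,l]$ lying in $\ker\Pi$. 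It therefore suffices to approximate each such $\tilde g$ by boundary traces of functions in $\cD_{Q^{(2)}}$.

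The main obstacle is that the trace space of the square $D$ carries corner-compatibility constraints: the first and third components of $\psi_{bv}$ must agree at $y=0$ (both equal $\psi(0,0)$), and analogously at the three remaining corners. A generic $\tilde g\in\ker\Pi$ need not satisfy these, so it is not literally a trace. I would circumvent this by a scalar cut-off: choose $\eta_\epsilon\in C^\infty([0,l])$ with $0\le\eta_\epsilon\le1$, $\eta_\epsilon\equiv0$ near $\{0,l\}$ and $\eta_\epsilon\equiv1$ outside an $\epsilon$-neighbourhood of $\{0,l\}$. Then $\eta_\epsilon\tilde g\to\tilde g$ in $L^2$ as $\epsilon\to0$, the integrand differing only on a set of measure $O(\epsilon)$, so the error is $O(\sqrt\epsilon)$. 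Because $\eta_\epsilon$ is scalar we still have $P(\eta_\epsilon\tilde g)=\eta_\epsilon\,P\tilde g=0$, so $\eta_\epsilon\tilde g\in\ker\Pi$; and each of its four components is $C^1$ and vanishes near both endpoints, so all corner values are $0$ and compatibility holds trivially.

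Finally I would realise $\eta_\epsilon\tilde g=:(h_1,h_2,h_3,h_4)$ as a genuine trace by an explicit product extension. For the left-edge datum set $\psi_1(x,y):=h_1(y)\rho(x)$, where $\rho\in C^\infty([0,l])$ with $\rho(0)=1$ and $\supp\rho\subset[0,l/3)$; then $\psi_1\in H^1(D)$ (as $h_1\in H^1(0,l)$), its left-edge trace is $h_1$, and its other three traces vanish because $\rho$ vanishes on the opposite edge and $h_1(0)=h_1(l)=0$ on the two adjacent edges. Constructing $\psi_2,\psi_3,\psi_4$ analogously and setting $\psi:=\psi_1+\psi_2+\psi_3+\psi_4\in H^1(D)$ gives $\psi_{bv}=\eta_\epsilon\tilde g\in\ker\Pi$, hence $\psi\in\cD_{Q^{(2)}}$ and $\eta_\epsilon\tilde g\in\ran(\gamma|_{\cD_{Q^{(2)}}})$. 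Chaining the three reductions via the triangle inequality then shows that $\ran(\gamma|_{\cD_{Q^{(2)}}})$ is dense in $\ker\Pi$. The only genuinely delicate point is the corner compatibility, which the cut-off near the corners resolves at a negligible $L^2$-cost.
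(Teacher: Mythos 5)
Your proof is correct and takes essentially the same route as the paper's: both rely on the fact that the multiplication operator $\Pi^\perp$ (your $\mathcal{Q}$) is bounded and, by the $C^1$ hypothesis on $P$, carries a dense set of smooth data into $C^1$-regular boundary data lying in $\ker\Pi$, which is then realised as traces of $H^1(D)$ functions in $\cD_{Q^{(2)}}$. The only differences are cosmetic: the paper starts from $C_0^\infty(0,l)\otimes\kz^4$ approximants, so vanishing near the corners is automatic, whereas you start from $C^\infty([0,l])$ and add the cut-off $\eta_\epsilon$ by hand, and you additionally spell out the product extension $h_i(y)\rho(x)$ that the paper asserts without detail.
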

\begin{proof}
As $C_0^\infty(0,l)\otimes\kz^4\subset L^2(0,l)\otimes\kz^4$ is dense, whenever 
$\chi\in\ker\Pi\subset L^2(0,l)\otimes\kz^4$ there exists a sequence 
$\{\chi_n\}\subset C_0^\infty(0,l)\otimes\kz^4$ that converges to $\chi$. 
Moreover, any $\chi_n\in C_0^\infty(0,l)\otimes\kz^4$ can be extended to some 
$\psi_n\in H^1(D)$, such that $\chi_n =\psi_{n,bv}$.

Using the orthogonal complement $\Pi^\perp$ to the projector $\Pi$ we note that,
by the assumption in the lemma, $\Pi^\perp\chi_n\in C_0^1 (0,l)\otimes\kz^4$. 
Again, $\Pi^\perp\chi_n$ can be extended to a function $\phi_n\in H^1(D)$, such 
that $\Pi^\perp\chi_n =\phi_{n,bv}$. By construction, $P(y)\phi_{n,bv}(y) =0$ so 
that indeed $\phi_n\in\cD_{Q^{(2)}}$. Therefore, identifying $\phi_{n,bv}$ with
$\gamma\phi_n$ we conclude that $\Pi^\perp\chi_n\in\ran(\gamma|_{\cD_{Q^{(2)}}})$.

Moreover, as $\Pi$ is assumed to be bounded in operator norm there exits 
$K>0$ such that
\begin{equation}
 \| \Pi^\perp\chi_{n} - \chi \|_{L^{2}(0,l)\otimes\kz^4} = 
 \| \Pi^\perp\bigl(\chi_n - \chi\bigr)\|_{L^{2}(0,l)\otimes\kz^4} \leq K \, 
 \|\chi_n -\chi\|_{L^{2}(0,l)\otimes\kz^4} \to 0\ ,
\end{equation}
as $n\to\infty$. Thus, $\ran(\gamma|_{\cD_{Q^{(2)}}})$ is dense in $\ker\Pi$.
\end{proof}
In the regular case, and when the matrix entries of $P$ are of class $C^1$, 
we can specify the domain of the operator $H$ more explicitly.
\begin{theorem}
\label{2Laplace}
Suppose that the matrix entries of $P:(0,l)\to \M(4,\kz)$ are in $C^1(0,l)$
and that the quadratic form $Q^{(2)}_{P,L}$ is regular. Then the unique 
self-adjoint, semi-bounded operator $H$ that is associated with this form is 
the two-particle Laplacian $-\Delta_2$ with domain $\cD_2 (P,L)$.
\end{theorem}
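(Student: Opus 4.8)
The plan is to combine the abstract characterisation of $\cD(H)$ from Proposition \ref{abstrdomain} with the regularity hypothesis, which by Definition \ref{DEFRegular} guarantees $\cD(H)\subset H^2(D)$. Once we know that every $\psi\in\cD(H)$ lies in $H^2(D)$, the abstract Green's operator $\partial_n$ becomes the genuine normal derivative via \eqref{normalderiv}, so the implicit condition in \eqref{AbstractDef} can be rewritten explicitly in terms of the boundary vectors $\psi_{bv}$ and $\psi'_{bv}$. I would prove the two inclusions $\cD(H)\subset\cD_2(P,L)$ and $\cD_2(P,L)\subset\cD(H)$ separately, and along the way identify the action of $H$ with $-\Delta_2$.

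For the inclusion $\cD(H)\subset\cD_2(P,L)$, take $\psi\in\cD(H)$. By regularity $\psi\in H^2(D)$, and since $\cD(H)\subset\cD_0\subset\cD_{Q^{(2)}}$ it already satisfies $P(y)\psi_{bv}(y)=0$, the first of the two conditions defining $\cD_2(P,L)$. Inserting the $H^2$-form \eqref{normalderiv} of $\partial_n$ into the boundary identity in \eqref{AbstractDef}, and writing the four sides of $\partial D$ in terms of the arrangement \eqref{bvinterval}, the normal-derivative term collapses (with the sign convention built into $\psi'_{bv}$) to $-\langle\psi'_{bv},\phi_{bv}\rangle_{L^2(0,l)\otimes\kz^4}$, while $q_2[\psi,\phi]=-\langle\Lambda\psi_{bv},\phi_{bv}\rangle$ after using the self-adjointness of $\Lambda$. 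Hence $\langle\psi'_{bv}+\Lambda\psi_{bv},\phi_{bv}\rangle=0$ for all $\phi\in\cD_{Q^{(2)}}$. Now Lemma \ref{dense1} enters decisively: since $\ran(\gamma|_{\cD_{Q^{(2)}}})$ is dense in $\ker\Pi$, the vector $\psi'_{bv}+\Lambda\psi_{bv}$ is orthogonal to $\ker\Pi$ and therefore lies in $\ran\Pi$, i.e. $Q(y)\bigl(\psi'_{bv}(y)+L(y)\psi_{bv}(y)\bigr)=0$ for a.e.\ $y$, which is exactly the second condition in \eqref{bcinterval}. To identify the operator, I would feed \eqref{abstract1Green} back into $Q^{(2)}_{P,L}[\psi,\phi]=\langle H\psi,\phi\rangle$; the boundary terms cancel on $\cD(H)$, leaving $\langle H\psi,\phi\rangle=\langle -\Delta_2\psi,\phi\rangle$ for all $\phi\in\cD_{Q^{(2)}}$, and density of $\cD_{Q^{(2)}}$ in $L^2(D)$ gives $H\psi=-\Delta_2\psi$.

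The reverse inclusion $\cD_2(P,L)\subset\cD(H)$ runs the same computation backwards. Given $\psi\in\cD_2(P,L)\subset H^2(D)$, the first boundary condition places $\psi$ in $\cD_{Q^{(2)}}$ and, since $-\Delta_2\psi\in L^2(D)$, in $\cD_0$. Computing as before, $\partial_n\psi[\gamma\phi]+q_2[\psi,\phi]=-\langle\psi'_{bv}+\Lambda\psi_{bv},\phi_{bv}\rangle$, and the second boundary condition rearranges to $\psi'_{bv}+\Lambda\psi_{bv}=P(\cdot)\psi'_{bv}\in\ran\Pi$. As every test function satisfies $\phi_{bv}\in\ker\Pi$, this pairing vanishes, so $\psi$ meets the defining condition of \eqref{AbstractDef} and thus $\psi\in\cD(H)$.

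The main obstacle I anticipate is the faithful passage from the abstract Green's operator to the classical normal derivative: away from $H^2$ one only has the weak object $\partial_n\psi\in(\ran\gamma)'$, and it is precisely the regularity hypothesis that licenses \eqref{normalderiv} and makes the boundary conditions meaningful pointwise. The second delicate point is converting the weak orthogonality relation into an almost-everywhere pointwise statement about $Q(y)$; this is where the $C^1$ assumption on $P$ is indispensable, since it underwrites the density in Lemma \ref{dense1}. Care with the orientation and conjugation conventions of \eqref{bvinterval} is also needed so that the inward-derivative signs in $\psi'_{bv}$ match the outward normal in \eqref{normalderiv}.
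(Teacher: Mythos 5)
Your proof is correct, and it takes precisely the route that the paper mentions and then deliberately sidesteps. The paper observes that, under regularity, $\partial_n$ becomes the classical normal derivative \eqref{normalderiv}, which ``would allow to state the boundary condition in \eqref{AbstractDef} immediately in an explicit way'', but then proceeds instead ``in a more direct way'': it tests the form identity \eqref{abstractBVP} against compactly supported $\psi$ to identify $H\phi=-\Delta_2\phi$ by integration by parts, and then against boundary-nonvanishing $\psi$ to produce the boundary term \eqref{orthocond}. The reason the paper gives for this detour is exactly the gap your argument must close: the abstract characterisation \eqref{AbstractDef} describes the domain but not the action of $H$. You close it correctly by feeding \eqref{abstract1Green} back into $Q^{(2)}_{P,L}[\psi,\phi]=\langle H\psi,\phi\rangle$ so that the boundary terms cancel on $\cD(H)$, and then using density of $\cD_{Q^{(2)}}$ in $L^2(D)$ — a clean substitute for the paper's interior test-function computation. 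Beyond that, the two arguments share the same skeleton: the sign bookkeeping that turns the outward normal derivative into $-\langle\psi'_{bv},\phi_{bv}\rangle_{L^2(0,l)\otimes\kz^4}$, self-adjointness of $\Lambda$, and above all Lemma \ref{dense1} to upgrade the weak orthogonality $\langle\psi'_{bv}+\Lambda\psi_{bv},\phi_{bv}\rangle=0$ to membership of $\psi'_{bv}+\Lambda\psi_{bv}$ in $\ran\Pi$, i.e.\ to \eqref{bcregular}; your use of the fact that $L(y)$ is an endomorphism of $\ran Q(y)$, together with $Q\psi_{bv}=\psi_{bv}$, to reconcile this with the condition in \eqref{bcinterval} matches the paper's closing remark. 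What your version buys: it reuses the machinery of Proposition \ref{abstrdomain} rather than re-deriving Green's identity by hand, and it proves both inclusions $\cD(H)\subset\cD_2(P,L)$ and $\cD_2(P,L)\subset\cD(H)$ explicitly — the reverse inclusion, where you rearrange the boundary condition to $\psi'_{bv}+\Lambda\psi_{bv}=\Pi\psi'_{bv}\in\ran\Pi$ and pair against $\phi_{bv}\in\ker\Pi$, is left in the paper to ``a comparison with \eqref{bcinterval}''. What the paper's version buys: the identification of the operator comes for free from interior test functions, without appealing to the abstract Green's operator at all.
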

\begin{proof}
Since in the regular case any $\psi\in\cD(H)$ is in $H^2(D)$, Green's operator
$\partial_n$ is the standard normal derivative, see \eqref{normalderiv}. This
would allow to state the `boundary condition' contained in \eqref{AbstractDef}
immediately in an explicit way.

Following the one-particle case developed in \cite{Kuc04}, however, we shall 
now proceed in a more direct way as this will confirm the
operator $H$ too. For this we choose $\psi$ in \eqref{abstractBVP}
to be smooth and compactly supported in $D$, vanishing in neighbourhoods of
$\partial D$ such that $\psi_{bv}(y)=0$ for all $y\in [0,l]$. Thus
\begin{equation}
 \langle \chi,\psi\rangle_{L^2(D)} =\  \int_0^l\int_0^l \bigl(\bar{\phi}_x(x,y)\,
 \psi_x(x,y)\ + \bar{\phi}_y(x,y)\,
 \psi_y(x,y)\bigr)\ \ud x\,\ud y \ .
\end{equation}
An integration by parts then yields
\begin{equation}
\label{partintQ}
 \langle \chi,\psi\rangle_{L^2(D)} = \int_0^l\int_0^l \bigl(-\bar{\phi}_{xx}(x,y)
 -\bar{\phi}_{yy}(x,y)\bigr)\psi(x,y)\ \ud x\,\ud y \ ,
\end{equation}
so that $\chi=H\phi=-\Delta_2\phi$. Hence the operator $H$ acts as a
two-particle Laplacian, and every $\phi\in\cD(H)$ must be in
$\cD(-\Delta^{\ast}_{2,0})$. Now, we choose $\psi\in\cD_{Q^{(2)}}$ that is non-zero
in a neighbourhood of $\partial D$. Then, in addition to the right-hand side
of \eqref{partintQ}, an integration by parts yields the term
\begin{equation}
\label{orthocond}
 -\int_0^l \langle \phi'_{bv}(y)+L(y)\phi_{bv}(y),\psi_{bv}(y) \rangle_{\kz^4}
  \ \ud y = -\langle \phi'_{bv} +L\phi_{bv},\psi_{bv} \rangle_{L^2(0,l)\otimes\kz^4}\ ,
\end{equation}
which must vanish. Since $L(\cdot)$ is self-adjoint, one can rewrite this term
as
\begin{equation}
 \int_{\partial D}\frac{\partial \bar{\phi}}{\partial n}\,\psi\ \ud s
 + q_2[\phi,\psi]\ ,
\end{equation}
hence its vanishing is precisely a more explicit version of the boundary
condition in \eqref{AbstractDef}.

Furthermore, the condition $P(y)\psi_{bv}(y)=0$ fulfilled by 
$\psi\in\cD_{Q^{(2)}}$ for a.e. $y\in[0,l]$ implies that $\psi_{bv}$ is in the
kernel of the orthogonal projector $\Pi$ on $L^2(0,l)\otimes\kz^4$. Hence, 
the vanishing of \eqref{orthocond} for all $\psi\in\cD_{Q^{(2)}}$, together with 
the fact that by Lemma~\ref{dense1} $\ran(\gamma|_{\cD_{Q^{(2)}}})\subset\ker\Pi$ 
is dense, implies that $\phi'_{bv}+L(\cdot)\phi_{bv}$ is in the kernel of 
$\Pi^\perp$, or
\begin{equation}
\label{bcregular}
 Q(y)\phi'_{bv}(y)+Q(y)L(y)\phi_{bv}(y)=0 \ .
\end{equation}
Furthermore, as $L(y)$ is an endomorphism of $\ran Q(y)\subseteq\kz^4$, a 
comparison with \eqref{bcinterval} shows that $\cD(H)=\cD_2(P,L)$.
\end{proof}
\begin{rem}
In the general, not necessarily regular, case the integration by parts leading
to \eqref{partintQ} is not possible and the boundary term \eqref{orthocond}
cannot be expressed in the same form as an integral. Nevertheless, weak
derivatives and the abstract version of Green's operator allow to interpret
the abstract boundary condition $\partial_n\psi[\gamma\phi]+q_2 [\psi,\phi]=0$
as a distributional variant of \eqref{bcregular}. Regardless of regularity,
we shall therefore also use $\cD_2(P,L)$ to denote the domain of the
self-adjoint operator $H$ associated with the quadratic form $Q^{(2)}_{P,L}$.
\end{rem}
So far we did not consider to what extent the representation of the
quadratic form in terms of the maps $P$ and $L$, see \eqref{Qform2} and 
\eqref{Defquad}, is unique.
\begin{prop}
\label{uniquePL}
Suppose that the matrix entries of $P:(0,l)\to \M(4,\kz)$ are in $C^1(0,l)$. 
Then the parametrisation of the quadratic form $Q^{(2)}_{P,L}$ in terms of $P$ 
and $L$ according to \eqref{Qform2} and \eqref{Defquad} is unique with this
property.
\end{prop}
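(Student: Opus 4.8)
The plan is to show that if two admissible pairs $(P,L)$ and $(P',L')$ --- each with the projector having $C^1$ entries and the accompanying self-adjoint map defined on the respective kernel --- generate the \emph{same} quadratic form, by which I mean the same form domain together with identical values, then necessarily $P=P'$ and $L=L'$ almost everywhere on $(0,l)$. Since the form domain \eqref{Defquad} depends only on $P$, whereas the boundary term in \eqref{Qform2} additionally encodes $L$, I would split the argument into first recovering $P$ from the form domain and then recovering $L$ from the values of the (sesquilinear, by polarisation) form. Throughout I work with the multiplication operators $\Pi,\Lambda$ and $\Pi',\Lambda'$ on $L^2(0,l)\otimes\kz^4$ induced by the two pairs.

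To recover $P$, I would use that equality of the form domains means the two $H^1(D)$-subspaces cut out by $P\psi_{bv}=0$ and $P'\psi_{bv}=0$ are the same set, so that their images under the trace map $\gamma$ coincide. By Lemma~\ref{dense1} --- which is where the $C^1$ hypothesis on the entries of $P$ and $P'$ enters --- each trace image is dense in $\ker\Pi$, respectively $\ker\Pi'$, so that passing to closures in $L^2(0,l)\otimes\kz^4$ yields $\ker\Pi=\ker\Pi'$. It then remains to see that a multiplication operator by an orthogonal projector is determined by its kernel. For this, fix $w\in\kz^4$: the section $y\mapsto Q(y)w$ lies in $\ker\Pi$ (since $PQ=0$) and hence in $\ker\Pi'$, so $P'(y)Q(y)w=0$ for a.e.\ $y$. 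Letting $w$ run through a basis of $\kz^4$ and discarding the finite union of the associated null sets gives $P'(y)Q(y)=0$, i.e.\ $\ran Q(y)\subseteq\ran Q'(y)$, for a.e.\ $y$. The symmetric argument supplies the reverse inclusion, whence $\ran Q(y)=\ran Q'(y)$, and therefore $Q(y)=Q'(y)$ and $P(y)=P'(y)$, almost everywhere.

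Having established $P=P'$ (so that $Q=Q'$ and the form domains genuinely agree), I would recover $L$ by comparing the values of the two forms. The gradient terms in \eqref{Qform2} are identical, so equality of the forms reduces to the vanishing of $\langle\chi_1,(\Lambda-\Lambda')\chi_2\rangle_{L^2(0,l)\otimes\kz^4}$ for all traces $\chi_1,\chi_2\in\ran(\gamma|_{\cD_{Q^{(2)}}})$, where $\Lambda-\Lambda'$ is multiplication by $M(y):=L(y)-L'(y)$. Since $\Lambda$ and $\Lambda'$ are bounded this bilinear expression is continuous on $L^2(0,l)\otimes\kz^4$, and as the traces are dense in $\ker\Pi$ by Lemma~\ref{dense1}, it vanishes for all $\chi_1,\chi_2\in\ker\Pi$. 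Testing with $\chi_1(y)=Q(y)u$ and $\chi_2(y)=b(y)Q(y)v$, for fixed $u,v\in\kz^4$ and arbitrary $b\in L^2(0,l)$, forces $\langle Q(y)u,M(y)Q(y)v\rangle_{\kz^4}=0$ for a.e.\ $y$; ranging $u,v$ over a basis and again discarding a finite union of null sets gives $M(y)Q(y)=0$, that is, $M(y)=0$ on $\ran Q(y)$. As $L$ and $L'$ enter the form only through their action on $\ran Q(\cdot)$, this shows $L=L'$ almost everywhere in the required sense.

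I expect no difficulty from the analytic manipulations; the two places where the real work sits are both conceptual. The first is the reduction of ``equal form domains'' to ``$\ker\Pi=\ker\Pi'$'', which relies essentially on the density statement of Lemma~\ref{dense1} and hence on the $C^1$ assumption --- without it the trace image need not exhaust $\ker\Pi$, and uniqueness of the parametrisation could genuinely fail, which is precisely why the proposition carries that hypothesis. The second, and the main obstacle, is the measure-theoretic passage from integral identities that hold for all admissible test functions to almost-everywhere pointwise identities for the matrix-valued maps $P$ and $L$. This step is tractable only because $\kz^4$ is finite-dimensional, so that constant (or scalar-times-constant) test sections $w$, $Q(\cdot)u$, $b(\cdot)Q(\cdot)v$ suffice and the finitely many exceptional null sets collapse to a single null set; I would phrase the choice of these sections carefully so as to keep that uniformity transparent.
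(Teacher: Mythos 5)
Your proposal is correct and follows essentially the same route as the paper: first recover $P$ from the form domain using the density of traces in $\ker\Pi$ (Lemma~\ref{dense1}) together with closedness of the kernels, then recover $L$ from the boundary term using the same density plus the fact that $L(y)$ acts only on $\ran Q(y)$. The only difference is stylistic — the paper argues by contradiction at the level of the operators $\Pi$ and $\Lambda$, whereas you argue directly and descend to a.e.\ pointwise identities via constant and scalar-multiple test sections, a detail the paper leaves implicit.
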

\begin{proof}
The characterisation \eqref{Defquad} of a domain $\cD_{Q^{(2)}}$ involves only 
$P$. Suppose that a given domain can be characterised by two different maps 
$P_j :(0,l)\to \M(4,\kz)$, $j=1,2$, both of which with matrix entries in 
$C^1(0,l)$. The associated projection operators $\Pi_j$ on 
$L^2(0,l)\otimes\kz^4$ are, therefore, different implying 
$\ker\Pi_1\neq\ker\Pi_2$. We can hence assume that there exists 
$\chi\in\ker\Pi_1$ such that $\chi\not\in\ker\Pi_2$. Now, following 
Lemma~\ref{dense1} there exists a sequence $\{\phi_n\}$ in $\cD_{Q^{(2)}}$
such that $\phi_{n,bv}$ converges to $\chi$. Moreover, following our assumption
$\phi_n\in\cD_{Q^{(2)}}$ means that $\phi_{n,bv}\in\ker\Pi_1\cap\ker\Pi_2$. 
However, $\chi\not\in\ker\Pi_2$ contradicts the fact that the 
$\phi_{n,bv}\in\ker\Pi_2$ converge to $\chi$.
 
Now assume that a domain $\cD_{Q^{(2)}}$ (with a unique $C^1$-map $P$) is given, 
but the form \eqref{Qform2} can be characterised by two different maps
$L_j :(0,l)\to \M(4,\kz)$, $j=1,2$, yielding two different (bounded and
self-adjoint) operators $\Lambda_j$ on $L^2(0,l)\otimes\kz^4$. Hence
\begin{equation}
 \langle \phi_{bv},\bigl( \Lambda_1 -\Lambda_2 \bigr)\phi_{bv} 
 \rangle_{L^2(0,l)\otimes\kz^4} = 0\ ,\quad\text{for all}\  \phi\in\cD_{Q^{(2)}}\ .
\end{equation}
Again following Lemma~\ref{dense1}, and using that, by definition,  $L_j(y)$ 
vanishes on $\bigl(\ker P(y)\bigr)^\perp$, this implies $\Lambda_1 =\Lambda_2$.
\end{proof}
\begin{rem}
In the regular case, when the associated operators are two-particle Laplacians
with domains $\cD_2(P,L)$, the same uniqueness results holds for the operators,
as the association between closed, semi-bounded quadratic forms and 
semi-bounded, self-adjoint operators is one-to-one \cite{Kat66}.
\end{rem}

In a couple of standard cases it is well known that the quadratic form
\eqref{Qform2} is regular, including the forms associated with the following 
operators:
\begin{enumerate}
\item A Dirichlet-Laplacian, in which case $P(y)=\eins_4$ for all $y\in [0,l]$.
\item A Neumann-Laplacian, where $P(y)=0=L(y)$ for all $y\in [0,l]$.
\item A mixed Dirichlet-Neumann Laplacian, where $P(y)$ is independent of $y$ 
and diagonal such the diagonal entries are either zero or one. Moreover,
$L(y)=0$ for all $y\in [0,l]$. In such a case Dirichlet boundary conditions 
are imposed on the parts of the boundary that, via \eqref{bvinterval}, 
correspond to a one on the diagonal of $P$, and Neumann boundary conditions on 
the remaining parts.
\item A Laplacian with standard Robin boundary condition follows when $P(y)=0$ 
for all $y\in [0,l]$ and  $L=\alpha\eins_4$, where $\alpha>0$. In that case the 
boundary conditions in \eqref{bcinterval} reduce to 
$\psi_{bv}' (y)+\alpha\psi_{bv}(y)=0$.
\end{enumerate}
We are now going to establish regularity in a further class of examples.
Suppose that $P$ takes the following block-diagonal form,
\begin{equation}
\label{Passum}
 P(y) = \begin{pmatrix} \tilde P(y) & 0 \\ 0 & \tilde P(y)\end{pmatrix} \ .
\end{equation}
This structure will be necessary in the case of identical particles, see
Section~\ref{sec3}. We also assume that the matrix entries of $\tilde P$ are 
in $C^3(0,l)$ and possess extensions of class $C^3$ to some interval
$(-\eta,l+\eta)$, $\eta>0$. Then the rank of $\tilde P(y)$, which is either
zero, one or two, is the same for all $y\in[0,l]$. In the case $\rank\tilde P=0$
the only possible solution for $\tilde P$ is to be zero, and for  
$\rank\tilde P=2$ the projector has to be $\eins_2$. These to cases are
covered by the examples 1.\ and 2.\ above. When $\rank\tilde P=1$, the projector
is of the form
\begin{equation}
\label{rkPone}
 \tilde P(y) = \begin{pmatrix} \beta(y) & \bar\gamma(y) \\ 
               \gamma(y) & 1-\beta(y) \end{pmatrix} \ ,
\end{equation}
where $0\leq\beta(y)\leq 1$ and $|\gamma(y)|^2 =\beta(y)-\beta^2(y)$, i.e., 
when $\gamma(y)=0$, $\beta(y)$ must be either one or zero. Hence, demanding 
that $\gamma(y)\to 0$ as $y\to 0$ and as $y\to l$, $\tilde P(y)$ approaches 
one of the two cases 
\begin{equation}
 \begin{pmatrix} 1 & 0 \\ 0 & 0 \end{pmatrix} \ ,\quad
 \begin{pmatrix} 0 & 0 \\ 0 & 1 \end{pmatrix} \ . 
\end{equation}
We shall indeed suppose that on small intervals $[0,\varepsilon_1]$ and 
$[l-\varepsilon_2,l]$, with $\varepsilon_j>0$, the projector $\tilde P(y)$ 
assumes this form as well as that $L(y)=0$. This means that the conditions 
$P(y)\phi_{bv}(y)=0$ and $Q(y)\phi'_{bv}(y)$ imply mixed Dirichlet-Neumann 
conditions in neighbourhoods of two of the corners of $\partial D$ and either 
Dirichlet or Neumann conditions in neighbourhoods of the remaining two 
corners. In such cases we are able to prove regularity.
\begin{theorem}
\label{TheoremPL(y)1}
Let $L$ be Lipschitz continuous on $[0,l]$ and let $P$ be of the block-diagonal
form \eqref{Passum}. Assume that the matrix entries of $\tilde P$ are in 
$C^3(0,l)$ and possess extensions of class $C^3$ to some interval
$(-\eta,l+\eta)$, $\eta>0$. Moreover, when 
$y\in [0,\varepsilon_{1}]\cup [l-\varepsilon_{2},l]$, with some 
$\varepsilon_{1},\epsilon_{2} > 0$, suppose that $L(y)=0$ and that $\tilde P(y)$
is diagonal with diagonal entries that are either zero or one. Then the 
quadratic form $Q^{(2)}_{P,L}$ is regular.
\end{theorem}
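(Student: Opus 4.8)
The plan is to prove regularity locally and then patch, since the property $\cD(H)\subset H^2(D)$ can be checked in a neighbourhood of each point of $\bar D$. Fix $\psi\in\cD(H)$; by Proposition~\ref{abstrdomain} it lies in $\cD_0$, so by definition of $\cD_0$ we have $-\Delta_2\psi=:\chi\in L^2(D)$ in the distributional sense. Choosing a finite smooth partition of unity on $\bar D$ subordinate to a cover by interior discs, edge patches (meeting exactly one open side of $\partial D$), and corner patches (one around each vertex of the square), it suffices to show that $\zeta\psi\in H^2(D)$ for each cut-off $\zeta$ of the three types. On interior patches this is immediate from classical interior elliptic regularity for $-\Delta$ applied to $-\Delta_2\psi=\chi\in L^2$ (see, e.g., \cite{GilTru83}).

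Next I would treat an edge patch, say one meeting the side $x=0$ with $y$ confined to a compact subinterval of $(0,l)$ bounded away from the corners. Here the only active boundary conditions are those encoded by the first block of $P$ and $L$, namely the rank-one projector $\tilde P(y)$ of the form \eqref{rkPone} together with the Robin term $L(y)$; the coupling to the opposite side $x=l$ is harmless after localisation because that side lies a fixed positive distance away and its traces enter only as already-controlled data. The idea is to diagonalise $\tilde P(y)$ by a smooth unitary family $U(y)$, which converts the condition $\tilde P(y)\psi_{bv}=0$ together with the $Q$-condition in \eqref{bcinterval} into a standard pure Dirichlet, pure Neumann, or Robin condition in rotated boundary coordinates. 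Because $\tilde P$ is of class $C^3$, the transformation and the resulting zeroth- and first-order remainder terms are sufficiently regular (three derivatives of $\tilde P$ and the Lipschitz bound on $L$ enter), so that the tangential difference-quotient method applies: testing the form against tangential difference quotients of $\psi$ yields a tangential derivative of the trace, and the equation $-\psi_{xx}-\psi_{yy}=\chi$ then recovers the missing normal second derivative. This gives $\zeta\psi\in H^2(D)$ on each edge patch.

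The main obstacle is the corners, and it is precisely here that the hypotheses on $\tilde P$ and $L$ near the vertices are used. By assumption, for $y\in[0,\varepsilon_1]\cup[l-\varepsilon_2,l]$ the projector $\tilde P(y)$ is diagonal with entries in $\{0,1\}$ and $L(y)=0$; by the block structure \eqref{Passum} the boundary conditions then \emph{decouple} into a pure homogeneous Dirichlet or Neumann condition on each of the two sides meeting at the corner, with no Robin coupling. This is the exact structure for which a reflection argument is available: I would extend $\psi$ across each Dirichlet side by odd reflection and across each Neumann side by even reflection. Reflecting successively across the two sides unfolds the right-angle corner so that the vertex becomes an interior point of a full neighbourhood, and the parity matching guaranteed by the homogeneous conditions (and by $L=0$) ensures that the extended function is a weak solution of $-\Delta\tilde\psi=\tilde\chi$ with $\tilde\chi\in L^2$ on that neighbourhood. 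Interior regularity for the unfolded problem then yields $\psi\in H^2$ near the corner. Conceptually, the point is that a right-angle corner carrying any combination of Dirichlet and Neumann conditions has corner-singularity exponents that are integers $\geq 1$, with the associated solutions being harmonic polynomials; hence no genuine, non-$H^2$ corner singularity can occur, which is exactly what the engineered corner conditions secure.

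Finally I would patch the local estimates together: the finite partition of unity together with the Leibniz rule combines the interior, edge, and corner bounds into a global estimate $\psi\in H^2(D)$, so that $\cD(H)\subset H^2(D)$ and the form is regular in the sense of Definition~\ref{DEFRegular}. The two delicate points to watch are the verification that the reflected function near each corner is genuinely a weak solution with $L^2$ right-hand side (which relies on $L=0$ and the homogeneous, decoupled Dirichlet/Neumann conditions there), and the careful bookkeeping of the $y$-dependent, nonlocal boundary coupling on the edge patches, where the $C^3$-regularity of $\tilde P$ is exactly what is needed to keep the transformed problem within the scope of the difference-quotient method.
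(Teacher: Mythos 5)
Your overall architecture (interior elliptic regularity, tangential difference quotients near the boundary, special treatment of the corners) is in the spirit of the paper's proof, and two of your three steps are sound. The interior step is standard, and your corner argument is correct: near the corners the hypotheses force decoupled homogeneous Dirichlet or Neumann conditions with $L=0$, so the odd/even reflection-unfolding you describe does produce a weak solution with $L^2$ right-hand side on a full neighbourhood of the corner, giving $H^2$ there. This is in fact a nice, self-contained alternative to the paper, which at this point simply cites standard corner-domain results \cite{Nec67,Dau88}.

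The genuine gap is in the edge-patch step. Away from the corners, $\tilde P(y)$ is a general rank-one projector of the form \eqref{rkPone}, and when $\gamma(y)\neq 0$ the condition $\tilde P(y)\bigl(\psi(0,y),\psi(l,y)\bigr)^T=0$ \emph{couples the traces on the two opposite sides} $x=0$ and $x=l$: it is a nonlocal boundary condition. A test function supported in a patch meeting only the side $x=0$ (nonzero trace at $x=0$, zero trace at $x=l$) then generically violates $P(y)\phi_{bv}(y)=0$, hence does not lie in $\cD_{Q^{(2)}}$, and the weak identity \eqref{abstractBVP} cannot be tested against it. Your claim that the coupling to $x=l$ is ``harmless after localisation'' because those traces are ``already-controlled data'' is precisely where the argument breaks: those traces are unknowns tied to the unknown at $x=0$ by the constraint, and admissibility of the test function fails before any estimate is attempted. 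A second instance of the same admissibility problem comes from the $y$-dependence of $P$: a $y$-difference quotient of $\phi$ satisfies the constraint with $P(y\pm h)$, not with $P(y)$. The paper resolves both issues at once by never localising in $x$: it works on full strips $[0,l]\times[\epsilon,l-\epsilon]$, tests with $D^{-h}_{y}\tau^{2}D^{+h}_{y}\phi$ where the cutoff $\tau$ depends on $y$ only, and restores membership in $\cD_{Q^{(2)}}$ by adding an explicitly constructed correction function $\kappa\in H^1(D)$, built from the assumed $C^3$ extension of $\tilde P$ to a neighbourhood of $\bar D$ and estimated uniformly in $h$ --- this is exactly where the $C^3$ hypothesis is used. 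Your unitary diagonalisation idea could in principle replace the correction function, but only after an additional reformulation you do not supply: one must fold the square, treating $u(x,y)=U(y)\bigl(\psi(x,y),\psi(l-x,y)\bigr)^T$ as a $\kz^2$-valued solution of an elliptic system on $(0,l/2)\times(0,l)$, so that the nonlocal condition becomes a genuinely local Dirichlet-plus-Robin-type condition at $x=0$; only then are localised test functions admissible and the difference-quotient method applicable. Without the folding (or the paper's correction function), the edge step as written does not go through.
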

The proof is rather technical and we therefore defer it to the
appendix. 

We shall finally identify the self-adjoint realisations of the two-particle
Laplacian that represent actual interactions. For this purpose we have to 
identify among the self adjoint realisations $(-\Delta_2,\cD_2(P,L))$ those 
that arise as lifts of a one-particle Laplacians to $\cH_{2}$. Hence, let 
$P^{(1)}$ and $L^{(1)}$ be a projector and a self-adjoint operator on $\kz^2$, 
defining a domain \eqref{1partBCalt} for a one-particle Laplacian on $[0,l]$. 
The associated quadratic form and its domain is given by \eqref{Qform1} and
\eqref{Qformdomain}. Its lift to $\cH_{2}$ then is the quadratic form
\eqref{Qform2} defined on the domain \eqref{Defquad}, where
\begin{equation}
\label{PLblock}
 \tilde P(y) = \begin{pmatrix} P^{(1)} & 0 \\ 0 & P^{(1)} \end{pmatrix} 
 \quad\text{and}\quad
 \tilde L(y) = \begin{pmatrix} L^{(1)} & 0 \\ 0 & L^{(1)} \end{pmatrix}
\end{equation}
for all $y\in [0,l]$. Together
with Proposition~\ref{uniquePL} this now leads to a complete characterisation 
of the interacting vs.\ the non-interacting representations of the two-particle 
Laplacian.
\begin{prop}
Suppose that the matrix entries of $P:(0,l)\to \M(4,\kz)$ are in $C^1(0,l)$. 
Then the two-particle Laplacian $-\Delta_{2}$ with domain $\cD_{2}(P,L)$ 
represents no interactions, iff $P$ and $L$ are block-diagonal as in 
\eqref{PLblock} and are independent of $y$.
\end{prop}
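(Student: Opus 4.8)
The plan is to reduce both implications to a single computation — the identification of the quadratic form of a non-interacting (lifted) operator — together with the uniqueness statement of Proposition~\ref{uniquePL}. The central observation, call it $(\ast)$, is that the lift \eqref{factorisedobs} of a one-particle Laplacian with data $(P^{(1)},L^{(1)})$, namely $-\Delta_1\otimes\eins+\eins\otimes(-\Delta_1)$, is exactly the self-adjoint operator associated with the form $Q^{(2)}_{\hat P,\hat L}$ whose maps $\hat P=\diag(P^{(1)},P^{(1)})$ and $\hat L=\diag(L^{(1)},L^{(1)})$ are block-diagonal as in \eqref{PLblock} and independent of $y$. Granting $(\ast)$, the proposition follows quickly: since closed semi-bounded forms and semi-bounded self-adjoint operators correspond one-to-one \cite{Kat66}, a given operator $(-\Delta_2,\cD_2(P,L))$ is such a lift if and only if its form $Q^{(2)}_{P,L}$ coincides with some $Q^{(2)}_{\hat P,\hat L}$ of the above type, and Proposition~\ref{uniquePL} then forces $P=\hat P$ and $L=\hat L$.

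To establish $(\ast)$ I would compute the form of $H_2=-\Delta_1\otimes\eins+\eins\otimes(-\Delta_1)$ directly from the one-particle form $Q^{(1)}_{P^{(1)},L^{(1)}}$ of \eqref{Qform1}. The gradient terms of the two tensor factors add up to $\int_D(|\psi_x|^2+|\psi_y|^2)\,\ud x\,\ud y$, while the two boundary contributions $-\langle f_{bv},L^{(1)}f_{bv}\rangle$, lifted and integrated over the respective spectator variable, combine into $-\int_0^l\langle\psi_{bv}(y),\hat L\psi_{bv}(y)\rangle_{\kz^4}\,\ud y$ once the four traces are arranged as in \eqref{bvinterval}; this is precisely the boundary term of \eqref{Qform2} with $\hat L=\diag(L^{(1)},L^{(1)})$. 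Thus the form of $H_2$ has the functional shape of $Q^{(2)}_{\hat P,\hat L}$.

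The delicate point of $(\ast)$ is the identification of the form domain. The form domain of $H_2$ is the intersection of the form domains of its two summands $-\Delta_1\otimes\eins$ and $\eins\otimes(-\Delta_1)$. Using that a function lies in $H^1(D)$ precisely when it is $H^1$ in each variable separately, together with the one-particle form domain \eqref{Qformdomain}, one sees that this intersection consists of those $\psi\in H^1(D)$ satisfying $P^{(1)}(\psi(0,y),\psi(l,y))^T=0$ and $P^{(1)}(\psi(y,0),\psi(y,l))^T=0$ for a.e.\ $y$, i.e.\ $\hat P\psi_{bv}=0$ a.e. This is exactly $\cD_{Q^{(2)}}$ of \eqref{Defquad} for $\hat P=\diag(P^{(1)},P^{(1)})$. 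I expect this tensor-product form-domain identification, which rests on the standard construction of sums of lifted operators (cf.\ \cite{ReeSim72}), to be the main technical obstacle; once it is in place, $(\ast)$ is complete.

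Finally I assemble the two implications. For the ``if'' direction, assume $P,L$ are block-diagonal and $y$-independent as in \eqref{PLblock}; then by $(\ast)$ the operator associated with $Q^{(2)}_{P,L}$ is the lift $H_2$, which by construction \eqref{factorisedobs} carries no interaction. For the ``only if'' direction, assume $(-\Delta_2,\cD_2(P,L))$ represents no interactions, i.e.\ equals some lift $H_2$; by $(\ast)$ its form equals $Q^{(2)}_{\hat P,\hat L}$ with block-diagonal, $y$-independent $(\hat P,\hat L)$, and since the form is determined by the operator this form coincides with $Q^{(2)}_{P,L}$. Because $P$ has $C^1$ entries by hypothesis and $\hat P$ is constant, Proposition~\ref{uniquePL} applies on both sides and yields $P=\hat P$ and $L=\hat L$; hence $P$ and $L$ are of the form \eqref{PLblock} and independent of $y$, as claimed. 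Note that the $C^1$ hypothesis enters only at this last step, to license the uniqueness of the parametrisation.
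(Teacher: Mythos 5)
Your proof is correct and takes essentially the same route as the paper: the paper likewise identifies the lift of the one-particle form $Q^{(1)}_{P^{(1)},L^{(1)}}$ as the two-particle form \eqref{Qform2} on the domain \eqref{Defquad} with block-diagonal, $y$-independent maps \eqref{PLblock}, and then invokes the uniqueness of the $(P,L)$-parametrisation (Proposition~\ref{uniquePL}) to obtain both implications. The only difference is one of detail: you spell out the tensor-product identification of the lifted operator's form and form domain (via the intersection of the summands' form domains), which the paper asserts without proof.
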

We remark that, clearly, a Dirichlet-, a Neumann- or a standard
Robinon-Laplacian on $D$ represent no interactions. 
%
%
\subsection{Two particles on a general compact metric graph}
The construction of self-adjoint realisations of the two-particle
Laplacian on a general compact, metric graph will be based on the above
results for the interval. It will use the same methods and mainly involves
the introduction of a suitable notation.

For convenience we shall again first treat two distinguishable particles,
for which the two-particle Hilbert space is
\begin{equation}
\label{HilbertSpaceTwoParticles}
 \cH_2 = L^2(\Gamma)\otimes L^2(\Gamma) = \bigoplus_{e_1 e_2} L^2 (D_{e_1 e_2})\ .
\end{equation}
Here $D_{e_1 e_2}=(0,l_{e_1})\times (0,l_{e_2})$ denotes the rectangle on which
the component $\psi_{e_1 e_2}\in L^2 (D_{e_1 e_2})$ of
$\Psi =(\psi_{e_1 e_2})\in\cH_2$ is defined. In this sense one can view the 
vectors in $\cH_2$ as functions on the disjoint union
\begin{equation}
\label{DomainGeneralGraph}
 D_\Gamma := \dot{\bigcup_{e_1 e_2}} D_{e_1 e_2}
\end{equation}
of $E^2$ rectangles. With this motivation in mind we shall also denote $\cH_2$
as $L^2 (D_\Gamma)$, and other function spaces such as $H^m(D_\Gamma)$
accordingly. The trace map $\gamma$ defined on $H^1(D_\Gamma)$ then assigns
each function its boundary values, i.e., its values on the disjoint union
\begin{equation}\label{DomainGeneralGraph2}
 \partial D_\Gamma := \dot{\bigcup_{e_1 e_2}} \partial D_{e_1 e_2}
\end{equation}
of the boundaries of the rectangles $D_{e_1 e_2}$. 

A two-particle Laplacian $-\Delta_{2,0}$ can be defined on the domain
$\cD(-\Delta_{2,0}) = C^\infty_0 (D_\Gamma)$, on which it is a symmetric,
non-self adjoint operator. Following \eqref{NLaplace}, it acts on a state
$\Psi$ in its domain as
\begin{equation}
 (-\Delta_{2,0}\Psi)_{e_1 e_2}(x_{e_1},y_{e_2}) =
 -\psi_{e_1 e_2,xx}(x_{e_1},y_{e_2}) - \psi_{e_1 e_2,yy}(x_{e_1},y_{e_2})\ .
\end{equation}
The domain $\cD(-\Delta_{2,0}^\ast)$ of its adjoint $-\Delta_{2,0}^\ast$ is the
immediate analogue of \eqref{adjdom}.

Simplifying notation we shall sometimes use a rescaling of variables
in that we set
\begin{equation}
\label{lscale}
 \psi_{e_1 e_2}(x_{e_1},y_{e_2}) = \psi_{e_1 e_2}(l_{e_1}x,l_{e_2}y)
\end{equation}
with $x,y\in (0,1)$. This then requires to modify the $4E^2$ boundary values
of functions $\Psi\in H^1(D_\Gamma)$ and derivatives of functions
$\Psi\in H^2(D_\Gamma)$ as compared to \eqref{bvinterval},
\begin{equation}
\label{graphbv}
 \Psi_{bv}(y) =
 \begin{pmatrix}\sqrt{l_{e_2}}\psi_{e_1 e_2}(0,l_{e_2}y) \\
 \sqrt{l_{e_2}}\psi_{e_1 e_2}(l_{e_1},l_{e_2}y) \\
 \sqrt{l_{e_1}}\psi_{e_1 e_2}(l_{e_1}y,0) \\
 \sqrt{l_{e_1}}\psi_{e_1 e_2}(l_{e_1}y,l_{e_2})
 \end{pmatrix} \qquad\text{and}\qquad
 \Psi'_{bv}(y) =
 \begin{pmatrix}\sqrt{l_{e_2}}\psi_{e_1 e_2,x}(0,l_{e_2}y) \\
 -\sqrt{l_{e_2}}\psi_{e_1 e_2,x}(l_{e_1},l_{e_2}y)\\
 \sqrt{l_{e_1}}\psi_{e_1 e_2,y}(l_{e_1}y,0) \\
 -\sqrt{l_{e_1}}\psi_{e_1 e_2,y}(l_{e_1}y,l_{e_2})\end{pmatrix} \ .
\end{equation}
Here $y\in [0,1]$ and the indices $e_1 e_2$ run over all $E^2$ possible pairs
with $e_1 ,e_2 =1,\dots,E$. As in the case of two particles on an interval,
$\Psi_{bv}\in L^2(0,1)\otimes\kz^{4E^2}$ is a convenient way to parametrise the 
trace $\gamma\Psi\in L^2(D_\Gamma)$ of $\Psi\in H^1(D_\Gamma)$.

In order to formulate appropriate boundary conditions we introduce maps
$P,L: [0,1] \to \M(4E^2,\kz)$ such that
\begin{enumerate}
\item $P(y)$ is an orthogonal projector,
\item $L(y)$ is a self-adjoint endomorphism on $\ker P(y)$,
\end{enumerate}
for a.e. $y \in [0,1]$; moreover $Q(y)=\eins_{4E^2}-P(y)$. As previously, these
maps are required to be bounded and measurable. Moreover, operators $\Pi$ 
and $\Lambda$ can be defined on $L^2(0,1)\otimes\kz^{4E^2}$ through 
$(\Pi\chi)(y):=P(y)\chi(y)$ and $(\Lambda\chi)(y):=L(y)\chi(y)$, respectively. 
As $P$ and $L$ are bounded and measurable functions on $[0,1]$, these operators 
are bounded. Again, $\Pi$ is a projector and $\Lambda$ is self-adjoint.

This notation now allows to define the following domains for two-particle
Laplacians,
\begin{equation}
\label{bcgraph}
\begin{split}
 \cD_2 (P,L) := \{
       &\Psi\in H^2(D_\Gamma);\ P(y)\Psi_{bv}(y)=0\ \text{and}\\
       &\quad Q(y)\Psi'_{bv}(y)+L(y)Q(y)\Psi_{bv}(y)=0\ \text{for a.e.}\
          y\in [0,1] \}
\end{split}
\end{equation}
in close analogy to \eqref{bcinterval}. A representation of these domains
in terms of maps $A,B:[0,1]\to\M(4E^2,\kz)$ can be provided in the same
manner as for two particles on an interval.

In order to address the question of self-adjointness we follow the strategy
outlined above and first generate a suitable quadratic form,
\begin{equation}
\label{Qform2graph}
\begin{split}
 Q^{(2)}_{P,L}[\Psi] 
   &:= \langle  \nabla\Psi,\nabla\Psi \rangle_{L^2 (D_\Gamma)} - 
         \langle \Psi_{bv},\Lambda\Psi_{bv} \rangle_{L^2(0,1)\otimes\kz^{4E^2}} \\
   &=\sum_{e_1 ,e_2 =1}^E \int_{0}^{l_{e_2}}\int_0^{l_{e_1}}\Bigl( \bigl|
      \psi_{e_1 e_2,x}(x,y) \bigr|^2 + \bigl| \psi_{e_1 e_2,y}
      (x,y) \bigr|^2 \Bigr)\ \ud x\,\ud y \\
   &\qquad\qquad -\int_0^1 \langle\Psi_{bv}(y),L(y)\Psi_{bv}(y)
       \rangle_{\kz^{4E^2}} \ \ud y \ .
\end{split}
\end{equation}
This allows us to generalise Theorems~\ref{2quadform} and
\ref{2Laplace} to general compact metric graphs.
\begin{theorem}
\label{2quadformgraph}
Given maps $P,L:[0,1]\to \M(4E^2,\kz)$ as above that are bounded and measurable,
the quadratic form \eqref{Qform2graph} with domain
\begin{equation}
\label{Defquadgraph}
 \cD_{Q^{(2)}} = \{ \Psi \in H^1(D_\Gamma);\ P(y)\Psi_{bv}(y)=0\ \text{for a.e.}\
 y\in [0,1] \}
\end{equation}
is closed and semi-bounded.
\end{theorem}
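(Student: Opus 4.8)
The plan is to mirror the proof of Theorem~\ref{2quadform} for the interval, exploiting that $D_\Gamma$ is a disjoint union of $E^2$ rectangles $D_{e_1 e_2}$, each a Lipschitz domain. The decisive ingredient is again a trace estimate of the type \eqref{normeqI}, now on all of $D_\Gamma$. First I would observe that, by the substitution $s=l_{e_j}y$, the factors $\sqrt{l_{e_j}}$ built into \eqref{graphbv} are arranged precisely so that the $L^2(0,1)$-norm of each component of $\Psi_{bv}$ equals the $L^2$-norm of the corresponding trace on the respective side of $D_{e_1 e_2}$. Hence
\[
 \|\Psi_{bv}\|^2_{L^2(0,1)\otimes\kz^{4E^2}}
 = \sum_{e_1,e_2=1}^E \|\gamma\psi_{e_1 e_2}\|^2_{L^2(\partial D_{e_1 e_2})}\ .
\]
Applying the estimate \eqref{normeqI} on each rectangle separately and summing over all $E^2$ pairs then gives
\[
 \|\Psi_{bv}\|^2_{L^2(0,1)\otimes\kz^{4E^2}}
 \leq 4\Bigl( \tfrac{2}{\delta}\,\|\Psi\|^2_{L^2(D_\Gamma)}
 + \delta\,\|\nabla\Psi\|^2_{L^2(D_\Gamma)}\Bigr)\ ,
\]
valid for every $\delta\leq l_{\min}:=\min_e l_e$.

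For semi-boundedness I would then argue exactly as before. The boundary term in \eqref{Qform2graph} is bounded in modulus by $L_{max}\,\|\Psi_{bv}\|^2_{L^2(0,1)\otimes\kz^{4E^2}}$, with $L_{max}:=\sup_{y\in[0,1]}\|L(y)\|_{op}<\infty$ since $L$ is bounded. Inserting the trace estimate yields
\[
 Q^{(2)}_{P,L}[\Psi]\geq \bigl(1-4\delta L_{max}\bigr)\,
 \|\nabla\Psi\|^2_{L^2(D_\Gamma)}
 - \tfrac{8L_{max}}{\delta}\,\|\Psi\|^2_{L^2(D_\Gamma)}\ ,
\]
and choosing $\delta\leq\min\{l_{\min},\tfrac{1}{4L_{max}}\}$ renders the first coefficient non-negative, producing a constant $C_\infty>0$ with $Q^{(2)}_{P,L}[\Psi]\geq -C_\infty\|\Psi\|^2_{L^2(D_\Gamma)}$.

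Closedness proceeds along the same template. The trace estimate shows that the form norm $\|\cdot\|^2_{Q^{(2)}_{P,L}}=Q^{(2)}_{P,L}[\cdot]+(C_\infty+1)\|\cdot\|^2_{L^2(D_\Gamma)}$ is equivalent to the Sobolev norm of $H^1(D_\Gamma)$. For a form-Cauchy sequence $\{\Psi_n\}$ in $\cD_{Q^{(2)}}$, completeness of $H^1(D_\Gamma)$ provides a limit $\Psi\in H^1(D_\Gamma)$; continuity of the trace map (again encoded in the first display) gives $\Psi_{n,bv}\to\Psi_{bv}$ in $L^2(0,1)\otimes\kz^{4E^2}$, and boundedness of $\Pi$ lets me pass to the limit in $P(\cdot)\Psi_{n,bv}=0$ to conclude $P(y)\Psi_{bv}(y)=0$ for a.e.\ $y\in[0,1]$, i.e.\ $\Psi\in\cD_{Q^{(2)}}$.

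Since $E$ is finite and every rectangle is Lipschitz, no genuinely new analytic difficulty arises relative to the interval case; the one point requiring care is the first paragraph, namely verifying that the $\sqrt{l_{e_j}}$-normalisation makes the per-rectangle estimates sum to a single estimate with the constraint governed by the shortest edge $l_{\min}$. Once this uniform estimate is established, the semi-boundedness and closedness arguments are essentially verbatim transcriptions of those in Theorem~\ref{2quadform}.
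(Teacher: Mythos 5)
Your proposal is correct and follows exactly the route the paper takes: the paper's own proof of Theorem~\ref{2quadformgraph} simply states that the argument of Theorem~\ref{2quadform} carries over verbatim once the notation is adapted, and your write-up is precisely that adaptation, with the welcome extra care of checking that the $\sqrt{l_{e_j}}$ factors in \eqref{graphbv} turn the $L^2(0,1)\otimes\kz^{4E^2}$-norm into the sum of trace norms over the rectangles and that the per-rectangle estimates sum to a uniform bound valid for $\delta\leq l_{\min}$.
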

The proof can be taken over verbatim from Theorem~\ref{2quadform} when the 
notation is adapted to the slightly more complex situation of a general graph.

The procedure to extract the self-adjoint operator $H$ with domain $\cD(H)$
associated with the quadratic form $Q^{(2)}_{P,L}$ will be the same as in
Section~\ref{2partint}. The notation set out above allows us to copy
the procedure verbatim, only replacing $D$ by $D_\Gamma$. This first leads
to an abstract Green's operator associated with the trace map
$\gamma:H^1(D_\Gamma)\to L^2(\partial D_\Gamma)$, and then to the equivalent
of Proposition~\ref{abstrdomain}.
\begin{prop}
\label{abstrdomainG}
Let $H$ be the unique self-adjoint, semi-bounded operator corresponding to the
quadratic form $Q^{(2)}_{P,L}$. Then its domain is given by
\begin{equation}
\label{AbstractDefG}
 \cD(H) = \{ \Psi\in \cD_0 ; \ \partial_n \Psi[\gamma\Phi] +
    q_2 [ \Psi,\Phi ] = 0, \ \forall \Phi \in \cD_{Q^{(2)}} \}  \ .
\end{equation}
\end{prop}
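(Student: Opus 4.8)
The plan is to reproduce the argument that established Proposition~\ref{abstrdomain} for the interval, now carried out on the disjoint union $D_\Gamma$ of the $E^2$ rectangles. First I would invoke Theorem~\ref{2quadformgraph}, which guarantees that $Q^{(2)}_{P,L}$ with domain \eqref{Defquadgraph} is closed and semi-bounded. The representation theorem for closed, semi-bounded quadratic forms \cite{Kat66} then yields a unique self-adjoint operator $H$ whose domain $\cD(H)\subset\cD_{Q^{(2)}}$ consists of those $\Phi$ for which there is a (unique) $\chi\in L^2(D_\Gamma)$ with $Q^{(2)}_{P,L}[\Phi,\Psi]=\langle\chi,\Psi\rangle_{L^2(D_\Gamma)}$ for all $\Psi\in\cD_{Q^{(2)}}$, in which case $\chi=H\Phi$. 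This is the exact analogue of \eqref{abstractBVP}.

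Next I would cast this identity as an abstract boundary value problem in the sense of \cite{Sho77}. Splitting the sesquilinear form as $Q^{(2)}_{P,L}=q_1+q_2$ according to the two terms in \eqref{Qform2graph}, the second term $q_2$ is a boundary contribution built from the trace map $\gamma:H^1(D_\Gamma)\to L^2(\partial D_\Gamma)$. Since $D_\Gamma$ is a finite disjoint union of rectangles $D_{e_1 e_2}$, each of which is a Lipschitz domain, the trace map acts componentwise and the trace estimate \eqref{trThm} holds on each rectangle; after the rescaling \eqref{lscale} with the boundary values collected as in \eqref{graphbv}, these component estimates combine into a single bound on $L^2(0,1)\otimes\kz^{4E^2}$, so that $\gamma$ is continuous on $\cD_{Q^{(2)}}$ equipped with the form-norm and has kernel $\ker\gamma=\cD_{Q^{(2)}}\cap H^1_0(D_\Gamma)$. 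Setting $\cD_0:=\{\Psi\in\cD_{Q^{(2)}};\ \Delta_2\Psi\in L^2(D_\Gamma)\}$, I would then define the abstract Green's operator $\partial_n:\cD_0\to(\ran\gamma)'$ through the relation $q_1[\Psi,\Phi]-\langle-\Delta_2\Psi,\Phi\rangle_{L^2(D_\Gamma)}=\partial_n\Psi[\gamma\Phi]$, exactly as in \eqref{abstract1Green}.

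With these ingredients in place, applying Theorem~3.A of \cite{Sho77} to the abstract boundary value problem delivers the characterisation \eqref{AbstractDefG} of $\cD(H)$. The only point that genuinely requires attention, and hence the main potential obstacle, is to confirm that Showalter's abstract framework applies on a disjoint union rather than on a single connected domain: one must check that the form is densely defined and closed (already supplied by Theorem~\ref{2quadformgraph}), that $\gamma$ is continuous with the kernel just stated, and that $\partial_n$ is well defined on $\cD_0$. Because the Hilbert space $L^2(D_\Gamma)$, the Sobolev spaces $H^m(D_\Gamma)$, the trace map $\gamma$ and the Laplacian $-\Delta_2$ all decompose as orthogonal direct sums over the $E^2$ rectangles, each of these hypotheses reduces to its single-rectangle counterpart treated in Section~\ref{2partint}. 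Consequently the proof transfers verbatim, with $D$ replaced throughout by $D_\Gamma$ and the matrix size $4$ replaced by $4E^2$.
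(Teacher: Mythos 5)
Your proposal is correct and follows essentially the same route as the paper: the paper itself states that the procedure of Section~\ref{2partint} (Kato's representation theorem, the splitting $Q^{(2)}_{P,L}=q_1+q_2$, the abstract Green's operator $\partial_n$ on $\cD_0$, and Theorem~3.A of \cite{Sho77}) is to be copied verbatim with $D$ replaced by $D_\Gamma$. Your additional remarks verifying that the trace map, the Sobolev spaces and the Laplacian decompose componentwise over the $E^2$ rectangles simply make explicit why that verbatim transfer is legitimate.
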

Again, one would like to have a more explicit description of the domain. This
is available when $\Psi\in H^2(D_\Gamma)$ as then
\begin{equation}
\label{normalderivG}
 \partial_n \Psi(\gamma\Phi) = \sum_{e_1 e_2}\int_{\partial D_{e_1 e_2}}
 \frac{\partial \bar{\psi}_{e_1 e_2}}{\partial n}\,\phi_{e_1 e_2}\ \ud s\ .
\end{equation}
The Definition~\ref{DEFRegular} of regularity can be extended in an obvious 
way, specifying cases in which the domain of the self-adjoint operator can be 
given more explicitly by using \eqref{normalderivG}. As a result, one obtains
the following statement, which can be proved in complete analogy to 
Lemma~\ref{dense1}. 
\begin{lemma}
\label{dense2}
Let $P:(0,1)\to \M(4E^2,\kz)$ be such that its matrix entries are in
$C^1(0,1)$, then $\ran(\gamma|_{\cD_{Q^{(2)}}})$ is dense in $\ker\Pi$ with respect 
to the norm of $L^2 (0,1)\otimes\kz^{4E^2}$.
\end{lemma}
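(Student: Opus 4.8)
The plan is to mimic the proof of Lemma~\ref{dense1} step by step, adapting the one-rectangle argument to the disjoint union $D_\Gamma$ of $E^2$ rectangles while keeping track of the larger boundary-value space $L^2(0,1)\otimes\kz^{4E^2}$. The key observation is that the trace map $\gamma$ on $H^1(D_\Gamma)$ decomposes into independent trace maps on each rectangle $D_{e_1 e_2}$, and that the boundary-value parametrisation \eqref{graphbv} is exactly set up so that $\Psi_{bv}\in L^2(0,1)\otimes\kz^{4E^2}$ arranges these componentwise traces into a single vector-valued function of $y\in[0,1]$. Since $P$ acts fibrewise (via the multiplication operator $\Pi$), the condition $\Psi\in\cD_{Q^{(2)}}$ is again purely a pointwise-in-$y$ condition on the boundary values, with no coupling between different values of $y$.

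First I would take an arbitrary $\chi\in\ker\Pi\subset L^2(0,1)\otimes\kz^{4E^2}$ and, using density of $C_0^\infty(0,1)\otimes\kz^{4E^2}$, approximate it by a sequence $\{\chi_n\}$ in that smooth subspace. Next, exactly as in Lemma~\ref{dense1}, each $\chi_n$ can be realised as the boundary-value vector $\Psi_{n,bv}$ of some $\Psi_n\in H^1(D_\Gamma)$: one simply extends each of the $4E^2$ scalar components of $\chi_n$ to a function on the appropriate rectangle, which is possible because the trace map on $H^1$ of a rectangle is surjective onto (a dense subspace of) the boundary $L^2$ and the rectangles are disjoint, so the extensions can be chosen independently. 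Applying the complementary projector $\Pi^\perp=\eins-\Pi$, and invoking the hypothesis that the matrix entries of $P$ are $C^1(0,1)$, the truncated vector $\Pi^\perp\chi_n$ still lies in $C_0^1(0,1)\otimes\kz^{4E^2}$ and hence is itself the boundary-value vector $\Phi_{n,bv}$ of some $\Phi_n\in H^1(D_\Gamma)$. By construction $P(y)\Phi_{n,bv}(y)=P(y)\Pi^\perp\chi_n(y)=0$, so $\Phi_n\in\cD_{Q^{(2)}}$ and therefore $\Pi^\perp\chi_n\in\ran(\gamma|_{\cD_{Q^{(2)}}})$.

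Finally I would close the argument by estimating, using boundedness of $\Pi$ (so that $\|\Pi^\perp\|_{op}\le K$ for some $K>0$) and the fact that $\chi\in\ker\Pi$ gives $\Pi^\perp\chi=\chi$,
\begin{equation*}
 \|\Pi^\perp\chi_n-\chi\|_{L^2(0,1)\otimes\kz^{4E^2}}
 =\|\Pi^\perp(\chi_n-\chi)\|_{L^2(0,1)\otimes\kz^{4E^2}}
 \leq K\,\|\chi_n-\chi\|_{L^2(0,1)\otimes\kz^{4E^2}}\to 0
\end{equation*}
as $n\to\infty$, which exhibits $\chi$ as a limit of elements of $\ran(\gamma|_{\cD_{Q^{(2)}}})$ and hence proves density.

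The proof is genuinely routine once the notation is in place, since nothing couples the rectangles and the fibrewise structure in $y$ is preserved verbatim. The only point requiring a little care—and the one I would flag as the main (if modest) obstacle—is the $C^1$ regularity step: one must confirm that $\Pi^\perp\chi_n$ genuinely stays in $C_0^1(0,1)\otimes\kz^{4E^2}$, i.e.\ that multiplying a compactly supported smooth vector by the $C^1$-matrix-valued function $Q(\cdot)=\eins_{4E^2}-P(\cdot)$ preserves both the $C^1$ smoothness and the compact support in $(0,1)$. This is where the hypothesis on the matrix entries of $P$ is used, and it is the same mechanism as in Lemma~\ref{dense1}, now applied to $4E^2\times 4E^2$ rather than $4\times 4$ matrices.
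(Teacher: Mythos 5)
Your proposal is correct and is essentially identical to the paper's own treatment: the paper proves Lemma~\ref{dense2} simply by declaring it ``in complete analogy to Lemma~\ref{dense1}'', and your argument reproduces exactly that analogy (smooth approximation in $\ker\Pi$, extension of boundary data to $H^1(D_\Gamma)$ rectangle by rectangle, projection by $\Pi^\perp$ using the $C^1$ hypothesis, and the final norm estimate with $\Pi^\perp\chi=\chi$). Nothing further is needed.
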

If the condition in the lemma is fulfilled and the quadratic form is
regular we obtain an analogue of Theorem~\ref{2Laplace}.
\begin{theorem}
\label{2Laplacegraph}
Suppose that the matrix entries of $P:(0,1)\to \M(4E^2,\kz)$ are in $C^1(0,1)$
and that the quadratic form $Q^{(2)}_{P,L}$ is regular. Then the unique 
self-adjoint, semi-bounded operator $H$ that is associated with this form is 
the two-particle Laplacian $-\Delta_2$ with domain $\cD_2 (P,L)$.
\end{theorem}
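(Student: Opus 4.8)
The plan is to mirror the proof of Theorem~\ref{2Laplace} verbatim, since the general graph case differs only in notational complexity: instead of a single rectangle $D$ we work on the disjoint union $D_\Gamma=\dot{\bigcup}_{e_1 e_2} D_{e_1 e_2}$, and the boundary-value vectors live in $L^2(0,1)\otimes\kz^{4E^2}$ rather than $L^2(0,l)\otimes\kz^4$. Because the rectangles are disjoint, the Laplacian, the trace map $\gamma$, and all integrations by parts decouple across the index pairs $(e_1,e_2)$, so no genuinely new analytic phenomenon arises.

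First I would invoke the representation theorem for quadratic forms (as in Proposition~\ref{abstrdomainG}) to obtain the unique self-adjoint, semi-bounded operator $H$ associated with $Q^{(2)}_{P,L}$, whose domain is characterised abstractly via the Green's operator $\partial_n$. Under the regularity hypothesis, every $\Psi\in\cD(H)$ lies in $H^2(D_\Gamma)$, so $\partial_n$ becomes the genuine normal derivative given by \eqref{normalderivG}. I would then test \eqref{AbstractDefG} against $\Phi\in C_0^\infty(D_\Gamma)$ vanishing near $\partial D_\Gamma$: integration by parts on each rectangle (with no boundary contribution) shows that $H$ acts as the componentwise two-particle Laplacian $-\Delta_2$, and hence $\cD(H)\subset\cD(-\Delta_{2,0}^\ast)$.

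Next I would test against general $\Phi\in\cD_{Q^{(2)}}$ that do not vanish near the boundary. Integration by parts on each $D_{e_1 e_2}$ produces, in addition to the bulk term, the boundary contribution
\begin{equation}
 -\int_0^1 \langle \Psi'_{bv}(y)+L(y)\Psi_{bv}(y),\Phi_{bv}(y)\rangle_{\kz^{4E^2}}\ \ud y\ ,
\end{equation}
which must vanish for all admissible $\Phi$. The constraint $P(y)\Phi_{bv}(y)=0$ means $\Phi_{bv}$ lies in $\ker\Pi$, and by Lemma~\ref{dense2} the range $\ran(\gamma|_{\cD_{Q^{(2)}}})$ is dense in $\ker\Pi$. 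Consequently $\Psi'_{bv}+L(\cdot)\Psi_{bv}$ must lie in $(\ker\Pi)^\perp=\ran\Pi$, i.e.\ it is annihilated by $\Pi^\perp$, yielding the boundary condition
\begin{equation}
 Q(y)\Psi'_{bv}(y)+Q(y)L(y)\Psi_{bv}(y)=0\ \text{for a.e.}\ y\in[0,1]\ .
\end{equation}
Since $L(y)$ is by assumption an endomorphism of $\ran Q(y)$, comparison with \eqref{bcgraph} gives $\cD(H)=\cD_2(P,L)$, completing the identification.

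The only point requiring care, and thus the main (if mild) obstacle, is the density step: I must ensure that Lemma~\ref{dense2} genuinely supplies enough test functions $\Phi$ so that vanishing of the boundary integral forces the pointwise condition almost everywhere, rather than merely on a dense subspace. This is handled exactly as in the interval case, relying on the $C^1$-regularity of $P$ so that $\Pi^\perp$ maps smooth compactly supported boundary data back into $\ker\Pi$; the disjointness of the rectangles means this extension can be performed independently on each component, so the argument transfers without modification.
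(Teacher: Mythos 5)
Your proposal is correct and follows essentially the same route as the paper: the paper's proof of Theorem~\ref{2Laplacegraph} likewise reduces to the interval argument of Theorem~\ref{2Laplace}, performing the componentwise integration by parts to identify $H$ with $-\Delta_2$, extracting the boundary term \eqref{orthocondgraph}, and invoking the density statement of Lemma~\ref{dense2} together with the fact that $L(y)$ is an endomorphism of $\ran Q(y)$ to recover the conditions in \eqref{bcgraph}. Your explicit treatment of the two classes of test functions (vanishing versus non-vanishing near $\partial D_\Gamma$) and of the step from density to the a.e.\ pointwise condition merely spells out what the paper leaves implicit.
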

\begin{proof}
The proof is in close analogy to the proof of Theorem~\ref{2quadformgraph},
and leads to an obvious generalisation of \eqref{partintQ}. Performing the
integration by part with some $\Psi\in\cD_{Q^{(2)}}$ that does not vanish in a
neighbourhood of $\partial D_\Gamma$ one obtains the additional term
\begin{equation}
\label{orthocondgraph}
 -\int_0^1 \langle \Phi'_{bv}(y)+L(y)\Phi_{bv}(y),\Psi_{bv}(y) \rangle_{\kz^{4E^2}}
  \ \ud y = -\langle \Phi'_{bv} +L\Phi_{bv},\Psi_{bv}
  \rangle_{L^2(0,1)\otimes\kz^{4E^2}}\ .
\end{equation}
This is the explicit expression for
$\partial_n \Psi[\gamma\Phi]+q_2 [\Psi,\Phi]$ and is required to vanish. Again,
the fact that $\ran(\gamma|_{\cD_{Q^{(2)}}})$ is dense in $\ker\Pi$ implies the 
condition
\begin{equation}
 Q(y)\Psi'_{bv}(y)+L(y)Q(y)\Psi_{bv}(y)=0\ \text{for a.e.}\ y\in [0,1]
\end{equation}
in \eqref{bcgraph}. 
\end{proof}
Uniqueness of the parametrisation of the quadratic form in terms of maps $P$ 
and $L$, see \eqref{Qform2graph} and \eqref{Defquadgraph} follows in the same
way as on an interval, see Proposition~\ref{uniquePL}.
\begin{prop}
\label{uniquePLgraph}
Suppose that the matrix entries of $P:(0,1)\to \M(4E^2,\kz)$ are in $C^1(0,1)$. 
Then the parametrisation of the quadratic form $Q^{(2)}_{P,L}$ in terms of $P$ 
and $L$ according to \eqref{Qform2graph} and \eqref{Defquadgraph} is unique 
with this property.
\end{prop}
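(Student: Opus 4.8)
The plan is to adapt the two-part argument of Proposition~\ref{uniquePL} to the graph setting, where the essential structure is identical but the dimension has grown from $4$ to $4E^2$. The statement asserts uniqueness of the pair $(P,L)$, so I would prove uniqueness of $P$ first and then of $L$, in that order, since the domain $\cD_{Q^{(2)}}$ depends only on $P$ while the form values depend on both.

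For the uniqueness of $P$, suppose that the same form domain \eqref{Defquadgraph} is realised by two maps $P_1,P_2:(0,1)\to\M(4E^2,\kz)$ with $C^1$ matrix entries. These induce multiplication operators $\Pi_1,\Pi_2$ on $L^2(0,1)\otimes\kz^{4E^2}$, and if $P_1\neq P_2$ as maps then $\ker\Pi_1\neq\ker\Pi_2$, so without loss of generality I can pick $\chi\in\ker\Pi_1$ with $\chi\notin\ker\Pi_2$. By Lemma~\ref{dense2}, $\ran(\gamma|_{\cD_{Q^{(2)}}})$ is dense in $\ker\Pi_1$, so there is a sequence $\{\Phi_n\}\subset\cD_{Q^{(2)}}$ with $\Phi_{n,bv}\to\chi$ in $L^2(0,1)\otimes\kz^{4E^2}$. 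But membership in $\cD_{Q^{(2)}}$ with respect to \emph{both} parametrisations forces $\Phi_{n,bv}\in\ker\Pi_1\cap\ker\Pi_2$, and in particular $\Phi_{n,bv}\in\ker\Pi_2$; since $\ker\Pi_2$ is closed, the limit $\chi$ lies in $\ker\Pi_2$ as well, contradicting the choice of $\chi$. Hence $P_1=P_2$. This is the one place where the $C^1$ hypothesis genuinely enters, since it is exactly what makes Lemma~\ref{dense2} applicable, and it is the step I expect to carry the weight of the argument.

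For the uniqueness of $L$, fix the (now unique) map $P$ and suppose two maps $L_1,L_2$ both represent the form \eqref{Qform2graph} on $\cD_{Q^{(2)}}$. Subtracting the two expressions for $Q^{(2)}_{P,L}$ kills the gradient term and leaves
\begin{equation*}
 \langle \Phi_{bv},(\Lambda_1-\Lambda_2)\Phi_{bv}\rangle_{L^2(0,1)\otimes\kz^{4E^2}}=0
 \qquad\text{for all}\ \Phi\in\cD_{Q^{(2)}}.
\end{equation*}
Because each $L_j(y)$ is by convention an endomorphism of $\ker P(y)$ extended by zero on $(\ker P(y))^\perp=\ran P(y)$, the operator $\Lambda_1-\Lambda_2$ already vanishes on $\ran\Pi$, so its action is determined entirely by its behaviour on $\ker\Pi$. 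Applying Lemma~\ref{dense2} once more, the boundary values $\Phi_{bv}$ range over a dense subset of $\ker\Pi$; combined with a standard polarisation argument to pass from the quadratic vanishing to the full sesquilinear form, and the self-adjointness of each $\Lambda_j$, this forces $\Lambda_1-\Lambda_2=0$ on $\ker\Pi$ and hence everywhere, giving $L_1=L_2$ almost everywhere. The only genuine obstacle is ensuring that the density furnished by Lemma~\ref{dense2} really does pin down the \emph{pointwise} maps $L(y)$ rather than merely the operator $\Lambda$; this is resolved by noting that $\Lambda_1=\Lambda_2$ as multiplication operators is equivalent to $L_1(y)=L_2(y)$ for a.e.\ $y$, so nothing beyond the interval-case reasoning of Proposition~\ref{uniquePL} is needed.
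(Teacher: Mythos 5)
Your proposal is correct and follows essentially the same route as the paper: the paper proves Proposition~\ref{uniquePLgraph} simply by noting it carries over verbatim from Proposition~\ref{uniquePL}, whose two-step argument (uniqueness of $P$ via Lemma~\ref{dense2} and closedness of $\ker\Pi_2$, then uniqueness of $L$ via the same density and the vanishing of $L_j(y)$ on $\bigl(\ker P(y)\bigr)^\perp$) is exactly what you reproduce. Your added details -- the continuity/polarisation step passing from the vanishing quadratic form on a dense subset of $\ker\Pi$ to $\Lambda_1=\Lambda_2$, and the equivalence of operator equality with $L_1(y)=L_2(y)$ a.e.\ -- merely make explicit what the paper leaves implicit.
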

We are now going to establish the regularity of the quadratic form 
$Q^{(2)}_{P,L}$ in a class of examples that are similar to 
Theorem~\ref{TheoremPL(y)1}. We assume that the projectors $P(y)$ possess
a block-diagonal form in analogy to \eqref{Passum}, where the two blocks
$\tilde P$ correspond to the $2E^2$-dimensional subspaces spanned by the upper 
and lower half, respectively, of the components in \eqref{graphbv}. We also 
suppose that the matrix entries of $\tilde P$ are in $C^3(0,1)$ with an 
extension of class $C^3$ to some interval $(-\eta,1+\eta)$, $\eta>0$. As in 
the case of an interval we need to impose either Dirichlet, Neumann or mixed 
Dirichlet-Neumann boundary conditions near corners of the rectangles
$D_{e_1e_2}$. This will be achieved when $\tilde P(y)$ is diagonal and the 
diagonal entries are either zero or one.
\begin{theorem}
\label{TheoremPL(y)2}
Let $L$ be Lipschitz continuous on $[0,1]$ and let $P$ be of the block-diagonal
form \eqref{Passum}. Assume that the matrix entries of $\tilde P$ are in 
$C^3(0,1)$ and possess extensions of class $C^3$ to some interval
$(-\eta,1+\eta)$, $\eta>0$. Moreover, when 
$y\in [0,\varepsilon_{1}]\cup [l-\varepsilon_{2},l]$, with some 
$\varepsilon_{1},\varepsilon_{2} > 0$, suppose that $L(y)=0$ and that 
$\tilde P(y)$ is diagonal with diagonal entries that are either zero or one. 
Then the quadratic form $Q^{(2)}_{P,L}$ is regular.
\end{theorem}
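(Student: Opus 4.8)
The plan is to reduce Theorem~\ref{TheoremPL(y)2} to the interval case Theorem~\ref{TheoremPL(y)1} by exploiting the direct-sum structure of the Hilbert space. Since $\cH_2 = \bigoplus_{e_1 e_2} L^2(D_{e_1 e_2})$ decomposes as a finite orthogonal sum over the $E^2$ rectangles, and since $-\Delta_2$ acts componentwise, regularity is a local and componentwise property: it suffices to establish $H^2$-regularity of each component $\psi_{e_1 e_2}$ on its own rectangle $D_{e_1 e_2}$. The subtlety, and the reason this does not reduce verbatim, is that the boundary conditions encoded by $P(y)$ and $L(y)$ couple the boundary values of different components through the $4E^2 \times 4E^2$ matrix structure, so the rectangles are not genuinely independent boundary value problems. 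The first step is therefore to observe that $H^2$-regularity is nonetheless an interior-and-boundary elliptic estimate that can be localised: one covers each $\overline{D_{e_1 e_2}}$ by finitely many open sets, distinguishing interior patches, edge (non-corner boundary) patches, and corner patches, and establishes the requisite estimate on each patch by a partition of unity.

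\medskip
First I would dispose of the interior and non-corner boundary patches. On interior patches, standard interior elliptic regularity for $-\Delta_2$ gives $H^2$ control without any reference to the boundary conditions. On edge patches away from corners, only one of the two Cartesian directions reaches the boundary, so after flattening the geometry (which is trivial here, the domain already being a rectangle) the boundary condition along that edge is either Dirichlet, Neumann, or a Robin-type condition $Q\Psi'_{bv} + LQ\Psi_{bv}=0$ with Lipschitz $L$; in each case one invokes the known $H^2$-regularity for the half-space (or half-strip) model problem with these constant- or Lipschitz-coefficient conditions, using the Lipschitz continuity of $L$ to control the lower-order boundary term via a difference-quotient argument in the tangential variable. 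The block-diagonal form \eqref{Passum} of $P$ guarantees that the $x$-boundaries and the $y$-boundaries are governed by the same $\tilde P$, so no new phenomenon arises relative to the one-particle edge conditions.

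\medskip
The crux, exactly as in Theorem~\ref{TheoremPL(y)1}, is the corner patches, where two boundary edges meet and the regularity theory for general second-order conditions can fail. This is precisely why the hypotheses force $L(y)=0$ and $\tilde P(y)$ to be diagonal with entries in $\{0,1\}$ on the corner neighbourhoods $y\in[0,\varepsilon_1]\cup[1-\varepsilon_2,1]$: near each corner the boundary condition degenerates to a pure Dirichlet, pure Neumann, or mixed Dirichlet--Neumann condition, for which $H^2$-regularity on the square (or on the model quarter-plane) is classical. The only genuinely new ingredient compared to the interval is bookkeeping: one must check that the diagonal $\{0,1\}$-structure of $\tilde P$ assigns to each of the four corners of every rectangle $D_{e_1 e_2}$ a consistent Dirichlet-or-Neumann label on each of the two adjacent edges, so that the corner is of a type covered by the classical result. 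Because $\tilde P$ is common to both $x$- and $y$-blocks, the corner data for a rectangle $D_{e_1 e_2}$ are determined by the boundary labels inherited from the two factor edges $e_1$ and $e_2$, and diagonality rules out the problematic cross-terms that would mix the two directions at the corner.

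\medskip
Finally I would assemble the local estimates. Summing the patchwise $H^2$-bounds against the partition of unity, and using the $C^3$-regularity of $\tilde P$ (with its $C^3$-extension past the endpoints, which permits differentiating the flattening/cutoff construction near the corners without boundary-layer losses), one obtains a global estimate $\|\Psi\|_{H^2(D_\Gamma)} \leq c\,(\|{-\Delta_2}\Psi\|_{L^2(D_\Gamma)} + \|\Psi\|_{L^2(D_\Gamma)})$ valid for $\Psi\in\cD(H)$. This shows $\cD(H)\subset H^2(D_\Gamma)$, which is exactly the definition of regularity. The main obstacle is the corner analysis: everything else is a routine transcription of the one-particle and interval arguments, but ensuring that the degeneration of the boundary condition near corners genuinely lands in the classically regular cases — and that the $C^3$ smoothness of $\tilde P$ suffices to control the transition regions $[0,\varepsilon_1]$ and $[1-\varepsilon_2,1]$ where $\gamma(y)\to 0$ — is where the real work lies, which is why the authors defer the full argument to the appendix.
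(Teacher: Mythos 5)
Your overall architecture (interior elliptic regularity, tangential difference quotients along the non-corner boundary, classical results at the corners where the conditions degenerate to Dirichlet/Neumann/mixed) matches the paper's: the appendix proves the interval case by exactly this away-from-corners/at-corners split, and the paper then states that the graph case is the same argument with adapted notation. However, there is a genuine gap at the step you dismiss as citable. On an edge patch the boundary condition is \emph{not} ``either Dirichlet, Neumann, or a Robin-type condition'': it is the vector-valued condition $P(y)\Psi_{bv}(y)=0$, $Q(y)\Psi'_{bv}(y)+L(y)Q(y)\Psi_{bv}(y)=0$, in which the orthogonal splitting of $\kz^{4E^2}$ into a ``Dirichlet part'' $\ran P(y)$ and a ``Robin part'' $\ran Q(y)$ itself rotates with the tangential variable $y$ (cf.\ the rank-one block \eqref{rkPone} with non-constant $\beta(y),\gamma(y)$). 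There is no off-the-shelf half-space result for such conditions, and this is precisely where the paper's work lies: a tangential difference quotient $D^{-h}_{y}\tau^{2}D^{+h}_{y}\phi$ of a function in the form domain no longer satisfies the constraint $P(y)(\cdot)_{bv}(y)=0$, because $P(y\pm h)\neq P(y)$. The appendix repairs this by constructing a correction function $\kappa\in H^1$ such that $D^{-h}_{y}\tau^{2}D^{+h}_{y}\phi+\kappa\in\cD_{Q^{(2)}}$ and then proving $h$-uniform bounds \eqref{kappabound} on $\kappa$ and $\nabla\kappa$. That construction is what consumes the hypotheses your proposal leaves idle: the block form \eqref{Passum} is needed to extend $\tilde P(y)$ to a matrix-valued function $\tilde\cP(x,y)$ of class $C^3$ on a neighbourhood of the closed rectangles (so that $\kappa$ can be defined in the interior, not merely on the boundary), and the $C^3$ smoothness enters through second-order Taylor expansions of $\tilde\cP$ whose remainders must still be differentiated once to control $\nabla\kappa$. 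Lipschitz regularity of the projector, which is all your ``known model problem'' would tolerate, is not enough for this scheme.

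Relatedly, you misattribute the $C^3$ hypothesis to the corner analysis (``permits differentiating the flattening/cutoff construction near the corners''). Near the corners $\tilde P$ is constant (diagonal with $0/1$ entries) and $L=0$, so no smoothness of $P$ is used there at all; the corners are handled, as you say and as the paper does, by citing classical results for Dirichlet, Neumann and mixed conditions on domains with corners. The $C^3$ assumption is consumed entirely in the edge-patch argument described above. Without supplying that argument --- or an equivalent treatment of the $y$-dependent projector --- your proof reduces the theorem to a model problem that is not in fact known.
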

The proof of this result is essentially the same as that of 
Theorem~\ref{TheoremPL(y)1}, see the appendix, and mainly involves suitable 
changes in the notation. 
%
%
%
%
\section{Two identical particles on a graph}
\label{sec3}
Implementing exchange symmetry for two identical particles requires the
bosonic as well as the fermionic two-particle Hilbert spaces $\cH_{2,B}$ and 
$\cH_{2,F}$, respectively, as well as a projection of the Hamiltonian
operators and their domains to $\cH_{2,B/F}$.
%
%
\subsection{Bosons and fermions on a general graph}
The Hilbert spaces $\cH_{2,B/F}$ consist of states $\Psi=(\psi_{e_1 e_2})\in\cH_2$
that are (anti-) symmetric under a particle exchange. This means that the $E^2$
components $\psi_{e_1 e_2}\in L^2(D_{e_1 e_2})$ satisfy the relations
\begin{equation}
 \psi_{e_1 e_2}(x_{e_1},y_{e_2})=\pm\psi_{e_2 e_1}(y_{e_2},x_{e_1})\ ,
\end{equation}
where the '$+$' corresponds to the bosonic case, and the '$-$' applies in
the fermionic case.
Notice that these conditions relate pairs of functions when $e_1\neq e_2$ and 
impose conditions on single functions when $e_1 =e_2$. Therefore, only
$\tfrac{1}{2}E(E+1)$ of the $E^2$ components of $\Psi$ in $\cH_{2,B}$ or 
$\cH_{2,F}$ are independent. In order to keep the notation simpler, however, we 
shall keep all components in what follows. We shall then use $L^2(D_\Gamma)_s$ 
to denote the subspace of symmetric functions (with respect to the particle 
exchange), and $L^2(D_\Gamma)_a$ to denote the subspace of anti-symmetric 
functions i.e., $\cH_{2,B}=L^2(D_\Gamma)_s$ and $\cH_{2,F}=L^2(D_\Gamma)_a$.

When the components are in $H^m (D_{e_1 e_2})$, $m=1,2$, they fulfil
\begin{equation}
\label{Symmetrie_gen}
 \psi_{e_1 e_2,x}(x_{e_1},y_{e_2}) = \pm \psi_{e_2 e_1,y}(y_{e_2},x_{e_1}) \quad\text{and}
 \quad \psi_{e_1 e_2,xx}(x_{e_1},y_{e_2}) = \pm\psi_{e_2 e_1,yy}(y_{e_2},x_{e_1}) \ .
\end{equation}
We denote the spaces of these functions as $H^m(D_\Gamma)_s$ and $H^m(D_\Gamma)_s$,
respectively.

In order to realise a particle exchange symmetry of the quadratic forms we
have to ensure a particular structure of the maps $P$ and $L$. They have to
consist of two identical, diagonal $2E^2\times 2E^2$-blocks,
\begin{equation}
\label{2block}
 M(y) = \begin{pmatrix} \tilde M(y) & 0 \\ 0 & \tilde M(y) \end{pmatrix} \ ,
\end{equation}
where $M(y)$ either denotes $P(y)$ or $L(y)$, compare also \eqref{Passum}. 
Following \eqref{graphbv}, these blocks correspond to the subspaces of 
boundary values spanned by the upper half and the lower half, respectively, 
of the components of $\Psi_{bv}$ or $\Psi_{bv}'$.

Making use of the rescaling \eqref{lscale} and the symmetry relations
\eqref{Symmetrie_gen} we introduce $2E^2$-component boundary values
\begin{equation}
\label{graphbvB}
 \tilde\Psi_{bv}(y) =
 \begin{pmatrix}\sqrt{l_{e_2}}\psi_{e_1 e_2}(0,l_{e_2}y) \\
 \sqrt{l_{e_2}}\psi_{e_1 e_2}(l_{e_1},l_{e_2}y) \end{pmatrix}
 \qquad\text{and}\qquad
 \tilde\Psi'_{bv}(y) =
 \begin{pmatrix}\sqrt{l_{e_2}}\psi_{e_1 e_2,x}(0,l_{e_2}y) \\
 -\sqrt{l_{e_2}}\psi_{e_1 e_2,x}(l_{e_1},l_{e_2}y)\end{pmatrix}
\end{equation}
of functions $\Psi\in H^1(D_\Gamma)_{s/a}$ and derivatives of functions
$\Psi\in H^2(D_\Gamma)_{s/a}$. In both symmetry classes these
boundary values are sufficient to serve as representatives for traces 
$\gamma\Psi$ of functions $\Psi\in H^1(D_\Gamma)_{s/a}$.

With these notions the quadratic form \eqref{Qform2graph} reads
\begin{equation}
\label{Qform2graphB}
\begin{split}
 Q^{(2),B/F}_{P,L}[\Psi] 
  &= 2\sum_{e_1 ,e_2 =1}^E \int_{0}^{l_{e_2}}\int_0^{l_{e_1}}\bigl|
      \psi_{e_1 e_2,x}(x,y) \bigr|^2 \ \ud x\,\ud y \\
  &\quad -2\int_0^1 \langle\tilde\Psi_{bv}(y),\tilde L(y)\tilde\Psi_{bv}(y)
       \rangle_{\kz^{2E^2}} \ \ud y \ ,
\end{split}
\end{equation}
and is defined on either of the domains
\begin{equation}
\label{DefGquadB}
 \cD_{Q^{(2),B/F}} = \{ \Psi\in H^1(D_\Gamma)_{s/a};\ \tilde P(y)\tilde\Psi_{bv}(y)=0\
  \text{for a.e.}\ y \in [0,1]\} \ .
\end{equation}
Correspondingly, the domain \eqref{bcgraph} converts into
\begin{equation}
\label{bcG_B}
\begin{split}
 \cD_{2,B/F}(P,L) := \{
     &\Psi\in H^2(D_\Gamma)_{s/a};\ \tilde P(y)\tilde\Psi_{bv}(y)=0\ \text{and}\\
     &\quad\tilde Q(y)\tilde\Psi'_{bv}(y)+\tilde L(y)\tilde Q(y)
        \tilde\Psi_{bv}(y)=0\ \text{for a.e.}\ y\in [0,1] \} \ .
\end{split}
\end{equation}
Theorem~\ref{2quadformgraph} now carries over immediately to the case of either
two bosons or two fermions on the graph. The self-adjoint operator $H_{B/F}$ 
associated with the quadratic form \eqref{Qform2graphB} on the domain 
\eqref{bcG_B} is identified by the immediate analogue of 
Proposition~\ref{abstrdomainG}.

Finally, in the case of a regular quadratic form we can again identify the
operator and its domain in an explicit way, leading to the main result of 
this section.
\begin{theorem}
\label{2LaplacegraphB}
Suppose that the matrix entries of $\tilde P:(0,1)\to \M(2E^2,\kz)$ are in 
$C^1(0,1)$ and that the quadratic form $Q^{(2),B/F}_{P,L}$ is regular. Then the 
unique self-adjoint, semi-bounded operator $H_{B/F}$ that is associated with 
this form is the bosonic or fermionic two-particle Laplacian $-\Delta_{2,B/F}$ 
with domain $\cD_{2,B/F} (P,L)$.
\end{theorem}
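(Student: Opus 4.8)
The plan is to avoid rerunning the entire construction and instead obtain $H_{B/F}$ as the reduction of the distinguishable-particle operator of Theorem~\ref{2Laplacegraph} to the invariant subspaces $\cH_{2,B/F}$. The block-diagonal form \eqref{2block} of $P$ and $L$ is precisely what renders the form $Q^{(2)}_{P,L}$ invariant under the particle-exchange involution: both the Dirichlet integral $\langle\nabla\Psi,\nabla\Psi\rangle_{L^2(D_\Gamma)}$ and the boundary term are unchanged when $\psi_{e_1 e_2}(x,y)$ is replaced by $\pm\psi_{e_2 e_1}(y,x)$, since the upper and lower halves of $\Psi_{bv}$ carry the same block $\tilde L$. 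Consequently the unique self-adjoint operator associated with the form, namely $(-\Delta_2,\cD_2(P,L))$, commutes with the orthogonal projections $\Pi_s$ and $\Pi_a$. As the reduction of a self-adjoint operator to a reducing subspace is again self-adjoint there, $-\Delta_2$ restricts to self-adjoint operators $H_{B/F}$ on $\cH_{2,B/F}$, each acting as a two-particle Laplacian; closedness and semi-boundedness of the associated forms are inherited from Theorem~\ref{2quadformgraph}.

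First I would identify the restricted form and its domain. By the representation theorem the form of the reduced operator is the restriction of $Q^{(2)}_{P,L}$ to $\cD_{Q^{(2)}}\cap\cH_{2,B/F}$, and the domain of $H_{B/F}$ is $\cD_2(P,L)\cap\cH_{2,B/F}$. On (anti-)symmetric functions the relations \eqref{Symmetrie_gen} identify the lower half of $\Psi_{bv}$ (resp.\ $\Psi'_{bv}$) with the upper half evaluated at the swapped index pair $(e_2,e_1)$; comparing \eqref{graphbv} with \eqref{graphbvB} shows that the two halves then carry identical information. Rewriting the Dirichlet integral and the boundary integral solely through the reduced boundary values $\tilde\Psi_{bv}$ produces the factor of $2$ in \eqref{Qform2graphB}, while the constraint $P(y)\Psi_{bv}(y)=0$ collapses to $\tilde P(y)\tilde\Psi_{bv}(y)=0$, i.e.\ to the form domain \eqref{DefGquadB}. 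Thus the restricted form is exactly $Q^{(2),B/F}_{P,L}$.

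It then remains to verify $\cD_2(P,L)\cap\cH_{2,B/F}=\cD_{2,B/F}(P,L)$, which is the main technical point. Here I would again invoke \eqref{Symmetrie_gen}: for an (anti-)symmetric $\Psi$ the condition coming from the lower block $\tilde P$ (resp.\ $\tilde L$) in \eqref{2block} is the upper-block condition transported by the edge-index swap together with the value/derivative exchange of \eqref{Symmetrie_gen}, so that the $4E^2$ boundary conditions of \eqref{bcgraph} reduce to the $2E^2$ conditions $\tilde P\tilde\Psi_{bv}=0$ and $\tilde Q\tilde\Psi'_{bv}+\tilde L\tilde Q\tilde\Psi_{bv}=0$ of \eqref{bcG_B}. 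The hard part will be the careful bookkeeping of this swap in both the value and the derivative traces, checking that $\ker\tilde P$ is compatible with the exchange (automatic on a single square, where the swap fixes the pair index, but requiring attention on a general graph) so that the lower-block conditions are genuinely redundant. Equivalently, one may bypass the reduction argument and redo the integration by parts of Theorem~\ref{2Laplacegraph} directly on $\cD_{Q^{(2),B/F}}$, producing the boundary term that appears in \eqref{Qform2graphB} and must vanish; the obstacle then becomes establishing the symmetric/anti-symmetric analogue of the density Lemma~\ref{dense2} for the reduced traces $\tilde\Psi_{bv}$, which amounts to extending given data in $\ker\tilde\Pi$ to a function in $H^1(D_\Gamma)_{s/a}$ respecting the exchange symmetry and then applying the analogue of Proposition~\ref{abstrdomainG}. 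Either route yields $H_{B/F}=-\Delta_{2,B/F}$ with domain $\cD_{2,B/F}(P,L)$.
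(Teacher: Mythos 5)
Your primary route---obtaining $H_{B/F}$ by reducing the distinguishable-particle operator of Theorem~\ref{2Laplacegraph} to $\cH_{2,B/F}$---breaks down at its very first claim. The exchange involution $U$ does not merely swap the upper and lower halves of $\Psi_{bv}$: reading off \eqref{graphbv}, the upper-half component of $(U\Psi)_{bv}$ at the pair $(e_1,e_2)$ is the lower-half component of $\Psi_{bv}$ at the \emph{swapped} pair $(e_2,e_1)$. Hence on $\kz^{4E^2}$ the exchange acts by $\sigma=\left(\begin{smallmatrix} 0 & S\\ S & 0\end{smallmatrix}\right)$, where $S$ is the permutation $(e_1,e_2)\mapsto(e_2,e_1)$ within each half, and for $P,L$ of the block form \eqref{2block} one gets $\sigma P(y)\sigma=\diag\bigl(S\tilde P(y)S,\,S\tilde P(y)S\bigr)$. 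Invariance of $\cD_{Q^{(2)}}$ and of the boundary term therefore requires, in addition to \eqref{2block}, that $S\tilde P(y)S=\tilde P(y)$ and $S\tilde L(y)S=\tilde L(y)$; this is \emph{not} among the theorem's hypotheses and can fail. Concretely, take $E=2$, $\tilde L=0$, and $\tilde P$ the constant diagonal projector whose only unit entry sits at the component $\psi_{12}(0,\cdot)$ (a choice admissible for the theorem, being constant and diagonal with entries $0$ or $1$). Then $\cD_{Q^{(2)}}\cap\cH_{2,B}$ imposes both $\psi_{12}(0,\cdot)=0$ and $\psi_{12}(\cdot,0)=0$, whereas the domain \eqref{DefGquadB} imposes only $\psi_{12}(0,\cdot)=0$ (equivalently $\psi_{21}(\cdot,0)=0$); so the restriction of $Q^{(2)}_{P,L}$ to the bosonic sector is a proper restriction of $Q^{(2),B}_{P,L}$, and its associated operator is not $H_B$. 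You flagged exactly this point (``requiring attention on a general graph''), but it is not bookkeeping: it is a missing hypothesis, and without it the reduction argument proves a different statement.

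There is a second mismatch even when the exchange-compatibility does hold: Theorem~\ref{2Laplacegraph} presupposes regularity of the \emph{distinguishable}-particle form $Q^{(2)}_{P,L}$, while the theorem you are proving assumes only regularity of $Q^{(2),B/F}_{P,L}$; regularity of, say, the bosonic form gives no control over the fermionic sector of the full domain, so Theorem~\ref{2Laplacegraph} cannot be invoked. This is precisely why the paper does not reduce, but reruns the construction intrinsically on $\cH_{2,B/F}$ with the reduced traces $\tilde\Psi_{bv}$: closedness and semi-boundedness of \eqref{Qform2graphB} on \eqref{DefGquadB} as in Theorem~\ref{2quadformgraph}, the analogue of Proposition~\ref{abstrdomainG}, and then, under B/F regularity, integration by parts together with a bosonic/fermionic analogue of Lemma~\ref{dense2} to force the conditions \eqref{bcG_B}. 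That is exactly the fallback you sketch in your closing sentences, so that route is the one to carry out. The ``obstacle'' you name there is resolved by symmetrising the extension argument of Lemma~\ref{dense1}: extend $\tilde P^{\perp}\chi_n$ to functions $\Phi_n\in H^1(D_\Gamma)$ supported away from the boundary portions $\{y=0\}$ and $\{y=l_{e_2}\}$ of each rectangle, and replace $\Phi_n$ by $\Phi_n\pm U\Phi_n$, which lies in $\cD_{Q^{(2),B/F}}$ and has the same reduced trace $\tilde P^{\perp}\chi_n$.
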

\begin{rem}
Examples of regular quadratic forms follow immediately from 
Theorem~\ref{TheoremPL(y)2}.
\end{rem}
We still need to identify those realisations of the (bosonic or fermionic) 
two-particle Laplacian that represent genuine two-particle interactions. In 
order to find these we decompose the space of boundary values as
\begin{equation}
\label{bcdecomp}
 V=\bigoplus_{e_2 =1}^E V_{e_2} \ ,
\end{equation}
where $V_{e_2}\cong\kz^{2E}$ is the subspace of partial boundary values
$\tilde\Psi_{bv,e_2}$ and $\tilde\Psi'_{bv,e_2}$ of the components in 
\eqref{graphbvB} with fixed $e_2$. Loosely speaking, $V_{e_2}$ contains the 
boundary values of states for one particle on edge $e_1$ under the condition 
that the second particle is at some point $l_{e_2}y$ on egde $e_2$. 
\begin{prop}
Suppose that the matrix entries of $\tilde P:(0,1)\to \M(2E^2,\kz)$ are in 
$C^1(0,1)$. Then the two-particle Laplacian $-\Delta_{2,B/F}$ with domain 
$\cD_{2,B/F}(P,L)$ represents no interactions, iff $\tilde P$ and $\tilde L$ are 
independent of $y$ and are block-diagonal with respect to the decomposition 
\eqref{bcdecomp}, where the blocks are identical and represent corresponding 
one-particle maps.
\end{prop}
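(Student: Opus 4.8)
The plan is to characterise non-interacting operators as lifts of one-particle operators, using the already-established uniqueness of the $(\tilde P,\tilde L)$-parametrisation (Proposition~\ref{uniquePLgraph}) as the central tool. First I would make precise what ``represents no interactions'' means in the identical-particle setting. Following the distinguishable case, a bosonic or fermionic two-particle Laplacian represents no interaction iff it arises as the restriction to $\cH_{2,B/F}$ of a lift $-\Delta_1\otimes\eins + \eins\otimes(-\Delta_1)$ of a \emph{single} one-particle Laplacian on $\Gamma$, characterised by some one-particle data $(P^{(1)},L^{(1)})$ acting on $\kz^{2E}$. The forward direction is then essentially a bookkeeping computation: I would write out the quadratic form of such a lift, restrict it to the (anti-)symmetric subspace, and read off that the resulting $\tilde P$ and $\tilde L$ are $y$-independent and block-diagonal with respect to \eqref{bcdecomp}, with each diagonal block $V_{e_2}\cong\kz^{2E}$ carrying exactly the one-particle map $(P^{(1)},L^{(1)})$. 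The point is that a lift acts on the ``first particle'' coordinate through $-\Delta_1$ while the ``second particle'' label $e_2$ and position $l_{e_2}y$ remain spectators, which is precisely what block-diagonality with respect to $\bigoplus_{e_2} V_{e_2}$ with identical blocks encodes.

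For the converse I would assume $\tilde P$ and $\tilde L$ are $y$-independent and block-diagonal as stated, with common block $(P^{(1)},L^{(1)})$ on each $V_{e_2}$, and then exhibit explicitly the one-particle Laplacian whose lift produces the given two-particle operator. Concretely, I would take the one-particle domain $\cD_1(P^{(1)},L^{(1)})$ from \eqref{1partBCalt}, form its lift to $\cH_2$ via the tensor-product construction \eqref{factorisedobs}, and project to $\cH_{2,B/F}$ using $\Pi_{s/a}$. I would then verify that the lifted-and-projected operator has exactly the quadratic form \eqref{Qform2graphB} with the prescribed $(\tilde P,\tilde L)$ and the domain \eqref{DefGquadB}. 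Here the crucial leverage is Proposition~\ref{uniquePLgraph}: since two forms with $C^1$-projectors coincide iff their $(\tilde P,\tilde L)$ data coincide, it suffices to match the data rather than to compare operators directly, and the match holds by the block structure assumption. In the regular case, Theorem~\ref{2LaplacegraphB} then upgrades the equality of forms to equality of the self-adjoint operators.

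The main obstacle is the interplay between the (anti-)symmetrisation and the block structure \eqref{bcdecomp}. The boundary-value vector $\tilde\Psi_{bv}$ in \eqref{graphbvB} records only the ``first-particle'' boundary data $\psi_{e_1e_2}(0,\cdot)$ and $\psi_{e_1e_2}(l_{e_1},\cdot)$, with the symmetry relation \eqref{Symmetrie_gen} implicitly encoding the second-particle traces. I would need to check carefully that, under this reduced parametrisation, a lift really does become block-diagonal with identical blocks, and conversely that identical-block data are compatible with the symmetry constraint built into $H^m(D_\Gamma)_{s/a}$. The subtlety is that the decomposition $V=\bigoplus_{e_2} V_{e_2}$ groups boundary values by the \emph{spectator} edge $e_2$, so I must confirm that a lift acts within each $V_{e_2}$ (no coupling across different $e_2$) and acts identically in each, which is exactly the signature of an operator of the form $O\otimes\eins$ plus its symmetric counterpart. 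Once this correspondence between lifts and identical-block, $y$-independent data is nailed down, both implications follow from the uniqueness proposition with only routine verification.
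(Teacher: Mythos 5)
Your proposal is correct and follows essentially the same route as the paper's proof: both directions rest on identifying non-interacting operators with lifts of a one-particle Laplacian, observing that a lift corresponds precisely to $y$-independent data that is block-diagonal with identical blocks with respect to \eqref{bcdecomp}, and invoking the uniqueness of the $(\tilde P,\tilde L)$-parametrisation (Proposition~\ref{uniquePLgraph}) to conclude that every non-interacting realisation must carry this form. Your write-up merely spells out in more detail the symmetrisation bookkeeping that the paper's two-sentence proof leaves implicit.
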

\begin{proof}
When $\tilde P,\tilde L$ are independent of $y$ and are block-diagonal with 
respect to \eqref{bcdecomp}, with identical blocks, the domain $\cD_2(P,L)$ can
immediately be split into (identical) one-particle domains. Due to the uniqueness
of the representation that follows from Proposition~\ref{uniquePLgraph} every
non-interacting two-particle Laplacian must be of this form.
\end{proof}

Next we wish to characterise local realisations of the two-particle Laplacian.
We recall that for a single particle on a graph locality means that the
respective maps $P$ and $L$ are block-diagonal with respect to the 
decomposition \eqref{1localbc}. For two particles locality has to be considered
on two levels: one-particle and two-particle locality. The latter should mean 
that two particles only interact when they are either on the same edge, or on 
two edges that are connected in a vertex. On a one-particle level locality 
means that each of the two particles interacts with the outside only at actual 
vertices, in the same way as in a one-particle quantum graph.

In more detail, two-particle locality means that boundary conditions should
only be imposed on those components of $\Psi=(\psi_{e_1 e_2})$ where there
is a vertex connecting the edges $e_1$ and $e_2$. This is implemented by first
ordering the components in \eqref{graphbvB} according to the vertices that
comprise either the initial end (where the coordinate vanishes), or the final 
end (where the coordinate takes its maximal value) of edge $e_1$. This yields a 
decomposition of the space of boundary values \eqref{bcdecomp} as
\begin{equation}
V = \bigoplus_{v\in\cV}V_{v} \ ,
\end{equation}
where $V_{v}$ contains all boundary values at edge ends that are connected
in the vertex $v$. Two-particles now means that 
\begin{equation}
 V_{v}=\bigoplus_{v} V_{local,v}\ .
\end{equation}
Here $V_{local,v}$ consists of the components $\psi_{e_{1}e_{2}}(0,l_{e_2}y)$ or
$\psi_{e_{1}e_{2}}(l_{e_1},l_{e_2}y)$, respectively, where the first variable forms 
the vertex $v$ and the second variable lives on an edge $e_{2}$ connected to 
$e_1$, either in the vertex $v$ or in the other edge end of $e_1$. Locality 
then requires that the matrices $P(y)$ and $L(y)$ are block diagonal with 
respect to this decomposition.
%
%
\subsection{An Example}
In this section we want to illustrate the meaning of the boundary conditions 
introduced in the previous paragraphs in a simple example. For this purpose
it is sufficient to consider only local properties at a single vertex. We 
therefore choose the simplest possible example of two identical particles 
(bosons or fermions) on a graph with 
one vertex and two half-lines attached to it. Although this graph is not 
compact and therefore, strictly speaking, is not covered by our results above, 
these can be carried over in an obvious way.

For convenience we characterise the boundary conditions in terms of the 
matrix-valued maps $A,B:[0,\infty)\to\M(4,\kz)$, see the paragraph below
\eqref{bcinterval}. The graph has two edges, hence a bosonic or fermionic
two-particle state 
$\Psi=(\psi_{e_1 e_2})\in\bigl(L^2 (\rz_+^2)\otimes\kz^4\bigr)_{s/a}$ has four 
components. Boundary values are encoded in
\begin{equation}
 \tilde\Psi_{bv}(y) = 
 \begin{pmatrix} \psi_{11}(0,y) \\ \psi_{21}(0,y) \\ \psi_{12}(0,y) \\ 
 \psi_{22}(0,y) \end{pmatrix} \quad\text{and}\quad
 \tilde\Psi'_{bv}(y) = 
 \begin{pmatrix} \psi_{11,x}(0,y) \\ \psi_{21,x}(0,y) \\ \psi_{12,x}(0,y) \\ 
 \psi_{22,x}(0,y) \end{pmatrix} \ ,
\end{equation}
as each edge has only one finite edge end. We then choose the matrices 
\begin{equation}
\label{Example1}
 A(y) = \begin{pmatrix}
         1 & -1 & 0 & 0 \\
         0 & v(0,y) & 0 & 0 \\
         0 & 0 & 1 & -1 \\
         0 & 0 & 0 & v(0,-y) \end{pmatrix} \quad\text{and}\quad
 B(y)= \begin{pmatrix}
         0 & 0 & 0 & 0 \\
         -1 & -1  & 0 & 0 \\
         0 & 0 & 0 & 0 \\
         0 & 0 & -1 & -1 \end{pmatrix}\ ,
\end{equation}
where $v\in C_0^\infty(\rz^2)$ with $v(x,y)=v(y,x)$. 

The boundary conditions $A(y)\tilde\Psi_{bv}(y)+B(y)\tilde\Psi'_{bv}(y)=0$ imply
that $\psi_{11}(0,y)=\psi_{21}(0,y)$ and $\psi_{12}(0,y)=\psi_{22}(0,y)$. These
conditions ensure continuity of the functions across the vertex in the first 
variable. Due to the particle exchange symmetry this carries over to the other 
variable. Moreover,
\begin{equation}
\begin{split}
 -\psi_{11,x}(0,y)-\psi_{21,x}(0,y) &= -v(0,y)\psi_{21}(0,y) \ , \\
 -\psi_{12,x}(0,y)-\psi_{22,x}(0,y) &= -v(0,-y)\psi_{22}(0,y) \ .
\end{split}
\end{equation}

We now arrange the four functions $\psi_{e_1 e_2}$ on $\rz_+^2$ into a single
function $\psi$ on $\rz^2$ by defining, for $x,y>0$,
\begin{equation}
\label{Example3}
\begin{split}
 \psi(x,y):&=\psi_{11}(x,y) \ ,\\
 \psi(-x,-y):&=\psi_{22}(x,y)\ , \\
 \psi(x,-y):&=\psi_{12}(x,y) \ ,\\
 \psi(-x,y):&=\psi_{21}(x,y) \ .
\end{split}
\end{equation}
Converting the boundary conditions \eqref{Example1} into equivalent conditions
for the function $\psi$ then yields a domain for the (formal) Hamiltonian 
\begin{equation}
 \hat{H}=-\frac{\partial^{2}}{\partial x^{2}}-\frac{\partial^{2}}{\partial y^{2}}
                 +v(x,y)[\delta(x)+\delta(y)] \ ,
\end{equation}
for two bosons or fermions on the real line. 

This example hence illustrates that the two-particle Laplacians introduced above
represent singular two-particle interactions that act when (at least) one 
particle hits a vertex.
%
%
%
%
\section{Spectral properties}
\label{sec4}
We shall now show that the operators constructed in the previous section
possess purely discrete spectra and that their eigenvalues are distributed
according to an appropriate Weyl law. These observations are certainly not
surprising as the operators are composed of Laplacian on a finite number of
rectangles. Still, proofs of these statements seem desirable.

If a two-particle Laplacian $-\Delta_{2}$ is a lift of a one-particle Laplacian
$-\Delta_{1}$, i.e.,
$-\Delta_2 = (-\Delta_1)\otimes\eins_{\cH_1}+\eins_{\cH_1}\otimes(-\Delta_1)$
on the domain $\cD_1 (P,L)\otimes\cD_1 (P,L)$, Theorem VIII.33 in
\cite{ReeSim72} implies for the spectrum of the two-particle operator that
\begin{equation}
 \sigma\bigl(-\Delta_{2} \bigr) = \sigma\bigl(-\Delta_{1} \bigr) +
 \sigma\bigl(-\Delta_{1} \bigr)\ ;
\end{equation}
in particular, $\sigma\bigl(-\Delta_{2} \bigr)$ is purely discrete. (Notice
that here $-\Delta_2$ is defined on the Hilbert space $\cH_2$ for two 
distinguishable particles. A projection to the bosonic or fermionic subspace 
is yet to follow.) The eigenvalues of $-\Delta_{2}$ are therefore of the form
$\lambda_{n,m}=k_n^2 +k_m^2$, when
$\sigma\bigl(-\Delta_{1} \bigr)=\{k_n^2; n\in\nz_0\}$. One can, therefore,
immediately determine the eigenvalue asymptotics for such a two-particle
Laplacian.
\begin{lemma}
Let $-\Delta_2 = (-\Delta_1)\otimes\eins_{\cH_1}+\eins_{\cH_1}\otimes(-\Delta_1)$
be a lift of a one-particle Laplacian $(-\Delta_1,\cD(P,L))$ to the
two-particle Hilbert space $\cH_2$. Then the eigenvalues
$\{\lambda_{n,m}; n,m\in\nz_0\}$ of $-\Delta_2$ are distributed
according to the Weyl law,
\begin{equation}
\label{Weyl2free}
 N_2(\lambda) := \{(n,m)\in\nz^2_0;\ \lambda_{n,m}\leq\lambda\} \sim
 \frac{\mathcal{L}^2}{4\pi}\,\lambda\ ,\qquad \lambda\to\infty\ ,
\end{equation}
where $\mathcal{L}=l_1+\dots+l_E$ is the sum of the edge lengths of the graph.
It is understood that $N_2(\lambda)$ counts eigenvalues with their respective
multiplicities.
\end{lemma}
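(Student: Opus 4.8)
The plan is to reduce the two-dimensional counting problem to the one-dimensional Weyl law for the one-particle Laplacian. Since $-\Delta_2$ is a lift, its eigenvalues are $\lambda_{n,m}=k_n^2+k_m^2$, so $N_2(\lambda)$ counts lattice-type points $(n,m)$ with $k_n^2+k_m^2\le\lambda$. First I would recall the one-particle Weyl law: for a self-adjoint one-particle Laplacian on a compact metric graph of total edge length $\mathcal{L}$, the counting function satisfies
\begin{equation}
 N_1(\lambda) := \#\{n\in\nz_0;\ k_n^2\le\lambda\} \sim \frac{\mathcal{L}}{\pi}\,\sqrt{\lambda}\ ,\qquad \lambda\to\infty\ .
\end{equation}
This is the standard leading-order asymptotics for quantum graphs (the boundary conditions in the vertices affect only lower-order terms), and I would cite it from the references on one-particle quantum graphs listed in the preliminaries.

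Next I would express $N_2$ through $N_1$. With $\mu_n:=k_n^2$ enumerating the one-particle eigenvalues in increasing order, one has
\begin{equation}
 N_2(\lambda) = \sum_{n\,:\,\mu_n\le\lambda} \#\{m\in\nz_0;\ \mu_m\le\lambda-\mu_n\}
              = \sum_{n\,:\,\mu_n\le\lambda} N_1(\lambda-\mu_n)\ .
\end{equation}
The strategy is then to substitute the asymptotics $N_1(t)\sim\frac{\mathcal{L}}{\pi}\sqrt{t}$ into this sum and recognise it as a Riemann sum approximating the area of the quarter-disc $\{(s,t)\in\rz_+^2;\ s+t\le\lambda\}$ against the measure $\ud N_1$. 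More precisely, I would write $N_2(\lambda)=\int_0^\lambda N_1(\lambda-t)\,\ud N_1(t)$ as a Stieltjes integral and, using $N_1(t)\sim\frac{\mathcal{L}}{\pi}\sqrt t$, compare it with $\int_0^\lambda \frac{\mathcal{L}}{\pi}\sqrt{\lambda-t}\,\ud\bigl(\frac{\mathcal{L}}{\pi}\sqrt t\bigr)$. Evaluating this integral via the substitution $t=\lambda u$ gives
\begin{equation}
 \frac{\mathcal{L}^2}{2\pi^2}\int_0^\lambda \frac{\sqrt{\lambda-t}}{\sqrt t}\,\ud t
 = \frac{\mathcal{L}^2}{2\pi^2}\,\lambda\int_0^1 \frac{\sqrt{1-u}}{\sqrt u}\,\ud u
 = \frac{\mathcal{L}^2}{2\pi^2}\,\lambda\cdot\frac{\pi}{2}
 = \frac{\mathcal{L}^2}{4\pi}\,\lambda\ ,
\end{equation}
which is precisely the claimed constant (using $\int_0^1 u^{-1/2}(1-u)^{1/2}\,\ud u = B(\tfrac12,\tfrac32)=\tfrac{\pi}{2}$).

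I expect the main obstacle to be justifying that the error incurred by replacing the exact counting function $N_1$ by its leading asymptotics inside the sum/integral does not spoil the leading-order behaviour of $N_2$. Since $N_1(t)=\frac{\mathcal{L}}{\pi}\sqrt t + o(\sqrt t)$, the accumulated error in the convolution could a priori be of the same order $\lambda$ as the main term, so a naive substitution is not automatically valid. The clean way to handle this rigorously is a Tauberian argument: pass to the heat trace or the resolvent, where factorisation is exact. Indeed, the two-particle heat kernel trace factorises as $\tr\,\ue^{t\Delta_2}=\bigl(\tr\,\ue^{t\Delta_1}\bigr)^2\sim\bigl(\frac{\mathcal{L}}{\sqrt{4\pi t}}\bigr)^2=\frac{\mathcal{L}^2}{4\pi t}$ as $t\to 0^+$, and Karamata's Tauberian theorem then yields $N_2(\lambda)\sim\frac{\mathcal{L}^2}{4\pi}\,\frac{\lambda}{\Gamma(2)}=\frac{\mathcal{L}^2}{4\pi}\,\lambda$ directly. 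I would most likely present this heat-trace route as the main line of argument, since it sidesteps the delicate error analysis of the lattice-point count entirely, and relegate the elementary convolution computation above to a motivating remark confirming the constant.
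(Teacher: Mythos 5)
Your main line of argument --- passing to the heat trace, using the exact factorisation $\tr\,\ue^{t\Delta_2}=\bigl(\tr\,\ue^{t\Delta_1}\bigr)^2\sim\frac{\mathcal{L}^2}{4\pi t}$, and applying Karamata's Tauberian theorem to recover $N_2(\lambda)\sim\frac{\mathcal{L}^2}{4\pi}\,\lambda$ --- is precisely the proof given in the paper, which likewise converts the one-particle Weyl law to heat-trace asymptotics, squares, and applies the Tauberian theorem in the reverse direction. Your observation that naively substituting the asymptotics of $N_1$ into the convolution sum is not automatically justified is exactly the right reason to prefer this route, so your proposal is correct and essentially identical to the paper's argument.
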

\begin{proof}
It is known \cite{BolEnd09} that for any self-adjoint realisation of the
one-particle Laplacian on a compact, metric graph with eigenvalues $k_n^2$
the eigenvalue count follows a Weyl law,
\begin{equation}
\label{Weyl1}
 N_1(k) := \{n\in\nz_0;\ k_n^2\leq k^2\} \sim \frac{\mathcal{L}}{\pi}\,k
 \ ,\qquad k\to\infty\ .
\end{equation}
Via a Tauberian theorem \cite{Kar31}, this asymptotic law is equivalent to 
\begin{equation}
\label{heat1}
 \sum_n \ue^{-k_n^2 t} \sim \frac{\mathcal{L}}{\sqrt{4\pi t}}\ ,\qquad
 t\to 0+\ .
\end{equation}
Squaring both sides of \eqref{heat1} and using the equivalence between 
eigenvalue asymptotics and heat-trace asymptotics in the opposite direction 
immediately yields \eqref{Weyl2free}.
\end{proof}
Implementing particle exchange symmetry, one observes that the eigenfunctions
$\psi_n\otimes\psi_m\in\cD_1(P,L)\otimes\cD_1(P,L)$ and $\psi_m\otimes\psi_n$ 
of $-\Delta_2$ are transformed into their (anti-) symmetric versions 
$\psi_n\otimes\psi_m\pm \psi_m\otimes\psi_n$. The multiplicity of the eigenvalue
$\lambda_{n,m}=\lambda_{m,n}$ is, therefore, reduced by a factor of two. Hence,
in the bosonic or fermionic case the Weyl asymptotics read
\begin{equation}
\label{Weyl2freebos}
 N_{2,B/F}(\lambda) \sim \frac{\mathcal{L}^2}{8\pi}\,\lambda\ ,\qquad
 \lambda\to\infty\ .
\end{equation}
We shall now generalise the above statements to all realisations (irrespective
of regularity) of the two-particle Laplacian introduced in the previous section.
\begin{theorem}
\label{vonNeumannBracketing}
A self-adjoint, bosonic or fermionic realisation of the two-particle Laplacian
$-\Delta_{2}$ on a domain $\cD_{2,B/F}(P,L)$ has compact resolvent
and, therefore, possesses a purely discrete spectrum. Moreover, the
eigenvalue asymptotic follow the Weyl law \eqref{Weyl2freebos}.
\end{theorem}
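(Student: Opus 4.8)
The plan is to establish compactness of the resolvent via a min-max (Dirichlet--Neumann bracketing) comparison. The key observation is that the quadratic form $Q^{(2),B/F}_{P,L}$ differs from the form of a pure Neumann Laplacian on the $E^2$ rectangles only by the bounded boundary term $\langle \tilde\Psi_{bv},\tilde\Lambda\tilde\Psi_{bv}\rangle$, and its form domain is sandwiched between the Dirichlet and Neumann form domains on $D_\Gamma$. Concretely, I would first recall that the pure Neumann Laplacian $-\Delta_{2}^{N}$ on the disjoint union $D_\Gamma$ of rectangles has compact resolvent, since it decomposes as a finite direct sum of Neumann Laplacians on rectangles, each of which has purely discrete spectrum with explicitly known eigenvalues. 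The eigenvalues of $-\Delta_{2}^{N}$ on $D_{e_1 e_2}$ are $\pi^2(n^2/l_{e_1}^2 + m^2/l_{e_2}^2)$, which accumulate only at infinity, so the resolvent of $-\Delta_{2}^{N}$ is compact.

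Next I would control $Q^{(2),B/F}_{P,L}$ by the Neumann form. From the semi-boundedness estimate in the proof of Theorem~\ref{2quadform} (carried over to the graph and to the symmetry-reduced form via Theorem~\ref{2quadformgraph}), the boundary term is bounded by
\begin{equation}
 \bigl| \langle \tilde\Psi_{bv},\tilde\Lambda\tilde\Psi_{bv}\rangle \bigr|
 \leq 4 L_{max}\Bigl( \tfrac{2}{\delta}\|\Psi\|^2_{L^2(D_\Gamma)}
 + \delta\,\|\nabla\Psi\|^2_{L^2(D_\Gamma)}\Bigr)\ ,
\end{equation}
which shows that the form-norm of $Q^{(2),B/F}_{P,L}$ is equivalent to the $H^1(D_\Gamma)$-norm, i.e.\ to the Neumann form-norm. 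The crucial structural fact is the inclusion of form domains $\cD_{Q^{(2),B/F}} \subset H^1(D_\Gamma)_{s/a}$, the latter being the Neumann form domain (on the symmetry-reduced space). By the min-max principle, the $n$-th eigenvalue $\lambda_n$ of $H_{B/F}$ satisfies a lower bound in terms of the $n$-th Neumann eigenvalue: since enlarging the form domain lowers eigenvalues and the bounded perturbation shifts them by a controlled amount, one gets $\lambda_n \geq a\,\mu_n^{N} - b$ for constants $a>0,b$ and $\mu_n^{N}$ the Neumann eigenvalues. As $\mu_n^{N}\to\infty$, this forces $\lambda_n\to\infty$, hence $H_{B/F}$ has compact resolvent and purely discrete spectrum.

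For the Weyl asymptotics I would use two-sided bracketing. The form-norm equivalence yields constants with $c_1(\|\nabla\Psi\|^2 - C\|\Psi\|^2) \leq Q^{(2),B/F}_{P,L}[\Psi] \leq c_2(\|\nabla\Psi\|^2 + C\|\Psi\|^2)$ on the common form domain, so by min-max the eigenvalue counting function of $H_{B/F}$ is bounded above and below by (shifted, scaled) counting functions of Dirichlet and Neumann Laplacians on $D_\Gamma$. Both the Dirichlet and Neumann Laplacians on the finite union of rectangles obey the standard two-dimensional Weyl law with leading coefficient $\mathrm{Area}(D_\Gamma)/(4\pi)$, where $\mathrm{Area}(D_\Gamma)=\sum_{e_1 e_2} l_{e_1}l_{e_2} = \mathcal{L}^2$; the symmetry reduction to $\cH_{2,B/F}$ halves the effective area to $\mathcal{L}^2/2$. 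Since the bounded perturbation and the bounded multiplicative constants only affect lower-order terms in the counting function, the leading asymptotics are pinched to $N_{2,B/F}(\lambda)\sim \tfrac{\mathcal{L}^2}{8\pi}\lambda$, matching \eqref{Weyl2freebos}.

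The main obstacle I anticipate is making the min-max bracketing rigorous for the symmetry-reduced, non-regular operators: the form domains need not lie in $H^2$, so one cannot argue directly with classical Dirichlet/Neumann boundary conditions on each rectangle. The resolution is to work entirely at the level of quadratic forms, where the inclusion $\cD_{Q^{(2),B/F}}\subset H^1(D_\Gamma)_{s/a}$ and the form-norm equivalence are all that is needed — regularity is never invoked. A secondary technical point is correctly accounting for the symmetry projection in both the discreteness argument and the Weyl constant; here one uses that the symmetric (resp.\ antisymmetric) subspace is itself a direct sum over the $\tfrac12 E(E+1)$ independent components, on which the free Laplacian again has explicitly known spectrum, so the halving of the area constant follows by direct counting rather than a separate asymptotic argument.
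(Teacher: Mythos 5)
Your overall strategy --- sandwiching the form $Q^{(2),B/F}_{P,L}$ between explicitly solvable comparison forms and invoking the min-max principle --- is essentially the paper's strategy (the paper brackets between a lift of the one-particle \emph{Dirichlet} Laplacian from above and a lift of a one-particle \emph{Robin} Laplacian from below), and your compact-resolvent argument is sound: the domain inclusions $H^1_0(D_\Gamma)_{s/a}\subseteq\cD_{Q^{(2),B/F}}\subseteq H^1(D_\Gamma)_{s/a}$ together with the bound $\lambda_n\geq(1-4\delta L_{max})\,\mu_n^N-8L_{max}/\delta$ force $\lambda_n\to\infty$, which gives discreteness of the spectrum.

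However, your Weyl argument contains a genuine error. You assert that ``the bounded perturbation and the bounded multiplicative constants only affect lower-order terms in the counting function.'' This is false for the multiplicative constants: from $\lambda_n\geq c_1\mu_n^N-b$ with $c_1<1$ strictly one only gets $N_{2,B/F}(\lambda)\leq N^N_{2,B/F}\bigl((\lambda+b)/c_1\bigr)\sim\frac{\mathcal{L}^2}{8\pi}\cdot\frac{\lambda}{c_1}$, and similarly a constant $c_2>1$ on the Dirichlet side degrades the lower bound; your pinching therefore only traps $N_{2,B/F}(\lambda)/\lambda$ between $\frac{\mathcal{L}^2}{8\pi c_2}$ and $\frac{\mathcal{L}^2}{8\pi c_1}$, not at the claimed value. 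The gap can be closed in two ways. First, note that on the Dirichlet side no constant is needed at all: traces vanish on $H^1_0(D_\Gamma)_{s/a}$, so $Q^{(2),B/F}_{P,L}$ coincides there with the Dirichlet form and $N_{2,B/F}\geq N^D_{2,B/F}$ exactly; on the Neumann side one must additionally let $\delta\to 0$ \emph{after} $\lambda\to\infty$, using that the additive shift $8L_{max}/\delta$ is harmless for the leading term while $c_1=1-4\delta L_{max}\to 1$. Second --- and this is what the paper does --- one can avoid the issue altogether by replacing your Neumann comparison with the Robin operator given by $P_R=0$, $L_R=\|\Lambda\|_{op}\eins$: then $Q^{(2),B/F}_{P,L}[\Psi]\geq\|\nabla\Psi\|^2-\|\Lambda\|_{op}\|\Psi_{bv}\|^2$ holds with coefficient exactly $1$ on $\cD_{Q^{(2),B/F}}$, and since this Robin operator is itself a lift of a one-particle Laplacian it satisfies the same Weyl law \eqref{Weyl2freebos}, so the two-sided pinching $N^D_{2,B/F}\leq N_{2,B/F}\leq N^R_{2,B/F}$ gives the asymptotics immediately, with no limiting argument in the constants.
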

\begin{proof}
The proof is based on a comparison with two simple operators (quadratic forms)
in the spirit of the well-known Dirichlet-Neumann bracketing. Both
comparison operators are lifts of one-particle operators.

The first operator, $(-\Delta_2,\cD_{2,B/F}(P_D,L_D))$, is a lift of the 
one-particle Dirichlet-Lapla\-cian, and is characterised by the projector 
$P_D=\eins_V$ as well as $L_D=0$ on the space \eqref{bcdecomp} of boundary 
values. The second operator, $(-\Delta_2,\cD_{2,B/F}(P_R,L_R))$, is a lift of a 
one-particle Robin-Laplacian, and is characterised by the projector $P_R=0$ and
$L_R=\lambda\eins_V$, where $\lambda$ is the operator norm of the bounded
map $\Lambda$ defined by $L$ (see also \cite{BolEnd09}).

When $(-\Delta_{2},\cD_{2,B/F}(P,L))$ is any of the self-adjoint realisations
introduced previously, the associated quadratic forms obviously satisfy the 
following inclusions of their domains,
\begin{equation}
 \cD_{Q^{(2),B/F}_{P_D,L_D}} \subseteq \cD_{Q^{(2),B/F}_{P,L}} \subseteq \cD_{Q^{(2),B/F}_{P_{R},L_{R}}}
 \ .
\end{equation}
This means, in the sense of \cite{ReeSim78}, that
\begin{equation}
\label{D-Rbracket}
 (-\Delta_2,\cD_{2,B/F}(P_D,L_D)) \geq (-\Delta_{2},\cD_{2,B/F}(P,L)) \geq
 (-\Delta_2,\cD_{2,B/F}(P_R,L_R))\ .
\end{equation}
This `bracketing' is in close analogy to the Dirichlet-Neumann bracketing
(see, e.g., \cite{ReeSim78}): Let $H$ be a self-adjoint, semi-bounded
operator on a Hilbert space $\cH$, and define
\begin{equation}
 \mu_n(H) := \sup_{\varphi_1,\dots,\varphi_{n-1}\in\cH}
 \inf_{\begin{subarray}{1} \psi\in [\varphi_1,\dots,\varphi_{n-1}]\perp \\ 
 \psi\in Q_H,\|\psi\|=1\end{subarray}}\langle\psi,H\psi\rangle_\cH\ .
\end{equation}
Then, \eqref{D-Rbracket} implies that
\begin{equation}
 \mu_n(-\Delta_2)_D \geq \mu_n(-\Delta_2) \geq \mu_n(-\Delta_2)_R \ .
\end{equation}
Using that both the Dirichlet- and the Robin-Laplacian have compact resolvent
one concludes (with Theorem XIII.64 in \cite{ReeSim78}) that
$\mu_n(-\Delta_2)_{R,D}\to\infty$ as $n\to\infty$; hence the same is true for
$\mu_n(-\Delta_2)$. By the same theorem this implies that
$(-\Delta_{2},\cD_{2,B/F}(P,L))$ has compact resolvent.

Furthermore, \eqref{D-Rbracket} implies for the eigenvalue counting functions
that
\begin{equation}
 N_{2,B/F}^D(\lambda)\leq N_{2,B/F}(\lambda)\leq N_{2,B/F}^R(\lambda)\ .
\end{equation}
As both $N_{2,B/F}^D$ and $N_{2,B/F}^R$ count eigenvalues of a two-particle 
Laplacian that is a lift of a one-particle Laplacian they both satisfy the Weyl
asymptotics \eqref{Weyl2freebos}. Hence the same asymptotics hold for
$N_{2,B/F}$.
\end{proof}

\vspace*{0.5cm}

\subsection*{Acknowledgement}
J K would like to thank the {\it Evangelisches Studienwerk Villigst} for 
financial support through a Promotionsstipendium.

\vspace*{0.5cm}

\begin{appendix}
\section{A regularity theorem}
In this appendix we prove Theorem~\ref{TheoremPL(y)1} in Section~\ref{sec3}. 
We recall that given a quadratic form as in Theorem~\ref{2quadform} or 
\ref{2quadformgraph}, our goal is to establish regularity of the associated 
self-adjoint operator $H$ with domain \eqref{AbstractDef} or 
\eqref{AbstractDefG}, respectively. To achieve this we need to show that any 
function $\phi\in\cD(H)$ has $H^2$-regularity. As the domains $D_{e_1 e_2}$ on 
which the functions are defined are rectangles, their boundaries have only 
Lipschitz-regularity and are not smooth. Our approach to this problem utilises 
an effective cut-off of the corners in combination with the standard 
{\it difference quotient technique} to establish regularity (see, e.g., 
\cite{GilTru83}). Difference quotients are defined in an obvious way: With 
$h>0$ and $\phi \in L^{2}(D)$, the difference quotients in the positive and 
negative direction of the $i$-th coordinate, respectively, are 
\begin{equation}
\begin{split}
 D^{+h}_{i}\phi(x) &:= \frac{1}{h}\bigl(\phi(x+he_{i})-\phi(x) \bigr),\\
 D^{-h}_{i}\phi(x) &:= \frac{1}{h}\bigl(\phi(x)-\phi(x-he_{i}) \bigr),
\end{split}
\end{equation}
where $e_{i}$ is the corresponding unit vector. 

When $\phi\in C^1(D)$, limits of difference quotients as $h\to 0$ clearly
yield the corresponding directional derivatives. When a function has weaker
regularity estimates of difference quotients allow to conclude weak 
differentiability, see \cite{Dob05}.
\begin{lemma}
\label{Dobrowolski}
Let $\Omega$ a bounded domain. If $\phi \in L^{2}(\Omega)$ and 
$\|D^{+h}_{i}\phi\|_{L^{2}(\Omega_{0})}\leq K$, uniformly for all compact domains 
$\Omega_{0} \Subset \Omega$ and for all $0 < h \leq h_{0}(\Omega_{0})$, then 
$\phi$ is weakly differentiable with respect to $x_{i}$ and 
$\|\phi_{x_i}\|_{L^{2}(\Omega)} \leq K$.
\end{lemma}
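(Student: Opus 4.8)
The plan is to recognise the weak derivative $\phi_{x_i}$ as a weak $L^2$-limit of a sequence of difference quotients, and then to confirm, via a discrete integration-by-parts identity, that this limit really is the weak derivative. The hypothesis furnishes a uniform $L^2$-bound on the difference quotients over compact subdomains, and the reflexivity of $L^2$ converts such a bound into weak compactness; the norm estimate $\|\phi_{x_i}\|_{L^2(\Omega)}\le K$ will then follow from the weak lower semicontinuity of the norm.

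First I would exhaust $\Omega$ by an increasing sequence of compact subdomains $\Omega_k\Subset\Omega$ with $\bigcup_k\Omega_k=\Omega$. Fixing $k$, for every $h<\dist(\Omega_k,\partial\Omega)$ the translate $x+he_i$ lies in $\Omega$ whenever $x\in\Omega_k$, so $D^{+h}_i\phi$ is well defined on $\Omega_k$ and, by hypothesis, bounded by $K$ in $L^2(\Omega_k)$. Choosing any sequence $h_n\to 0+$, the family $\{D^{+h_n}_i\phi\}$ is therefore bounded in $L^2(\Omega_k)$, so some subsequence converges weakly there; a diagonal argument over $k$ then produces a single subsequence (still denoted $h_n$) and a function $v\in L^2_{\mathrm{loc}}(\Omega)$ with $D^{+h_n}_i\phi\rightharpoonup v$ weakly in $L^2(\Omega_k)$ for every $k$. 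Weak lower semicontinuity of the norm gives $\|v\|_{L^2(\Omega_k)}\le\liminf_n\|D^{+h_n}_i\phi\|_{L^2(\Omega_k)}\le K$ for each $k$, and letting $k\to\infty$ yields $v\in L^2(\Omega)$ with $\|v\|_{L^2(\Omega)}\le K$.

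It then remains to identify $v$ as $\phi_{x_i}$. For any test function $\eta\in C_0^\infty(\Omega)$ and all sufficiently small $h$ the change of variables $x\mapsto x+he_i$ gives the discrete integration-by-parts identity
\begin{equation*}
 \int_\Omega (D^{+h}_i\phi)\,\eta\ \ud x = -\int_\Omega \phi\,(D^{-h}_i\eta)\ \ud x\ ,
\end{equation*}
no boundary terms arising because $\eta$ has compact support. As $h\to 0+$ the smooth, compactly supported functions $D^{-h}_i\eta$ converge uniformly to $\partial_{x_i}\eta$, so the right-hand side tends to $-\int_\Omega\phi\,\partial_{x_i}\eta\ \ud x$; along the chosen subsequence the left-hand side tends to $\int_\Omega v\,\eta\ \ud x$ by weak convergence on any $\Omega_k$ containing $\supp\eta$. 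Hence $\int_\Omega v\,\eta\ \ud x=-\int_\Omega\phi\,\partial_{x_i}\eta\ \ud x$ for all $\eta\in C_0^\infty(\Omega)$, which is precisely the statement that $v$ is the weak derivative $\phi_{x_i}$. Combined with the bound above, this completes the argument.

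I expect the only genuinely delicate point to be the bookkeeping with compact subdomains: one must ensure the difference quotients are defined where needed and assemble the local weak limits into a single global function, which is handled by the exhaustion and diagonalisation above. The integration-by-parts identification, while the conceptual heart of the proof, is routine once the weak limit is in hand.
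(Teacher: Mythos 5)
Your proof is correct. Note that the paper itself does not prove this lemma at all: it is imported verbatim from the cited reference \cite{Dob05} (see also Lemma 7.24 in \cite{GilTru83}), and your argument --- uniform $L^2$-bounds on difference quotients, weak compactness plus a diagonal argument over an exhaustion by compact subdomains, weak lower semicontinuity of the norm, and identification of the weak limit via the discrete integration-by-parts identity $\int_\Omega (D^{+h}_i\phi)\,\eta\,\ud x = -\int_\Omega \phi\,(D^{-h}_i\eta)\,\ud x$ --- is precisely the standard proof found in those references, with the compact-subdomain bookkeeping handled correctly.
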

Another useful result is the following \cite{Dob05}.
\begin{lemma}
\label{Dobrowolski2}
Let $\Omega$ be a bounded domain. Then
\begin{equation}
 \|D^{\pm h}_{i}\phi\|_{L^{2}(\Omega^\pm_{0})} \leq \|\phi_{x_i}\|_{L^{2}(\Omega)}, 
 \quad \forall \phi \in H^{1}(\Omega),
\end{equation}
where $\Omega^\pm_{0}\subset\Omega$ is the maximal domain on which 
$D^{\pm h}_{i}(\cdot)$ can be defined.
\end{lemma}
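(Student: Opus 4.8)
The plan is to prove the estimate first for smooth functions and then to pass to the limit by density. Since $C^\infty(\Omega)\cap H^1(\Omega)$ is dense in $H^1(\Omega)$ (Meyers--Serrin), and since both sides of the asserted inequality depend continuously on $\phi$ in the $H^1$-norm, it suffices to establish the bound for $\phi\in C^\infty(\Omega)\cap H^1(\Omega)$ and then to approximate a general $\phi\in H^1(\Omega)$. I would treat the positive direction in detail, the negative one being symmetric.

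For such a smooth $\phi$ and for $x\in\Omega_0^+$, the fundamental theorem of calculus along the $i$-th coordinate line gives
\begin{equation*}
 \phi(x+he_i)-\phi(x) = \int_0^h \phi_{x_i}(x+te_i)\,\ud t\ ,
\end{equation*}
which is legitimate precisely because, by the maximality of $\Omega_0^+$, the whole segment $\{x+te_i;\ 0\le t\le h\}$ lies in $\Omega$. Dividing by $h$ and applying the Cauchy--Schwarz inequality (equivalently, Jensen's inequality for the averaging measure $\tfrac1h\,\ud t$) yields the pointwise bound
\begin{equation*}
 |D^{+h}_i\phi(x)|^2 \le \frac{1}{h}\int_0^h |\phi_{x_i}(x+te_i)|^2\,\ud t\ .
\end{equation*}
Integrating over $\Omega_0^+$ and interchanging the order of integration (Fubini--Tonelli, the integrand being non-negative) then gives
\begin{equation*}
 \int_{\Omega_0^+}|D^{+h}_i\phi|^2\,\ud x
 \le \frac{1}{h}\int_0^h\Bigl(\int_{\Omega_0^+}|\phi_{x_i}(x+te_i)|^2\,\ud x\Bigr)\ud t\ .
\end{equation*}

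To finish, I would note that for each fixed $t\in[0,h]$ the translation $x\mapsto x+te_i$ maps $\Omega_0^+$ into $\Omega$, so by translation invariance of Lebesgue measure the inner integral is bounded by $\|\phi_{x_i}\|^2_{L^2(\Omega)}$; the remaining average over $t$ reproduces the same constant, which proves $\|D^{+h}_i\phi\|_{L^2(\Omega_0^+)}\le\|\phi_{x_i}\|_{L^2(\Omega)}$ for smooth $\phi$, and the density step transfers it to all of $H^1(\Omega)$. The argument is essentially routine; the only point requiring genuine care is the book-keeping around $\Omega_0^\pm$, namely ensuring that every translate appearing in the segment integral remains inside $\Omega$. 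This is exactly what is guaranteed by defining $\Omega_0^\pm$ as the maximal subdomain on which $D^{\pm h}_i$ is defined, so this `obstacle' is really just a matter of being precise about the domains rather than a substantive difficulty.
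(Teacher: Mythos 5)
Your proof is correct and is precisely the standard argument --- fundamental theorem of calculus along the $i$-th coordinate line, Cauchy--Schwarz (Jensen), Fubini--Tonelli, and translation invariance --- which is the proof found in the reference \cite{Dob05} that the paper cites for this lemma; the paper itself gives no proof. Your insistence that the whole segment $\{x+te_i;\ 0\le t\le h\}$ lie in $\Omega$ is the correct reading of $\Omega_0^{\pm}$: if only the two endpoints were required to lie in $\Omega$, the inequality could fail on domains that are not convex in the $e_i$-direction (a function nearly constant on two arms of a U-shaped domain has small $\|\phi_{x_i}\|_{L^2(\Omega)}$ but a large difference quotient across the gap), so that piece of book-keeping is indeed where the content of the lemma lies.
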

Some obvious properties of difference quotients include a `product rule',
\begin{equation}
\label{ProdRule}
 D^{\pm h}_i(\phi\,\psi)(x) = (D^{\pm h}_i\phi)(x)\,\psi(x) + 
 \phi(x\pm he_i)\,D^{\pm h}_i\psi(x) \ ,
\end{equation}
and `integration by parts'. In one dimension this takes the form
\begin{equation}
\label{IntPart}
 \int_a^b \bigl(D^{+h}_x\phi(x)\bigr)\,\psi(x)\ \ud x = 
 -\int_a^b \phi(x)\,D^{-h}_x\psi(x)\ \ud x\ ,
\end{equation}
when the support of either $\psi$ or $\phi$ is contained in $[a-h,b-h]$.

For ease of notation we restrict the following discussion to the case of
two particles on an interval; the extension to general graphs will be obvious. 
In the case of an interval the two-particle configuration space is the 
rectangle $D$. In order to effectively cut its corners off we choose suitable 
test functions $\tau\in C^\infty(D)$ and show that there exists a constant 
$K>0$ such that for any $\phi\in \cD(H)$ the estimate
\begin{equation}
\label{FinalEstimate}
  \|\tau D^{h}_{i}\nabla\phi\|_{L^{2}(D)} \leq K
\end{equation}
holds for all $h \leq h_{0}$. This then allows to apply 
Lemma~\ref{Dobrowolski} to eventually conclude that $\phi\in H^2(D)$.

For convenience we state here the first result (Theorem~\ref{TheoremPL(y)1})
that we wish to prove in this appendix.
\begin{theorem}
\label{ThmPL(y)1}
Let $L$ be Lipschitz continuous on $[0,l]$ and let $P$ be of the block-diagonal
form \eqref{Passum}. Assume that the matrix entries of $\tilde P$ are in 
$C^3(0,l)$ and possess extensions of class $C^3$ to some interval
$(-\eta,l+\eta)$, $\eta>0$. Moreover, when 
$y\in [0,\varepsilon_{1}]\cup [l-\varepsilon_{2},l]$, with some 
$\varepsilon_{1},\epsilon_{2} > 0$, suppose that $L(y)=0$ and that $\tilde P(y)$
is diagonal with diagonal entries that are either zero or one. Then the 
quadratic form $Q^{(2)}_{P,L}$ is regular.
\end{theorem}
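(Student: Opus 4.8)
The plan is to establish the estimate \eqref{FinalEstimate} by the difference quotient technique and then invoke Lemma~\ref{Dobrowolski} to conclude that every $\phi\in\cD(H)$ lies in $H^2(D)$. Throughout I work with the weak identity

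\[
 \langle\nabla\phi,\nabla\psi\rangle_{L^2(D)} - \langle\phi_{bv},\Lambda\psi_{bv}\rangle = \langle\chi,\psi\rangle_{L^2(D)},\qquad\psi\in\cD_{Q^{(2)}},
\]

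where $\chi=-\Delta_2\phi\in L^2(D)$ (so $\phi\in\cD_0$), and I localise by a partition of unity subordinate to a cover of $\bar D$ by interior patches, patches meeting the open sides, and patches around the four corners.

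On an interior patch the claim is the classical elliptic estimate: choosing $\psi=D^{-h}_i(\tau^2 D^{h}_i\phi)$ with $\tau$ compactly supported in $D$, the boundary term drops out, \eqref{IntPart} moves $D^{-h}_i$ onto $\nabla\phi$, and \eqref{ProdRule} together with Lemma~\ref{Dobrowolski2} and Young's inequality isolates $\|\tau D^h_i\nabla\phi\|^2_{L^2(D)}$ against $\|\chi\|_{L^2(D)}$ and $\|\nabla\phi\|_{L^2(D)}$, uniformly in $h$ and in $i=1,2$.

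On a patch meeting the interior of one side, say $x=0$, I take $\tau$ supported away from the two corners of that side and difference only tangentially, $i=2$. The admissibility of $\psi=D^{-h}_2(\tau^2 D^h_2\phi)$, i.e.\ $P(y)\psi_{bv}(y)=0$, is the first genuine issue: since $P$ depends on $y$, the tangential difference quotient does not preserve the projector condition exactly. I would remove the $y$-dependence by passing to a smooth orthonormal frame $U(y)$ that diagonalises $\tilde P(y)$ — available because the entries of $\tilde P$ are $C^3$ with $C^3$ extensions, so the rotation angle determined by \eqref{rkPone} is of class $C^3$ — after which the transported condition is constant and is preserved by $D^h_2$ up to commutator terms controlled by $\|U'\|_\infty$ and $\|U''\|_\infty$. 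The boundary contribution $\langle\phi_{bv},\Lambda\psi_{bv}\rangle$ is handled using that $L$ is Lipschitz, so that $D^h_2\Lambda$ is bounded. This yields \eqref{FinalEstimate} for the tangential direction, hence $\phi_{xy},\phi_{yy}\in L^2$ near the side; the missing normal derivative is then recovered algebraically from the equation, $\phi_{xx}=-\chi-\phi_{yy}\in L^2$.

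The main obstacle is the corner patches, where the difference quotient technique fails because no coordinate direction stays tangential to $\partial D$. Here I would use exactly the hypothesis that on $[0,\varepsilon_1]\cup[l-\varepsilon_2,l]$ the block $\tilde P$ is diagonal with entries in $\{0,1\}$ and $L=0$: near each corner the two incident sides carry pure Dirichlet or pure Neumann conditions meeting at a right angle, with no Robin term. This constant, reflection-compatible structure lets me extend $\phi$ across each side by odd reflection at a Dirichlet side and even reflection at a Neumann side; the iterated reflection produces a function on a full neighbourhood of the corner that solves $-\Delta_2(\cdot)=\tilde\chi$ weakly with $\tilde\chi\in L^2$ and no remaining boundary, so the interior estimate of the first step applies and gives $H^2$-regularity up to the corner (the $C^3$ extension hypothesis guarantees that the reflected data match across the sides). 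Assembling the interior, side, and corner estimates through the partition of unity gives $\phi\in H^2(D)$, i.e.\ regularity of $Q^{(2)}_{P,L}$; the extension to a general graph, Theorem~\ref{TheoremPL(y)2}, is then only a matter of notation.
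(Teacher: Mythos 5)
Your proposal contains a genuine gap in the treatment of the side patches, and it comes from a feature of the boundary conditions that your localisation ignores: the conditions encoded by $P(y)$ are \emph{not} local to a single side of the square. The projector $\tilde P(y)$ in \eqref{Passum}, when it has rank one and the form \eqref{rkPone} with $\gamma(y)\neq 0$, acts on the pair $\bigl(\phi(0,y),\phi(l,y)\bigr)^T$ and therefore couples the traces on the two \emph{opposite} vertical sides $x=0$ and $x=l$. If you take a cut-off $\tau$ supported near the side $x=0$ only, the test function $\psi=D^{-h}_2(\tau^2 D^h_2\phi)$ has vanishing trace at $x=l$, and the admissibility condition becomes $\tilde P(y)\bigl(\psi(0,y),0\bigr)^T=0$; for non-diagonal $\tilde P(y)$ this forces $\psi(0,y)=0$, which is not implied by the condition $\tilde P(y)\bigl(\phi(0,y),\phi(l,y)\bigr)^T=0$ satisfied by $\phi$. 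Your frame diagonalisation does not repair this: the unitary $U(y)$ acts on the pair of traces at $x=0$ and $x=l$ simultaneously, so implementing it as a transformation of functions on $D$ (which you need in order to produce honest test functions in $\cD_{Q^{(2)}}$, not just transformed boundary data) necessarily mixes $\phi(x,y)$ with $\phi(l-x,y)$ — a global operation across the strip, incompatible with a partition of unity that separates the two vertical sides. The same remark applies to the horizontal sides.

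The paper's proof avoids this by never localising in the direction normal to the coupled sides: it works on full strips $D'=[0,l]\times[\epsilon,l-\epsilon]$ with $\tau=\tau(y)$ only, and restores admissibility of the double difference quotient by an explicitly constructed correction function $\kappa\in H^1(D)$ (see \eqref{kap_bv}--\eqref{kappadef}); note that \eqref{kappadef} indeed involves both $\phi(x,\cdot)$ and $\phi(l-x,\cdot)$, which is precisely the global mixing your patchwise argument cannot accommodate. The $C^3$ hypothesis is used there to extend the projector entries into the interior and to Taylor-expand away a dangerous factor $h^{-2}$ in the boundary values of $\kappa$, after which $\|\kappa\|_{H^1}$ is bounded uniformly in $h$ and the estimate \eqref{FinalEstimate} follows. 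Your treatment of the corners, on the other hand, is sound and is in fact a nice elementary alternative to the paper's appeal to standard corner-regularity results: near the corners the hypotheses make $\tilde P$ diagonal with entries in $\{0,1\}$ and $L=0$, so the conditions decouple and become pure Dirichlet/Neumann at a right angle, and odd/even reflection reduces matters to interior regularity. If you replace your side patches by full horizontal and vertical strips and carry out the frame (or correction-function) construction globally on each strip, your argument can be completed along the paper's lines.
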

\begin{proof}
We first show regularity on any subdomain of the form 
$D^{'}=[0,l] \times [\epsilon,l-\epsilon]$ with $\epsilon > 0$, leaving 
the discussion of regularity in the corners of the domain $D$ until the end.
Our first tool is the double difference quotient 
\begin{equation}
\label{diffquo}
\begin{split}
 D^{-h}_{y}\tau^{2}D^{+h}_{y}\phi(x,y)= 
  &\frac{1}{h^{2}}\bigl(\tau^{2}(y)\phi(x,y+h)-\tau^{2}(y)\phi(x,y)\\ 
  &-\tau^{2}(y-h)\phi(x,y)+\tau^{2}(y-h)\phi(x,y-h)\bigr)\ ,
\end{split}
\end{equation}
where $\phi\in\cD(H)\subset\cD_{Q^{(2)}}$ and $\tau\in C^{\infty}_{0}(\rz)$ is a 
test function with support in $(0,l)$ such that $\tau|_{[\epsilon,l-\epsilon]}=1$
and $\tau\leq 1$ elsewhere. 
Even though $\phi$ satisfies the boundary condition $P(y)\phi_{bv}(y)=0$,
\eqref{diffquo} does, in general not. This is due to the dependence of the 
matrix $P$ on $y$. Therefore, we introduce a correction function 
$\kappa\in H^1(D)$ such that
\begin{equation}
\label{Pcoordinate1}
 D^{-h}_{y}\tau^{2}D^{+h}_{y}\phi + \kappa \in \cD_{Q^{(2)}} \ .
\end{equation}
We now determine and estimate $\kappa$ and, to this end, insert 
\eqref{Pcoordinate1} for $\psi$ into \eqref{abstractBVP}, 
\begin{equation}
\label{FormInsert}
\begin{split}
 \langle\nabla\phi,
  &\nabla(D^{-h}_{y}\tau^{2}D^{+h}_{y}\phi)\rangle_{L^2(D)}+ 
    \langle\nabla\phi,\nabla\kappa\rangle_{L^2(D)} - \langle \phi_{bv},\Lambda
    \bigl(D^{-h}_{y}\tau^{2}D^{+h}_{y}\phi\bigr)_{bv}\rangle_{L^2(0,l)\otimes\kz^4}\\
  &-\langle\phi_{bv},\Lambda\kappa_{bv}\rangle_{L^2(0,l)\otimes\kz^4} 
     =\langle \chi,D^{-h}_{y}\tau^{2}D^{+h}_{y}\phi\rangle_{L^2(D)}+\langle \chi,
     \kappa\rangle_{L^2(D)} \ .
\end{split}
\end{equation}
Employing an integration by parts \eqref{IntPart}, while taking into account 
that $\tau$ is compactly supported in $y$ and choosing $h$ to be sufficiently 
small, the first term of \eqref{FormInsert} can be re-written as
\begin{equation}
\begin{split}
 \int_{0}^{l}\int_{0}^{l}\nabla\bar{\phi}\nabla \bigl(D^{-h}_{y}\tau^{2}D^{+h}_{y}
 \phi\bigr) \ \ud x\,\ud y =
 &-\int_{0}^{l}\int_{0}^{l}\tau^{2}|D^{+h}_{y}\nabla\phi|^{2}\ \ud x\,\ud y \\
 &-\int_{0}^{l}\int_{0}^{l}(\nabla D^{+h}_{y}\bar{\phi})(\partial_y\tau^{2})
  (D^{+h}_{y}\phi)\ \ud x\,\ud y \ .
\end{split}
\end{equation}
Hence, \eqref{FormInsert} yields
\begin{equation}
\begin{split}
 \| \tau D^{+h}_{y}\nabla\phi \|_{L^2(D)}^2 =
  &-\langle \nabla D^{+h}_{y}\phi,\partial_y(\tau^{2})D^{+h}_{y}\phi\rangle_{L^2(D)}
    +\langle\nabla\phi,\nabla\kappa\rangle_{L^2(D)} \\
  &- \langle \phi_{bv},\Lambda\bigl(D^{-h}_{y}\tau^{2}D^{+h}_{y}\phi\bigr)_{bv}
     \rangle_{L^2(0,l)\otimes\kz^4} - \langle\phi_{bv},\Lambda\kappa_{bv}
     \rangle_{L^2(0,l)\otimes\kz^4} \\
  &- \langle \chi,D^{-h}_{y}\tau^{2}D^{+h}_{y}\phi\rangle_{L^2(D)} - \langle \chi,
     \kappa\rangle_{L^2(D)} \ ,
\end{split}
\end{equation}
which allows the estimate
\begin{equation}
\label{EstimatesForm2}
\begin{split}
 \| \tau D^{+h}_{y}\nabla\phi \|_{L^2(D)}^2 \leq
  &\ \|\tau D^{+h}_{y}\nabla\phi\|_{L^{2}(D)} \,  \|2(\partial_y\tau)D^{+h}_{y}
     \phi\|_{L^{2}(D)}  +\|\nabla\phi\|_{L^2(D)}\,\|\nabla\kappa\|_{L^2(D)} \\
  &+ \bigl|\langle \phi_{bv},\Lambda\bigl(D^{-h}_{y}\tau^{2}D^{+h}_{y}\phi\bigr)_{bv}
     \rangle_{L^2(0,l)\otimes\kz^4}\bigr| + \bigl|\langle\phi_{bv},\Lambda\kappa_{bv}
     \rangle_{L^2(0,l)\otimes\kz^4}\bigr| \\
  &+ \|\chi\|_{L^2(D)}\,\|D^{-h}_{y}\tau^{2}D^{+h}_{y}\phi\|_{L^2(D)}+\|\chi\|_{L^2(D)}
     \,\|\kappa\|_{L^2(D)} \ .
\end{split}
\end{equation}
We now use \eqref{trThm}, following from the trace theorem, to conclude that
\begin{equation}
 \bigl|\langle\phi_{bv},\Lambda\kappa_{bv}\rangle_{L^2(0,l)\otimes\kz^4}\bigr|
 \leq C\,\|\phi\|_{H^{1}(D)} \, \|\kappa\|_{H^{1}(D)}\ ,
\end{equation}
where the constant $C>0$ incorporates the constant $c$ from \eqref{trThm}
as well as the norm of the bounded map $\Lambda$. Furthermore, using the 
Cauchy-inequality
\begin{equation}
\label{Cauchieps}
 |ab| < \epsilon a^{2} + \frac{b^{2}}{4\epsilon}\ , \qquad\forall a,b\in\rz\ ,\ 
 \epsilon > 0 \ ,
\end{equation}
in the first, second and fourth term on the right-hand side of 
\eqref{EstimatesForm2} we arrive at
\begin{equation}
\label{EstimatesFormNeu1}
\begin{split}
 \| \tau D^{+h}_{y}\nabla\phi \|_{L^2(D)}^2
  &\leq c_{1}(\epsilon_{1}) + \epsilon_{1}\,\|\tau\nabla D^{+h}_{y}\phi
    \|^{2}_{L^{2}(D)} + c_{2}(\epsilon_{2}) + \epsilon_{2}\,\|\nabla\kappa
    \|^{2}_{L^{2}(D)} \\
  &\quad + \bigl|\langle \phi_{bv},\Lambda\bigl(D^{-h}_{y}\tau^{2}D^{+h}_{y}\phi
    \bigr)_{bv}\rangle_{L^2(0,l)\otimes\kz^4}\bigr| \\
  &\quad + c_{3}(\epsilon_{3})+\epsilon_{3} \, \bigl(\|\kappa\|^{2}_{L^{2}(D)}+
     \|\nabla\kappa\|^{2}_{L^{2}(D)}\bigr)\\
  &\quad + \|\chi\|_{L^{2}(D)} \, \|D^{-h}_{y}\tau^{2}D^{+h}_{y}\phi
     \|_{L^{2}(D)} + \|\chi\|_{L^{2}(D)} \, \|\kappa\|_{L^{2}(D)} \ .
\end{split}
\end{equation}
Here we kept all terms containing the still unknown function $\kappa$ or
difference quotients of $\nabla\phi$ explicitly, as these are the quantities 
we want to estimate; all other terms are absorbed in the quantities 
$c_j(\epsilon_j)$. 

In order to estimate the fourth, the seventh and the last term on the 
right-hand side of \eqref{EstimatesFormNeu1} we need to determine a suitable 
function $\kappa$ and, in particular, show that the bounds
\begin{equation}
\label{kappabound}
 \|\kappa\|_{L^{2}(D)} \leq K_{1} \quad\text{and}\quad 
 \|\nabla\kappa\|^{2}_{L^{2}(D)} \leq K_2 +K_3\, \|\tau D^{+h}_{y}\nabla\phi
 \|^{2}_{L^{2}(D)}
\end{equation}
hold, where $K_j > 0$ are some constants not depending on $h$. 

In order to characterise $\kappa$, we infer from \eqref{Pcoordinate1} that its 
boundary values have to be such that 
\begin{equation}
\label{kap_bv}
 P(y)\bigl( (D^{-h}_{y}\tau^{2}D^{+h}_{y}\phi)_{bv}(y)+\kappa_{bv}(y) \bigr) = 0\ .
\end{equation}
Expanding the double difference quotient and using $P(y)\phi_{bv}(y)=0$, we 
obtain a condition of which the upper two components read
\begin{equation}
\label{kappa_bv}
 \tilde P(y)\bigl(\tau^{2}(y)\tilde\phi_{bv}(y+h)+\tau^{2}(y-h)
 \tilde\phi_{bv}(y-h)\bigr)+h^2 \tilde P(y)\tilde\kappa_{bv}(y) = 0\ ;
\end{equation}
here we employed the notation $\tilde\phi_{bv}(y)=(\phi(0,y),\phi(l,y))^T$ 
(i.e., a non-rescaled analogue of \eqref{graphbvB}) as well as the 
block-structure \eqref{2block} of $P$. Since $\tilde P(y)$ is a projector, 
this condition is solved by
\begin{equation}
\label{kappa_bv1}
 \tilde\kappa_{bv}(y) = -\frac{1}{h^2}\tilde P(y)\bigl(\tau^{2}(y)
 \tilde\phi_{bv}(y+h)+\tau^{2}(y-h)\tilde\phi_{bv}(y-h)\bigr) \ .
\end{equation}
This, however, only yields the boundary values of the function we wish to find.
Moreover, the negative power of $h$ would inhibit the envisaged bound 
\eqref{FinalEstimate}.

An extension of $\kappa_{bv}$ into the interior of the rectangle $D$ can be 
achieved by making use of the particular structure \eqref{Passum} required 
for the projectors $P(y)$ as well as the assumed regularity of its matrix
entries. This allows us to find functions $a,b\in C^3(D_0)$, where $D_0$ is
an open domain containing $\bar D$, and define
\begin{equation}
 \tilde\cP(x,y) := \begin{pmatrix} a(x,y) & b(x,y) \\ b(l-x,y) & a(l-x,y)  
 \end{pmatrix}\ ,
\end{equation}
in such a way that $\tilde P(y)=\tilde\cP(0,y)$. We noted in 
Section~\ref{sec2} that when $\rank\tilde P\in\{0,2\}$, the only options for
$\tilde P(y)$ are zero or $\eins_2$. When $\rank\tilde P=0$ we hence
can choose $a(x,y)=0=b(x,y)$ for all $(x,y)\in D_0$, and when $\rank\tilde P=2$
a corresponding choice would be $a(x,y)=1$ and $b(x,y)=0$ for all 
$(x,y)\in D_0$. If $\rank\tilde P=1$ and $\tilde P(y)$ is of the form
\eqref{rkPone} we have to pick a function $a\in C^3(D_0)$ that interpolates
between $\beta(y)$ at $x=0$ and $1-\beta(y)$ at $x=l$, and a function 
$b\in C^3(D_0)$ that interpolates between $\bar\gamma(y)$ at $x=0$ and 
$\gamma(y)$ at $x=l$.

Due to the required regularity of the respective functions the following 
Taylor expansions,
\begin{equation}
 \tilde\cP(x,y\pm h) = \tilde\cP(x,y) \pm h\,\tilde\cP_y(x,y) + 
                        h^2\,\tilde\cP^\pm_{R_2}(x,y;h) \ ,
\end{equation}
and
\begin{equation}
 \tau^2 (y-h) = \tau^2 (y) - h\,\tau^2_{R_1}(y;h)\ ,
\end{equation}
hold with remainder terms that are of class $C^1$ in the variable $y$ and
are bounded in $h$. Using these expansions in \eqref{kappa_bv1} yields
\begin{equation}
\label{kappa_bv2}
\begin{split}
 \tilde\kappa_{bv}(y) 
  &= \tau^2 (y)\,\tilde P_y (y)\Bigl( \frac{\tilde\phi_{bv}(y+h)-
     \tilde\phi_{bv}(y-h)}{h}\Bigr) + \tau^2_{R_1}(y;h) \tilde P_y (y)
     \tilde\phi_{bv}(y-h) \\
  &\quad + \tau^{2}(y)\tilde P^+_{R_2}(y;h)\tilde\phi_{bv}(y+h)+\tau^{2}(y-h)
     \tilde P^-_{R_2}(y;h)\tilde\phi_{bv}(y-h)\ .
\end{split}
\end{equation}
We then define the function
\begin{equation}
\label{kappadef}
\begin{split}
 \kappa(x,y) 
  &:= \tau^2 (y)\Bigl( a_y(x,y)\,\frac{\phi(x,y+h)-\phi(x,y-h)}{h} \\
  &\qquad\qquad\ + b_y(x,y)\,
      \frac{\phi(l-x,y+h)-\phi(l-x,y-h)}{h}\Bigr) \\
  &\quad + \tau^2_{R_1}(y;h)\bigr( a_y(x,y)\phi(x,y-h) + b_y(x,y)\phi(l-x,y-h)
      \bigl)\\
  &\quad + \tau^{2}(y)\bigl( a^+_{R_2}(x,y;h)\phi(x,y+h)+b^+_{R_2}(x,y;h)\phi(l-x,y+h)
      \bigl) \\
  &\quad + \tau^{2}(y-h) \bigl( a^-_{R_2}(x,y;h)\phi(x,y-h)+b^-_{R_2}(x,y;h)
      \phi(l-x,y-h)\bigl)\ ,
\end{split}
\end{equation}
whose boundary values indeed satisfy \eqref{kap_bv}.
The regularity of the functions involved implies that $\kappa\in H^1(D)$ and,
thus, $\|\kappa\|_{L^2(D)}$ and $\|\kappa\|_{H^1(D)}$ are finite. Moreover,
since $\phi\in H^1(D)$ and
\begin{equation}
\label{phidiff}
 \frac{\phi(x,y+h)-\phi(x,y-h)}{h} = D^{+h}_y\phi (x,y) +  D^{-h}_y\phi (x,y)\ ,
\end{equation}
Theorem~\ref{Dobrowolski2} implies that $\|\kappa\|_{L^2(D)}$ has an 
$h$-independent upper bound. In the same way the second bound in 
\eqref{kappabound} follows from \eqref{kappadef} and \eqref{phidiff}.

Next we estimate the fifth term on the right-hand side of 
\eqref{EstimatesFormNeu1}. We use the self-adjointness of $\Lambda$ and
perform an integration by parts \eqref{IntPart} as well as employing the 
product rule \eqref{ProdRule} to obtain
\begin{equation}
\begin{split}
 \langle \phi_{bv},\Lambda\bigl(D^{-h}_{y}\tau^{2}D^{+h}_{y}\phi\bigr)_{bv}
 \rangle_{L^2(0,l)\otimes\kz^4} 
 &= -\int_0^l \langle L(y+h)(\tau D^{+h}_{y}\phi_{bv})(y),(\tau D^{+h}_{y}
      \phi_{bv})(y)\rangle_{\kz^4}\ \ud y \\
 &\quad -\int_0^l \langle (\tau (D^{+h}_{y}L)\phi_{bv})(y),(\tau D^{+h}_{y}
     \phi_{bv})(y)\rangle_{\kz^4}\ \ud y \ .
\end{split}
\end{equation}
Noting that $L$ is supposed to be bounded and Lipschitz continuous, the
right hand side can be estimated from above in absolute value by
\begin{equation}
 d_1\,\|\tau D^{+h}_{y}\phi_{bv}\|^{2}_{L^{2}(0,l)\otimes\kz^{4}} + d_2\,\|\phi_{bv}
 \|_{L^{2}(0,l)\otimes\kz^{4}}\,\|\tau D^{+h}_{y}\phi_{bv}\|_{L^{2}(0,l)\otimes\kz^{4}}\ ,
\end{equation}
with suitable constants $d_j >0$. Estimating further, we apply \eqref{normeqI}
to the first term and \eqref{trThm} to the second and obtain the bound
\begin{equation}
 4\,d_1\,\left(\frac{2}{\epsilon_4}\|\tau D^{+h}_{y}\phi\|^{2}_{L^{2}(D)} +
 \epsilon_4\|\nabla(\tau D^{+h}_{y}\phi)\|^{2}_{L^{2}(D)}\right) + d_3\,
 \|\phi_{bv}\|_{L^{2}(0,l)\otimes\kz^{4}}\,\|\tau D^{+h}_{y}\phi\|_{H^1(D)}\ ,
\end{equation}
where $\epsilon_4>0$ is sufficiently small. Eventually, again using \eqref{normeqI},
this can be further bounded by
\begin{equation}
 d_4 + d_5 \,\epsilon_5\, \|\tau D^{+h}_{y}\nabla\phi\|^{2}_{L^{2}(D)}\ ,\quad
 \text{where}\ d_j >0\ .
\end{equation}

It remains to estimate the last-but-one term on the right-hand side of
\eqref{EstimatesFormNeu1},
\begin{equation}
 \|D^{-h}_{y}\tau^{2}D^{+h}_{y}\phi\|_{L^{2}(D)} \leq
 \|\partial_y(\tau^{2}D^{+h}_{y}\phi)\|_{L^{2}(D)} \leq d_6  + d_7\,
 \|\tau^{2}D^{+h}_{y}\nabla\phi\|_{L^{2}(D)} \ .
\end{equation}
The last term on the right-hand side can be estimated with the help of 
\eqref{Cauchieps}, and using that $\tau^2\leq 1$,
\begin{equation}
 \|D^{-h}_{y}\tau^{2}D^{+h}_{y}\phi\|_{L^{2}(D)} \leq
 c_6(\epsilon_6) + d_8\,\epsilon_6 \|\tau D^{+h}_{y}\nabla\phi\|_{L^{2}(D)}^2 \ .
\end{equation}

We now collect all bounds for the terms on the right-hand side of 
\eqref{EstimatesFormNeu1} and subtract all contributions of the form
$\epsilon_j \|\tau D^{+h}_{y}\nabla\phi\|_{L^{2}(D)}^2$. By choosing
$\epsilon_{1},\dots,\epsilon_{6}$ sufficiently small we finally obtain the 
bound \eqref{FinalEstimate}. By applying Lemma~\ref{Dobrowolski} to
$\nabla\phi$ on the domain $D^{'}=[0,l] \times [\epsilon,l-\epsilon]$
we conclude that $\phi_{xy}$ and $\phi_{yy}$ are in $H^2(D')$. Since
$\phi_{xx}+\phi_{yy}=\Delta_2\phi$ is known to be in $L^2(D)$ we conclude that
$\phi\in H^2(D')$. The same argument can now be repeated on a domain
$D^{''}=[\epsilon,l-\epsilon] \times [0,l]$ so that, indeed, $\phi$
has $H^2$-regularity away from small neighbourhoods of the corners of
the rectangle $D$.

As the condition imposed on $P$ implies that close to the corners either
Dirichlet or Neumann or mixed Dirichlet-Neumann boundary conditions are 
imposed, regularity of $\phi$ in neighbourhoods of the corners follows from 
standard results (see, e.g., \cite{Nec67,Dau88}).
\end{proof}
\end{appendix}

\vspace*{1cm}

{\small
\bibliographystyle{amsalpha}
\bibliography{literatur}}
\newcommand{\etalchar}[1]{$^{#1}$}
\def\cprime{$'$} \def\polhk#1{\setbox0=\hbox{#1}{\ooalign{\hidewidth
  \lower1.5ex\hbox{`}\hidewidth\crcr\unhbox0}}}
\providecommand{\bysame}{\leavevmode\hbox to3em{\hrulefill}\thinspace}
\providecommand{\MR}{\relax\ifhmode\unskip\space\fi MR }
\providecommand{\MRhref}[2]{%
  \href{http://www.ams.org/mathscinet-getitem?mr=#1}{#2}
}
\providecommand{\href}[2]{#2}

\end{document}